\documentclass[11pt]{article}

\usepackage[margin=1in]{geometry}
\usepackage{amsmath, amssymb, amsthm}
\usepackage{graphicx}
\usepackage{hyperref}
\usepackage{cleveref}
\usepackage{enumitem} 
\usepackage{xcolor} 
\usepackage{mathtools}
\usepackage{placeins}

\usepackage[sort&compress]{natbib}
\bibpunct[, ]{[}{]}{,}{n}{}{,}%

\let\cite\citet

\definecolor{myblue}{HTML}{1E6FFF} 

\newif\ifshowchanges
\showchangestrue        

\ifshowchanges
  
  \newenvironment{blueblock}{\begingroup\color{myblue}}{\endgroup}
\else

\fi

\theoremstyle{plain}
\newtheorem{thm}{Theorem}[section]

\newtheorem{lem}[thm]{Lemma}
\newtheorem{cor}[thm]{Corollary}

\theoremstyle{remark}
\newtheorem*{rem*}{Remark}

\newtheoremstyle{boldtitle}%
  {}{}
  {\itshape}
  {}
  {\bfseries}
  {}
  {\newline}
  {}

\theoremstyle{boldtitle}
\newtheorem*{innerfixedthm}{\thmtitle}

\usepackage{enumitem}

\newlist{assumplist}{enumerate}{1}
\setlist[assumplist]{%
  label=\textup{(A\arabic*)}, 
  ref=A\arabic*,
  before=\itshape,           
  after=\normalfont,         
  leftmargin=*,
  align=left,
  labelsep=.5em,
  topsep=.6\baselineskip,
  itemsep=.25\baselineskip,
  parsep=0pt,
  partopsep=0pt
}

\newlist{assumplistB}{enumerate}{1}
\setlist[assumplistB]{%
  label=\textup{(B\arabic*)}, 
  ref=B\arabic*,
  before=\itshape,           
  after=\normalfont,         
  leftmargin=*,
  align=left,
  labelsep=.5em,
  topsep=.6\baselineskip,
  itemsep=.25\baselineskip,
  parsep=0pt,
  partopsep=0pt
}

\newtheorem*{innerfixedlem}{\lemtitle}

\newtheoremstyle{fullbold}%
  {3pt}   
  {3pt}   
  {\normalfont} 
  {}      
  {\bfseries} 
  {.}     
  {0.5em} 
  {\thmname{#1}~\thmnumber{#2}} 

\newcommand{\ttheta}{\ensuremath{\tilde{\theta}^{*}}}  
\newcommand{\rtheta}{\ensuremath{\theta^{*}}}          
\newcommand{\hb}[1][u]{\ensuremath{h_{\beta}(#1)+\tilde{\theta}^{*}}} 

\newcommand{\KL}{\text{KL}}

\newtheoremstyle{nonbolditalichead}
  {\topsep}    
  {\topsep}    
  {\normalfont}  
  {}           
  {\itshape}   
  {.}          
  { }          
  {\thmname{#1}\thmnumber{ #2}} 

\theoremstyle{nonbolditalichead}
\newtheorem*{remark}{Remark}

\title{\Large \textbf{Variance Reduction Methods for Dirichlet Expectations}}
\author{
Ayeong Lee \\[6pt]
Graduate School of Business \\ 
Columbia University, New York, USA \\ 
ayeong.lee@columbia.edu
}
\date{\today}

\begin{document}
\maketitle

\begin{abstract}

Dirichlet distributions are probability measures on the unit simplex. They are often used as prior distributions in modeling categorical data, such as in topic analysis of text data. Motivated by this application, we consider Monte Carlo estimation of expectations $\mathbb{E}[\exp(nH(\theta))]$, where $\theta$ has a Dirichlet distribution, $H$ is a real-valued function, and $n$ is a parameter. We develop variance reduction techniques particularly designed to work well for large $n$. Our analysis is guided by the Laplace method for approximating integrals, which we extend to fit our problem setting. We develop an importance sampling method that achieves a near-optimal asymptotic relative error. We use related ideas to select a provably effective control variate. We illustrate these results through their application in topic analysis.

\end{abstract}

\bigskip
\noindent
\textbf{Key words :} Monte Carlo Methods, Stochastic Simulation, Importance Sampling, Control Variate, Convex Optimization

\bigskip
\hrule
\bigskip

\section{Introduction}

This paper proposes and analyzes variance reduction techniques for Monte Carlo estimation of expectations with respect to Dirichlet distributions. Dirichlet distributions are probability measures on the unit simplex and can therefore be interpreted as distributions over random probability vectors. They are often used in Bayesian settings to model prior distributions over such vectors.
The Dirichlet family can generate a wide variety of shapes, including distributions concentrated near vertices or faces of the simplex and the uniform distribution over the simplex, and thus offers a great deal of modeling flexibility.

We focus on expectations of the form $\mathbb{E}[\exp(nH(\theta))]$, in which 
$\theta$ has a Dirichlet distribution, $H$ is a real-valued function on the simplex, and $n\in\mathbb{N}$. A plain Monte Carlo estimator would average values of $\exp(nH(\theta))$ over independent draws of $\theta$. For large $n$, the standard deviation of this estimator will typically be large relative to its mean. We develop variance reduction techniques 
--- an importance sampling method and a control variate method ---
specifically designed to be effective for large $n$.

Our investigation is motivated in part by an application to Latent Dirichlet Allocation (LDA), introduced in \cite{blei2003latent}. LDA is a widely used Bayesian model for discovering topics in a collection of documents. The topics are latent in the sense that they are not directly observed or imposed; they are inferred from the co-occurrence of words in documents. A topic is represented by a probability distribution over a vocabulary with a Dirichlet prior. Each document is represented by a probability distribution over the set of topics, also with a Dirichlet prior.
Evaluating a topic model entails evaluating the \emph{model evidence} on held-out documents. This takes the form $\mathbb{E}[\exp(nH(\theta))]$, with $H$ a log-likelihood, $\theta$ a Dirichlet vector, and $n$ representing the number of words in the document, which is often large.

Our analysis of the large-$n$ setting is guided by the Laplace method for approximating integrals. In its classical form, the Laplace method yields
an asymptotic relation of the form
\begin{equation}
I(n):=\int_D e^{nH(\theta)} g(\theta)d\theta
\sim C n^{-k} e^{nH(\theta^*)},
\label{claplace}
\end{equation}
where $C>0$ is a constant, $\theta^*$ is the global maximizer of $H$ on $D \subset \mathbb{R}^d$, 
$H$ is concave near $\theta^*$,
and the relation $\sim$ indicates that the ratio of the two sides approaches one as $n\to\infty$. In the simplest case, $k=d/2$.
We defer precise conditions to later sections, but we note that our setting will require two extensions of (\ref{claplace}):
one to handle the case where $\theta^*$ lies on a face of the simplex (changing the value of $k$), and a further extension to allow terms with two different powers of $n$ in the exponent. These extensions are of independent interest beyond our specific application.

The approximation (\ref{claplace}) indicates that the value of the integral is mainly determined by values of $H$ near the maximizer $\theta^*$. 
Applied to $\mathbb{E}[\exp(nH(\theta))]$, this suggests that we may be able to reduce variance by concentrating more of our sampling near $\theta^*$.
This insight drives our importance sampling (IS) strategy. Rather than sample $\theta$ from its original distribution, we sample it from a different Dirichlet distribution that shifts more mass close to $\theta^*$; we correct for the change of probability measure by weighting each sample by the ratio of the original and new Dirichlet densities --- the usual likelihood ratio for importance sampling.

We let the IS distribution depend on the parameter $n$. If the original Dirichlet distribution has (vector) parameter $\alpha$, our IS distribution has parameter $\alpha + n^{\gamma}\theta^*$, for any $\gamma\in(0,1)$. This choice makes the IS distribution increasingly concentrated near $\theta^*$ as $n$ increases.
We analyze the performance of our estimator through an extension of the classical Laplace method. We say that an estimator of $I(n)$ has \emph{bounded relative error} if the ratio of its root mean squared error to $I(n)$ remains bounded as $n$ grows; this is essentially the best possible performance of any Monte Carlo estimator.
We show that, under appropriate conditions on $H$, our estimator has relative error that is $\Theta(n^{(1-\gamma)c})$, with the constant $c>0$ explicitly given, while the plain MC estimator has relative error of $\Theta(n^c)$. Thus, the performance of IS can be brought arbitrarily close to bounded relative error by choosing $\gamma<1$ close to 1. The need for the restriction to $\gamma<1$ will become clear from our extension of the Laplace method to handle the second moment of the IS estimator.
The technical conditions needed for these results are all satisfied in the LDA application. 

The likelihood ratio for our IS estimator involves the Kullback-Leibler (KL) divergence between pairs of vectors on the unit simplex. This property drives much of the analysis of our IS estimator. 
It also suggests a convenient KL-based control variate (CV). CV and IS estimators each have advantages. Whereas our IS estimator achieves near-optimal performance for large $n$, our CV estimator guarantees variance reduction for all $n$.

The degree of variance reduction achieved through any CV is determined by the squared correlation between the CV and the quantity of interest. Through the Laplace method, we derive an explicit expression for the limiting squared correlation. This expression is close to 1 (indicating large variance reduction) when the Hessians of $H(\theta^*)$ and the KL divergence are close. In the LDA setting, we show that this closeness condition aligns well with the sparsity structure of LDA: sparsity results from the fact that different topics put most of their weight on different sets of words. We prove a lower bound on the degree of variance reduction achieved based on a measure of the model's sparsity.

Laplace approximations have been used often in Bayesian statistics; see, for example,
\cite{tierney1986accurate}, \cite{kass1995bayes},  and the many references in 
\cite{kasprzak2025good}.
We are not aware of their use with LDA, so our Theorem \ref{thm:boundary-laplace_standard} may be useful as an approximation quite apart from providing a tool for analyzing Monte Carlo estimators. \cite{hennig2012kernel} use a Laplace approximation in developing an alternative to LDA based on Gaussian processes, which has a very different structure. \cite{wallach2009evaluation} run computational comparisons of various simulation methods for LDA evaluation metrics. 
They mainly focus on methods that apply Gibbs sampling to make topic assignments on held-out documents and are thus not directly comparable to our setting. They do not provide any theoretical analysis of the methods they test.

There is an extensive literature applying large deviations techniques to design IS procedures for rare-event simulation; see, among many others,
\cite{siegmund1976importance},
\cite{sadowsky2002large},
\cite{dupuis2004importance},
\cite{asmussen2007stochastic},
\cite{blanchet2012state}, and references there.
Like many rare-event estimators, our IS method applies an exponential change of measure, exploiting the exponential-family structure of Dirichlet distributions. But our setting differs from the rare-event simulation literature in several respects. The first, of course, is that we estimate expectations rather than probabilities, but large deviations ideas have also been used for expectations in, e.g., 
\cite{ghs},
\cite{guasoni2008optimal},
and
\cite{setayeshgar2026importance}.
We discuss two more important features: (1) choice of domain and
(2) sub-exponential convergence rates.

\emph{(1) Choice of domain.} The fact that we work on the unit simplex has important implications for our analysis, particularly when the maximizing $\theta^*$ falls on a face of the simplex. This often happens in LDA, where $\theta^*_i=0$ means that topic $i$ is not present in a held-out document, under the maximum likelihood 
topic distribution for that document. We need to prove an extension of the Laplace method to handle this case. In this extension  (Theorem \ref{thm:boundary-laplace_standard}), the power $k$ in (\ref{claplace}) depends on the number of coordinates of $\theta^*$ equal to zero and on the corresponding Dirichlet parameters for these coordinates. 
We have not seen similar asymptotic behavior in other settings. Moreover, a Dirichlet density can take the values zero and infinity on a face, so our IS estimator truncates certain faces. This introduces a bias, which we show is negligible for large $n$ and does not affect the rate of decrease of the estimator's mean squared error.

\emph{(2) Sub-exponential rates.} Large deviations analysis is ordinarily concerned with measuring exponential rates of decay. But this perspective is too coarse for our setting. Indeed, even using plain Monte Carlo, we see in (\ref{claplace}) that the second moment $I(2n)$ achieves the same (optimal) exponential rate as $I(n)^2$. Achieving asymptotic improvements relies on improving the polynomial factor, which is why understanding how $\theta^*$ determines $k$ is fundamental to our IS results. This point is also relevant to a distinction in terminology:
the Laplace \emph{principle}, as the term is used in large deviations theory 
(especially \cite{dupuis2011weak}), refers to identifying the exponential rate in expressions like (\ref{claplace}), whereas the Laplace \emph{method} or \emph{approximation} refers to the full asymptotics in (\ref{claplace}).

Some preliminary results on IS for LDA were reported in \cite{glee}. This paper goes beyond that one in three important respects.
\begin{itemize}
\item \cite{glee} considered only the ``interior'' case in which all coordinates of $\theta^*$ are strictly positive. A focus of this paper is handling the boundary case, which requires proving a new version of the Laplace method (Theorem \ref{thm:boundary-laplace_standard}).
\item They considered only the case $\gamma = 1/2$ for IS, which is easier to analyze but fails to achieve near-optimality. This paper covers all $\gamma\in(0,1)$, which requires a further extension of the Laplace method (Theorem~\ref{thm:boundary-laplace-with-kl}).
\item They did not consider control variates.
\end{itemize}

The rest of the paper is organized as follows: Section \ref{sec:preliminaries} formulates our problem precisely. Section \ref{sec: Laplace Method} reviews the classical Laplace approximation and states our extension of the approximation for expectations on the unit simplex, with particular attention to the case of a boundary optimizer. Section \ref{sec:importance_sampling} develops our IS estimator and analyzes its asymptotic error reduction.
Section \ref{sec:control_variate}  introduces and analyzes our control variate estimator. Section \ref{sec:numerical} reports numerical experiments, and Section \ref{sec:conclusion} concludes. Proofs are deferred to the appendix, and some additional details are provided in the Supplementary Material.

\section{Preliminaries}
\label{sec:preliminaries}
\subsection{Definitions}
\label{subsec: definitions}

The $(K-1)$-simplex is defined as
\begin{equation}
\label{eq:simplex}
\Delta_{K-1}
=
\left\{
\theta \in \mathbb{R}^K
\,\middle|\,
\mathbf{1}^\top \theta = 1,\;
\theta_i \ge 0,\ i=1,\dots,K
\right\}
\end{equation}
where $\mathbf{1}$ is the column vector of ones in $\mathbb{R}^{K}$. The simplex $\Delta_{K-1}$ is a compact convex polytope of dimension $K-1$.
Each $\theta \in \Delta_{K-1}$ can be identified with a discrete
probability vector on a $K$-point space.
For a parameter vector $\alpha=(\alpha_1,\dots,\alpha_K)\in\mathbb{R}_{++}^K$,
the Dirichlet distribution with parameter $\alpha$ is a probability measure
on $\Delta_{K-1}$.
The density is commonly written as
\begin{equation}
\label{eq:Dirichlet}
\mathrm{Dir}_\alpha(\theta)
=
\frac{1}{B(\alpha)}
\prod_{i=1}^K \theta_i^{\alpha_i-1},
\qquad \theta \in \Delta_{K-1},
\end{equation}
where
\(
B(\alpha)
=
{\prod_{i=1}^K \Gamma(\alpha_i)}/
{\Gamma(\sum_{i=1}^K \alpha_i)}
\)
is the multivariate Beta function and $\Gamma(\alpha_i)$ the Gamma function. 
The parameters $\alpha$ dictate the concentration of the Dirichlet distribution. If $\alpha_i < 1$, the density is singular along the face $\{\theta_i = 0\}$ and mass is concentrated near that face. If $\alpha_i > 1$, the density vanishes at $\{\theta_i = 0\}$,
and mass is concentrated away from that face and more towards the interior. If $\alpha_i=1$ for all $i$, then it becomes a uniform distribution on the simplex.

For analytical convenience, we also work with a coordinate representation of
the Dirichlet density.
Because \(
\theta_K = 1-\sum_{i=1}^{K-1}\theta_i
\), a point in $\Delta_{K-1}$ is determined by its first $K-1$ components, so
each point in the simplex $\Delta_{K-1}$ is identified with a point in  the projected simplex
\begin{equation}
\label{eq:projected_simplex}
\tilde{\Delta}_{K-1}
=
\left\{
y \in \mathbb{R}^{K-1}
:
y_i \ge 0,\;
\sum_{i=1}^{K-1} y_i \le 1
\right\}.
\end{equation}
In these coordinates, the Dirichlet distribution has the Lebesgue density
\begin{equation}
\label{eq:Dirichlet_alternative}
\mathrm{Dir}_{\alpha}^{(K-1)}(y)
=
\frac{1}{B(\alpha)}
\prod_{i=1}^{K-1}y_i^{\alpha_i-1}
\Bigl(1-\sum_{i=1}^{K-1}y_i\Bigr)^{\alpha_K-1},
\qquad
y \in \tilde{\Delta}_{K-1}.
\end{equation}

\subsection{Problem Formulation}
\label{subsec: the problem}

We want to estimate the quantity
\begin{equation}
\label{eq:expectation}
I(n):=\mathbb{E}_{\text{Dir}_\alpha} \left[ e^{nH({\theta})}\right],
\end{equation} where $n\in \mathbb{R}^+$ is a scaling parameter and $H:\mathbb{R}^K \to \mathbb{R}\cup\{-\infty\}$ is a function satisfying the following conditions. 

\par\vspace*{-\parskip}
\begin{assumplist} 
 
    \item (Unique Maximum) $H$ achieves its global maximum on $\Delta_{K-1}$ only at $\theta^*$.

\item (Differentiability) $H$ is continuous on $\Delta_{K-1}$ and twice continuously differentiable in an open neighborhood of $\theta^*$ in $\mathbb{R}^K$.

\end{assumplist}

We could drop the condition of continuity on $\Delta_{K-1}$ in (A2) if we strengthened (A1)
to require that for every open neighborhood $U \subset \mathbb{R}^K$ of $\theta^*$,
\begin{equation}
    \label{eq:continuity_relaxation}
    \sup_{\theta \in \Delta_{K-1}\setminus U} H(\theta) < H(\theta^*).
\end{equation}
Condition \eqref{eq:continuity_relaxation} is sufficient for all subsequent proofs in the paper. Continuity on the compact set $\Delta_{K-1}$ in (A2) and uniqueness of the maximizer (A1) implies \eqref{eq:continuity_relaxation}.

The unique maximizer in (A1) may lie at the boundary of the simplex. Throughout the paper, the \emph{boundary} and \emph{interior} are understood relative to the affine hyperplane
$\left\{ \theta \in \mathbb{R}^K : \mathbf{1}^\top \theta = 1 \right\}.$ Thus, a point $\theta \in \Delta_{K-1}$ is an interior point if and only if all of its components are strictly positive. In contrast, topological notions such as \emph{continuity}, \emph{differentiability}, and \emph{compactness}, e.g., in (A2), are defined with respect to the ambient Euclidean space.

Let $m$ denote the number of zero components of $\theta^*$. 
Without loss of generality, relabel the coordinates so that these correspond to the first $m$ entries:

\begin{equation}
\label{eq:ordering_assumption}
    \theta_1^* = \cdots = \theta_m^* = 0, \qquad \theta_{m+1}^*, \dots, \theta_K^* > 0.
\end{equation}

The problem of maximizing $H$ over the simplex satisfies linear independence constraint qualification (LICQ). Therefore at the optimal point $\theta^*$, there exist Karush–Kuhn–Tucker (KKT) multipliers $\lambda_i\ge 0$ for the inequality constraints and $\mu\in\mathbb{R}$ for the equality constraint such that
 \begin{equation}
 \label{eq:KKT_conditions}
 \nabla H(\theta^*) = -\sum_{i =1}^{m} \lambda_i e_i + \mu\,\mathbf{1}_K.
 \end{equation}

Further assume the following.

\begin{assumplist}[resume]
\item (Strict complementarity) The KKT multipliers corresponding to the active inequality constraints are strictly positive: $\lambda_i > 0$ for all $i=1,\dots,m$.

\item (Negative Definiteness of the Hessian) The Hessian of $H$ at $\theta^*$ is negative definite on the critical cone, i.e.
\begin{equation}
\label{eq:reduced_hessian_condition}
    d^\top \nabla^2 H(\theta^*) d < 0,
\quad \forall d \in \mathcal{C}(\theta^*) \setminus \{0\}
\end{equation}
where
\begin{equation}
\label{eq:critical_cone}
\mathcal{C}(\theta^*)
=\Bigl\{
d\in \mathbb{R}^K:\ \mathbf{1}_K^\top d = 0,\ d_i = 0, \quad i\le m
\Bigr\}.
\end{equation}

\end{assumplist}

Underlying the Laplace method is a quadratic approximation to $H$.
The critical cone is the set of feasible directions in the simplex along which the first-order directional derivatives of $H$ vanish.
Along such directions, curvature determines local maximality, whereas outside the critical cone the first-order term already precludes any increase  (cf. \cite{nocedal2006numerical}, Section 12.5). 
Consequently, in the Laplace method only directions in the critical cone require a second-order expansion of $H$. 
We note that a sufficient condition for (A4) is negative definiteness of the Hessian of $H$ on $\mathbb{R}^K$, which we denote as $\nabla^2 H(\theta^*)\prec0$.

Under assumptions (A1)--(A4), we will see that the Laplace method implies that the biggest contribution to the expectation in \eqref{eq:expectation} comes from points near $\theta^*$.  
We will exploit this insight to improve upon the standard MC estimator.

We next discuss the LDA as an example of problem \eqref{eq:expectation} where (A1)--(A4) are satisfied.

\subsubsection{Example: Latent Dirichlet Allocation (LDA)} 
\label{sec:LDA}

Latent Dirichlet Allocation (LDA), introduced in \cite{blei2003latent}, is a method for discovering latent topics in a collection of documents; topics provide a low-dimensional representation of a large set of words drawn from a large vocabulary. Let $K$ denote the number of topics, which is a hyperparameter of the model. 
A topic is represented by a probability distribution over a fixed vocabulary of size $V$; individual words are more likely under some topics than others.
Thus, the topics are given by vectors $\phi_k\in \Delta_{V-1}$, $k=1,\dots,K$. Each document is represented by a probability distribution $\theta \in \Delta_{K-1}$, in which $\theta_k$ represents the weight of topic $k$ in the document. LDA takes a Bayesian perspective to infer the topic vectors $\phi_k$ and the document vectors $\theta$, imposing Dirichlet priors on both.

Let $\phi = \{\phi_{k}\}_{k=1}^{K}\in \mathbb{R}^{K\times V}$ be the collection of topic vectors. An important task is to evaluate the quality of the topics $\phi$ extracted by LDA. This is often done by considering the expected likelihood of a held-out document with words $w$; see, for example, \cite{wallach2009evaluation}, especially equations (6) and (12).

Let $n$ be the total number of words in the document, and let $p_{v}=n_{v}/n$ denote the frequency of each vocabulary element $v$.  The expected likelihood of the document is averaged over topic proportions $\theta$ drawn from the Dirichlet prior with parameter $\alpha$,
\begin{equation}
\label{eq:likelihood}
p(w|\phi)=\mathbb{E}_{\text{Dir}_{\alpha}}[e^{nH(\theta)}], 
\end{equation}
where $H$ is the log-likelihood
\begin{equation} \label{eq:entropy}
H(\theta)=\sum_{v=1}^{V}p_{v}\log\left(\sum_{k=1}^{K}\theta_k \phi_{k}(v)\right).
\end{equation}
The expression $\sum_k\theta_k\phi_k(v)$ is the probability of word $v$ as represented by the topic model: a document picks topic $k$ with probability $\theta_k$, and then topic $k$ picks the word $v$ with probability $\phi_k(v)$. The expression in (\ref{eq:likelihood}) compares this model-based probability with the empirical frequency $p_v$. This expression, which is always negative, will be closer to zero when the two distributions are more closely aligned.

We note the expressions for the gradient
\begin{align}
\label{eq:gradient}
\nabla_{\theta}H(\theta)=\sum_{v}p_{v}\frac{\phi(v)}{\theta^{\top}\phi(v)}\end{align}
and Hessian of $H$
\begin{equation}
\nabla^{2}H(\theta) = -\sum_{v=1}^{V} p_{v} \cdot \frac{\phi(v) \phi(v)^{\top}}{(\theta^{\top} \phi(v))^{2}}.
\label{eq:hess}
\end{equation}

They are all well defined since $\theta^\top \phi(v)>0$ for all $v$. This follows from the fact that the topic-word distributions $\phi_k$ and the document-topic distributions $\theta$ are sampled from Dirichlet distributions and therefore have strictly positive components with probability 1.

Since $\Delta_{K-1}$ is a compact subset of $\mathbb{R}^K$ and $H$ is continuous on $\mathbb{R}^K$, we have the existence of a maximizer $\theta^*$. For uniqueness, note that $\nabla^{2}H$ is negative definite as long as the set of topic vectors $\phi$ has full rank.
This holds in practice because the size $V$ of the vocabulary is ordinarily much larger than the number of topics $K$.

It follows from the gradient expression in \eqref{eq:gradient} that $\theta^\top \nabla H(\theta)=1$. This implies that
the KKT multiplier $\mu$ in \eqref{eq:KKT_conditions} equals 1, which then implies that 
\begin{equation}
\label{eq:LDA_gradient}
    \nabla H(\theta^{*})_{i}=\begin{cases}
1, & \theta_{i}^{*}>0;\\
\le1, & \theta_{i}^{*}=0.
\end{cases}
\end{equation}

The optimizer $\theta^*$ can be calculated very efficiently using the simple recursive scheme analyzed in a different setting by \cite{cover1984algorithm}. A brief discussion of the algorithm is in the Supplemental Material \ref{sec:cover_algorithm}.

A model very similar to LDA, due to \cite{pritchard2000inference}, is widely used in population genetics. In that setting, each $v$ represents an allele,
the $\phi_k$ vectors represent allele frequencies in different populations, and $\theta_k$ measures the fraction of an individual's genome that originates from population $k$. Our results apply in that setting as well.
More generally, LDA is used for dimension reduction with categorical data in many applications outside of text analysis --- for example, in modeling survey responses (\cite{munro2022latent}) or CEO time usage (\cite{bandiera2020ceo}).

\section{Laplace Method on the Simplex}
\label{sec: Laplace Method}

To improve the estimation of $I(n)$, it will be useful to understand the behavior of $I(n)$ for large $n$, where the plain MC estimator has large relative error. In the LDA setting, large $n$ corresponds to evaluating the model on a large document, which is the most relevant case. The Laplace method is well-suited to our setting. We first discuss the method in a classical setting and then develop the necessary extension for integrals with respect to Dirichlet distribution on the simplex.

\subsection{Classical Laplace Method}
In its basic form, the Laplace method (as in, e.g., \cite{breitung1994asymptotic}, Theorem 41, p.56) states that integrals with exponential dependence on a large parameter $n$ should follow 

\begin{equation}
\label{eq:Laplace_approximation_simple}
\int_F f(\textbf{x}) \exp(nh(\textbf{x}))d\textbf{x}\sim
(2\pi)^{K/2} \frac{f(\textbf{x}^*)}{\sqrt{|\det(\nabla^2 h(\textbf{x}^*)|}}\cdot \exp(nh(\textbf{x}^*))\cdot n^{-K/2},
\quad \mbox{as $n\to\infty$,}
\end{equation}
where $F$ is a closed domain in $\mathbb{R}^K$ and $\textbf{x}^*$ is an interior maximizer of a twice continuously differentiable function $h$ on $F$. Here $f(\cdot)$  is a continuous function assumed to be neither vanishing nor singular at $\mathbf{x}^*$, and the Hessian of $h(\cdot)$, $\nabla^2 h (\textbf{x}^*)$, is negative definite. 
In the context of our problem $I(n)$ in \eqref{eq:expectation}, $f$ can be thought of as the Dirichlet density $\text{Dir}_{\alpha}^{(K-1)}$ and $h$ as the function $H$, both viewed as functions over the projected simplex defined in \eqref{eq:projected_simplex}. 
The biggest contribution to the integral in (\ref{eq:Laplace_approximation_simple}), and therefore to $I(n)$ in \eqref{eq:expectation}, should come from the neighborhood of the maximizer $\theta^*$.

An underlying assumption in \eqref{eq:Laplace_approximation_simple} is that $\textbf{x}^*$ is an interior point of the domain. However in LDA, maximum likelihood topic proportions $\theta^*$ often have zero components. 
There are well-known extensions of the Laplace method to settings where the maximum is attained at the boundary or a surface part of the domain (see Chapter 5 of \cite{breitung1994asymptotic}). However, a key requirement of these results is that $f$ be continuous and neither singular nor vanishing at the maximizer $\textbf{x}^*$.
In contrast, in our problem, the underlying Dirichlet density always vanishes or is singular at the boundary. 
Hence, a new asymptotic result is required.

Note that the right side of \eqref{eq:Laplace_approximation_simple} does not explicitly depend on the integration domain $F$. We would get the same limit with a smaller $F$ so long as $\textbf{x}^*$ remains in the interior of the domain. Similar ideas hold when $\textbf{x}^*$ is at the boundary of $F$ with appropriate adjustments to the asymptotics in
\eqref{eq:Laplace_approximation_simple}. 
The key idea is to utilize the strict gap property \eqref{eq:continuity_relaxation}, which ensures that the contribution to the integral from outside any neighborhood of the maximizer is asymptotically negligible.
This point is formalized through the localization lemma in Supplemental Material \ref{sec:localization_lemma}. We will use this flexibility in the choice of domain at several points in our analysis.

\subsection{Laplace Method with Dirichlet Distribution}
\label{subsec: Laplace_method_on_simplex}

We extend the Laplace method to a setting where the underlying function $f$ is the Dirichlet density over the simplex. 
A subtlety here is that the maximizer $\theta^*$ may lie at the boundary on which the Dirichlet density $\text{Dir}_{\alpha}(\theta^*)$ is either singular or zero. While there exist variants of the Laplace method in singular and boundary cases separately, our setting requires us to establish a new variant that combines both. 
The main difficulty is that the boundary geometry and the singular behavior of the Dirichlet density must be handled simultaneously.

When the maximizer is an interior point, as in the standard Laplace method in \eqref{eq:Laplace_approximation_simple}, the asymptotic polynomial factor depends only on the dimension of the domain (i.e., $K-1$). However when it is a boundary point, the integral picks up the behavior of the underlying Dirichlet density along the components for which $\theta^*$ is zero. The polynomial factor depends on the number of zero components of $\theta^*$ and the corresponding Dirichlet parameters $\alpha_i$, which determines the level of singularity or decay of $\text{Dir}_\alpha$ at the boundary.  
The product form of the Dirichlet density allows each active coordinate to contribute independently to the boundary behavior.

\begin{thm}[Laplace Method on the Simplex]
\label{thm:boundary-laplace_standard}
Suppose $H:\mathbb{R}^{K}\to\mathbb{R}$ satisfies (A1)-(A4) at the maximizer $\theta^* \in \Delta_{K-1}$. Let $m$ denote the number of zero components of $\theta^*$.  
Then as $n\to\infty$
\begin{equation}
\label{eq:Laplace_boundary_integral} 
\mathbb{E}_{\text{Dir}_\alpha} \left[ e^{nH({\theta})}\right]\sim C_{H}\cdot \exp({nH(\theta^*)})\cdot n^{-\frac{(K-1-m)}{2}}\cdot n^{-\sum_{i=1}^{m}\alpha_{i}}
\end{equation}
where $C_{H}\in(0,\infty)$ is a constant. 
\end{thm}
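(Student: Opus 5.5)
We work in the projected coordinates $y\in\tilde{\Delta}_{K-1}$ of \eqref{eq:projected_simplex}. The first step is to eliminate a coordinate $j>m$ with $\theta_j^*>0$; after relabeling we take $j=K$. This keeps the Dirichlet factor $(1-\sum y_i)^{\alpha_K-1}$ smooth and positive near $\theta^*$. Write $h(y)=H(y_1,\dots,y_{K-1},1-\sum y_i)$ and $y^*$ for the image of $\theta^*$. By the strict gap property \eqref{eq:continuity_relaxation} and the localization lemma in the Supplemental Material, we may restrict the integral to a small box $U=[0,\delta]^m\times\prod_{i>m}[y_i^*-\delta,y_i^*+\delta]$. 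The contribution from outside $U$ is $O(e^{n(H(\theta^*)-\eta)})$ for some $\eta>0$. This is exponentially negligible against the polynomial rate claimed in \eqref{eq:Laplace_boundary_integral}, because the Dirichlet density is integrable.

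Next, translate the KKT conditions \eqref{eq:KKT_conditions} into the reduced coordinates. For $i\le m$ we get $\partial_i h(y^*)=\partial_iH-\partial_KH=-\lambda_i<0$, using (A3). For $i>m$ we get $\partial_i h(y^*)=0$. Assumption (A4) says that the block $A:=-\nabla^2_{zz}h(y^*)$ is positive definite, where $z=(y_{m+1},\dots,y_{K-1})$; indeed the critical cone is exactly $\{d: d_i=0,\ i\le m\}$ in these coordinates. Write $u=(y_1,\dots,y_m)$. A second-order Taylor expansion on $U$ then gives
\begin{equation*}
n\bigl(h(y)-h(y^*)\bigr)=-n\lambda^\top u-\tfrac n2 (z-z^*)^\top A(z-z^*)+n\,O\bigl(|u|^2+|u||z-z^*|\bigr)+n\,o(|z-z^*|^2).
\end{equation*}
Now rescale with $u=v/n$ and $z=z^*+w/\sqrt n$. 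The Jacobian contributes $n^{-m}n^{-(K-1-m)/2}$. The Dirichlet factor becomes $\prod_{i\le m}(v_i/n)^{\alpha_i-1}\cdot g(z^*+w/\sqrt n, v/n)$, where $g$ collects the smooth positive factors and $g\to g(z^*,0)>0$. Combining these, the prefactor is $n^{-m}\cdot n^{-\sum_{i\le m}(\alpha_i-1)}=n^{-\sum_{i\le m}\alpha_i}$, which matches the claimed rate. The integrand converges pointwise to
\begin{equation*}
\frac{g(z^*,0)}{B(\alpha)}\prod_{i\le m}v_i^{\alpha_i-1}e^{-\lambda_i v_i}\,e^{-\frac12 w^\top Aw}.
\end{equation*}
The error terms vanish in the limit: $n|u|^2=|v|^2/n$, $n|u||z-z^*|=|v||w|/\sqrt n$, and $n\,o(|w|^2/n)=o(1)\,|w|^2$. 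The limit integrates to the constant
\begin{equation*}
C_H=\frac{g(z^*,0)}{B(\alpha)}\prod_{i\le m}\Gamma(\alpha_i)\lambda_i^{-\alpha_i}\,(2\pi)^{(K-1-m)/2}\det(A)^{-1/2}\in(0,\infty).
\end{equation*}

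The main obstacle is justifying the passage to the limit, i.e. finding a dominating function. The difficulty is that the weights $v_i^{\alpha_i-1}$ may be singular when $\alpha_i<1$, and the domain grows with $n$. We would shrink $\delta$ so that on $U$ the following bound holds:
\begin{equation*}
h(y)-h(y^*)\le-\tfrac12\lambda^\top u-\tfrac14(z-z^*)^\top A(z-z^*).
\end{equation*}
To obtain it, absorb the cross term $|u||z-z^*|\le \epsilon|z-z^*|^2+C_\epsilon|u|^2$, and note that $|u|^2\le\delta|u|$ is dominated by the linear term. After rescaling, the integrand is then bounded by $C\prod v_i^{\alpha_i-1}e^{-\lambda_iv_i/2}e^{-w^\top Aw/4}$, which is integrable because each $\alpha_i>0$. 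Dominated convergence therefore yields \eqref{eq:Laplace_boundary_integral}.
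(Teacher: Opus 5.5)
Your proposal is correct and follows essentially the same route as the paper's proof. Both arguments localize via the strict gap, rescale anisotropically ($u=v/n$ in the active coordinates, $z=z^*+w/\sqrt{n}$ in the others), take the pointwise limit using the reduced KKT gradient $-\lambda$, and apply dominated convergence with an envelope that is linear in $u$ and quadratic in $z$. This gives the same constant $C_H$ as in \eqref{eq:C_H_constant}.

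The only difference is in how the envelope is obtained. You localize to a small box inside $\tilde{\Delta}_{K-1}$ and get the envelope from a local Taylor argument: you absorb the cross term and use $|u|_2^2\le\delta|u|_1$, while the Dirichlet factor is trivially bounded on the box. The paper instead localizes to the larger truncated simplex $\Delta_{K-1}^\epsilon$. It therefore needs the global bound-around-maximum lemma, proved by contradiction, together with the bound $\theta_k\ge\epsilon$ on the Dirichlet factor. Your version is slightly more economical.
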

\begin{proof}
See proof in Section~\ref{sec:appendix-proof-boundary-laplace_standard}. An explicit expression for the constant $C_H$ is given in \eqref{eq:C_H_constant}.
\end{proof}

Theorem \ref{thm:boundary-laplace_standard} will be crucial in analyzing the asymptotic efficiency of alternative simulation estimators. 
For the second moment of each replication of the plain MC estimator, we can replace $n$ with $2n$ in (\ref{eq:Laplace_boundary_integral}) to get
\begin{equation}
\label{eq:Laplace_boundary_integral_second_moment}
\mathbb{E}_{\text{Dir}_\alpha} \left[ e^{2nH({\theta})}\right]\sim C_{H}\cdot \exp({2nH(\theta^*)})\cdot (2n)^{-\frac{(K-1-m)}{2}}(2n)^{-\sum_{i = 1}^{m}\alpha_{i}}.
\end{equation}
From this, we can already draw two conclusions. First, the square of the first moment in (\ref{eq:Laplace_boundary_integral}) is asymptotically negligible relative to the second moment in (\ref{eq:Laplace_boundary_integral_second_moment}), so the variance of the plain MC estimator is dominated by the second moment. Second, we see that the second moment has twice the exponential rate of the first moment;
for $H(\theta^*)<0$, this is the fastest possible rate of decay.
It follows that asymptotic improvements in efficiency need to reduce the polynomial terms in (\ref{eq:Laplace_boundary_integral_second_moment}). This distinguishes our setting from most applications of large deviations theory to simulation, which focus on changing the exponential rate of decay. Indeed, the Laplace {\it principle} considers only the exponential rate (\cite{dupuis2011weak} chapter 1), whereas the Laplace {\it method} captures polynomial terms as well.  

Theorem~\ref{thm:boundary-laplace_standard} is of broader interest beyond our specific application. It identifies a setting in which parameters of a prior distribution (the $\alpha_i$ from the Dirichlet distribution) influence the asymptotics of the model evidence as $n\to\infty$. This contrasts with commonly used approximations to the model evidence in Bayesian inference, such as the Bayesian Information Criterion (BIC) (cf. \cite{schwarz1978estimating}, \cite{kass1995bayes}).
BIC arises as an application of the Laplace method and relies on the assumption of an interior mode of the likelihood. This results in a leading-order behavior that is independent of the prior. While modified approximations have been developed for boundary modes in limited BIC settings (cf. \cite{hsiao1997approximate}), prior parameters still vanish from the leading-order terms. The distinction in our setting results from the fact that Theorem~\ref{thm:boundary-laplace_standard} does not assume an interior maximizer and the underlying measure is Dirichlet whose parameters affect the level of singularity or decay.

\section{Importance Sampling Estimator}
\label{sec:importance_sampling}

\subsection{Dirichlet Importance Sampling}

The standard Monte Carlo estimator of $\mathbb{E}_{\text{Dir}_\alpha}[e^{nH(\theta)}]$ is 
\begin{equation}
\label{eq:standard_MC}
\hat{p}_{\text{MC}}=\frac{1}{N}\sum_{i=1}^{N}e^{nH({\theta}^{(i)})}, \quad {\theta}^{(i)}\stackrel{\text{iid}}{\sim} {\text{Dir}_\alpha}.
\end{equation} 
For methods of sampling from the Dirichlet distribution, see, e.g., Section~4.3 of \cite{kroese2013handbook}.

The main insight from Theorem \ref{thm:boundary-laplace_standard} is that, for large $n$, $I(n)$ in \eqref{eq:expectation} is mainly determined by points close to $\theta^*$ and that $\hat{p}_{\text{MC}}$ is not efficient. This suggests that we may be able to improve simulation efficiency through an importance sampling procedure that gives more weight to the region close to $\theta^*$.
Given that the Dirichlet distributions form an exponential family supported on the simplex, it is natural to consider
sampling from another Dirichlet distribution. Importance sampling within exponential families has proved effective in other settings, and we will see that it achieves near-optimal performance in our setting when properly designed. 

For any distribution $\text{Dir}_{\eta}$ with parameter $\eta$ we have
\begin{align}
\label{eq:importance_sampling_change_of_measure}
\mathbb{E}_{\text{Dir}_{\alpha}}[e^{nH({\theta})}] 
&= \mathbb{E}_{\text{Dir}_{\eta}}\left[
e^{nH({\theta})} \cdot \frac{\text{Dir}_{\alpha}}{\text{Dir}_{\eta}}({\theta})
\right] \nonumber \\
&= \mathbb{E}_{\text{Dir}_{\eta}}\left[
e^{nH({\theta})} \cdot \frac{B(\eta)}{B(\alpha)} 
\prod_{j=1}^K \theta_j^{\alpha_j - \eta_j}
\right].
\end{align}
This representation of the expectation leads to an importance sampling estimator of the following:
\begin{equation}
\label{eq:generic_IS_estimator}
\hat{p}_{\text{IS}}=\frac{1} {N}\sum_{i=1}^{N} \frac{\text{Dir}_{\alpha}(\theta^{(i)})}  {\text{Dir}_{\eta}(\theta^{(i)})}e^{nH(\theta^{(i)})}, \quad \theta^{(i)}\stackrel{\text{iid}}{\sim} \text{Dir}_{\eta}.
\end{equation}
For any choice of $\eta$, the IS estimator $\hat{p}_{\text{IS}}$ provides an unbiased estimator. 

To gain insight into an effective choice of $\eta$, we apply the Laplace method to the second moment of the IS estimator.  For simplicity we begin by considering the case where $\theta^*$ is in the interior.
For any fixed $\eta$, the Laplace method gives that, for large $n$,
\begin{equation}
\label{eq:secondmoment_IS}
\mathbb{E}_{\text{Dir}_{\eta}} \left[ \left( \frac{\text{Dir}_{\alpha}(\theta)}  {\text{Dir}_{\eta}(\theta)} \right)^2 e^{2nH(\theta)} \right] \propto   \frac{\text{Dir}_{\alpha}(\theta^*)^2}  {\text{Dir}_{\eta}(\theta^*)} \cdot n^{-(K-1)/2}\cdot  e^{2nH(\theta^*)} .
\end{equation}

 This expression suggests that to reduce variance we should choose $\eta$ to increase $\text{Dir}_{\eta}(\theta^*)$. In fact, we will let $\eta = \eta_n$ depend on $n$ so that $\text{Dir}_{\eta_n}$ becomes increasingly concentrated around $\theta^*$.
 A natural choice to consider is the Dirichlet parameter $$\eta_{n}=\alpha + n^{\gamma}\theta^{*},\,\quad \text{for any } \gamma>0.$$ 
This choice of $\eta_n$ shifts the mode of $\text{Dir}_{\eta_n}$ toward $\theta^*$ as $n$ grows. This is easiest to see in the case when $\alpha_j+n^\gamma\theta_j^*>1$, for all $j$, since then
\begin{equation*}
\text{Mode}(\text{Dir}_{\alpha+n^\gamma \theta^*})_i
= \frac{\alpha_i + n^\gamma \theta^*_i -1}{\sum_{j=1}^{K} (\alpha_j + n^\gamma\theta^*_j )-K} \to \theta^*_i.
\end{equation*}
The rate at which the mode concentrates around $\theta^*$ is governed by the parameter $\gamma$. 

With this parameter choice, by expanding the likelihood ratio as in (\ref{eq:importance_sampling_change_of_measure}), 
the second moment of the IS estimator takes the form
\begin{equation}
\label{eq:secondmoment_IS_gamman}
\mathbb{E}_{\text{Dir}_{\alpha+n^\gamma\theta^{*}}}\left[\left(e^{nH(\theta)}\frac{\text{Dir}_{\alpha}(\theta)}{\text{Dir}_{\alpha+n^\gamma\theta^{*}}(\theta)}\right)^{2}\right] 
    = \frac{B(\alpha+n^\gamma\theta^{*})e^{-n^\gamma\theta^{*}\cdot \log\theta^{*}}}{B(\alpha)}\mathbb{E}_{\text{Dir}_{\alpha}}\left[e^{2nH(\theta)+n^\gamma\text{KL}(\theta^{*}|\theta)}\right],
\end{equation}
where we have introduced the Kullback-Leibler (KL) divergence defined as
\begin{align}
\label{eq:KL_divergence}
\text{KL}(\theta^{*}|\theta)
&=\sum_{k=1}^K\theta_{k}^{*}\log \frac{\theta_{k}^{*}}{\theta_{k}},\quad \theta\in\Delta_{K-1},
\end{align}
using the conventions $0\log 0=0$ and $0\log\infty=0$.

In \eqref{eq:secondmoment_IS_gamman}, we will see that the factor $B(\alpha+n^\gamma\theta^{*})e^{-n^\gamma\theta^{*}\cdot \log\theta^{*}}$ is $\Theta(n^{-\gamma(K-1)/2})$, implying a faster reduction with larger $\gamma$. 
On the other hand, the expectation in \eqref{eq:secondmoment_IS_gamman} involves the $\KL$ divergence term, which blows up at certain boundaries of the simplex. 
If $\gamma$ is too large, the KL term could result in variance that is exponentially larger than that of the standard MC estimator. 
While having $\text{Dir}_{\eta_{n}}$ concentrate near $\theta^{*}$ as $n\to \infty$ is useful for capturing the asymptotic behavior of the integrand, if it concentrates too quickly, the behavior of the likelihood ratio can dominate and slow the exponential rate of decay of the second moment. 
Designing an effective IS estimator requires balancing these considerations. 
The solution will be to take $\gamma$ close to but strictly less than 1.

\subsection{Truncation Near the Boundary}
\label{subsec:IS_estimator}

We have seen that an importance sampling scheme based on $\text{Dir}_{\eta_\gamma}$ introduces a factor of $\exp(n^\gamma \KL(\theta^*|\theta))$ in \eqref{eq:secondmoment_IS_gamman}, which depends on $\theta$ through the following term:
$$
\prod_{i:\theta^*_i>0} \theta_i^{\alpha_j-1-n^\gamma \theta^*_i}.
$$
The integrability of this term near the boundary face $\{ \theta_i=0\}$ is determined by whether $\alpha_i-n^\gamma \theta^*_i>0$ for all $i$ such that $\theta^*_i>0$, which is violated for $n$ large enough. 

We can avoid this difficulty by restricting the domain of our estimator and truncating points near certain boundaries. 
To ensure that the bias is negligible, we will ensure that the maximizer of the exponent remains in the domain after the truncation.

Let $\gamma\in (0,1)$. We propose the following $\gamma$-importance sampling (IS) estimator
\begin{equation}
\label{eq:IS_estimator}
\hat{p}_{\text{IS}}^{\gamma}=\frac{1}{N}\sum_{i=1}^{N}e^{nH({\theta}^{(i)})}\frac{\text{Dir}_{\alpha}({\theta}^{(i)})}{\text{Dir}_{\alpha+n^\gamma {\theta}^{*}}({\theta}^{(i)})}\mathbf{1}_{\Delta_{K-1}^{\epsilon}},\quad {\theta}^{(i)}\stackrel{\text{iid}}{\sim}\text{Dir}_{\alpha+n^\gamma {\theta}^{*}}
\end{equation}
where we introduce an additional modification to the IS estimator, an indicator function ${1}_{\Delta_{K-1}^{\epsilon}}$ of the truncated simplex, 
\begin{equation} \label{eq:truncated_simplex} \Delta_{K-1}^\epsilon = \left\{ \theta \in \Delta_{K-1} \,\middle|\, \theta_i \geq \epsilon \enspace \text{for all } i \text{ such that } \theta^*_i>0\right\}. \end{equation}
The requirement on the truncation factor $\epsilon$ is that $0<\epsilon < \min_{i:\theta^*_i>0} \theta_i^*$, to ensure that the truncated simplex still contains $\theta^*$.
In Theorem \ref{thm:boundary-mse-is}, we will see that any $\epsilon$ satisfying this condition will produce the same asymptotic behavior for \eqref{eq:IS_estimator} as $n\to\infty$.

In the following section, we show that for any valid choice of $\epsilon$, the bias introduced by the truncation diminishes exponentially fast and is of much smaller order than the variance of the standard MC estimator. 
As a result, for any $\gamma \in (0,1)$ and any valid $\epsilon$, our IS estimator improves upon the standard MC estimator in the mean squared error (MSE) sense as $n\to\infty$.

 \subsection{Laplace Method for the Importance Sampling Estimator}

Through the factorization in (\ref{eq:secondmoment_IS_gamman}),
analyzing the second moment of the estimator (\ref{eq:IS_estimator}) requires studying the behavior of the expectation
\begin{align}
    \mathbb{E}_{\text{Dir}_\alpha} \left[ e^{2nH({\theta})+n^{\gamma }\text{KL}(\theta^{*}|\theta)}\mathbf{1}_{\Delta_{K-1}^\epsilon}(\theta)\right].
    \label{eq:hkl}
\end{align}
A core challenge is that the term $n^{\gamma} \KL(\theta^*|\theta)$ is not concave in $\theta$, so the usual conditions for applying standard Laplace asymptotics are not directly satisfied. Nevertheless, when $\gamma<1$ we can prove an alternative version of the Laplace method which gives the asymptotics for the second moment. The reason is that on the truncated simplex the $\KL$ divergence term is uniformly bounded, which implies $n^\gamma \KL(\theta^*|\theta)=O(n^\gamma)$ on $\Delta_{K-1}^\epsilon$. Therefore, when $\gamma<1$, the $\KL$ term is of lower order than the leading $O(n)$ term $2nH(\theta)$, and the $\KL$ term acts as a sublinear perturbation of the Laplace exponent.

We now present our second main result, which proves a Laplace method for expectations of this form, addressing the two powers of $n$ in the exponent in \eqref{eq:hkl}. This combines the boundary analysis of Theorem~\ref{thm:boundary-laplace_standard} while controlling the lower-order but non-concave KL-divergence term.

\begin{thm}[Laplace method on the simplex with KL]
\label{thm:boundary-laplace-with-kl}
Suppose $H:\mathbb{R}^{K}\to\mathbb{R}$ satisfies  (A1)-(A4) at the maximizer $\theta^* \in \Delta_{K-1}$. Let $m$ be the number of zero components of $\theta^*$. If $\gamma \in (0,1)$ then as $n \to \infty$
\begin{equation}
\label{eq:sub_KL_integral} 
\mathbb{E}_{\text{Dir}_\alpha} \left[ e^{2nH({\theta})+n^{\gamma }\text{KL}(\theta^{*}|\theta)}\mathbf{1}_{\Delta_{K-1}^\epsilon}(\theta)\right]\sim C^{\prime}_H \cdot \exp({2nH(\theta^*)})\cdot n^{-\frac{(K-1-m)}{2}}\cdot n^{-\sum_{i =1}^{m}\alpha_{i}}
\end{equation}
where $C^{\prime}_H\in(0,\infty)$ is a constant.
\end{thm}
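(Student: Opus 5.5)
The plan is to show that, after the same local rescaling used in the proof of Theorem~\ref{thm:boundary-laplace_standard} (with $n$ replaced by $2n$), the factor $e^{n^\gamma \KL(\theta^*|\theta)}$ converges to $1$ pointwise, and that it can be dominated uniformly. The limit is then the same as for $\mathbb{E}_{\text{Dir}_\alpha}[e^{2nH(\theta)}]$, giving $C'_H = C_H\, 2^{-(K-1-m)/2-\sum_{i\le m}\alpha_i}$. I will work in the coordinates of \eqref{eq:Dirichlet_alternative}, using the index $K$ (where $\theta^*_K>0$) as the eliminated coordinate. I write $s=(\theta_1,\dots,\theta_m)$ for the boundary coordinates and $u$ for the deviations of $\theta_{m+1},\dots,\theta_{K-1}$ from $\theta^*$.

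\textbf{Step 1 (localization).} Fix a small neighborhood $U$ of $\theta^*$ on which $H$ is $C^2$ and all coordinates $\theta_k$ with $\theta^*_k>0$ stay above $\epsilon$. Inside $U$ the indicator $\mathbf{1}_{\Delta^\epsilon_{K-1}}$ equals $1$. On $\Delta_{K-1}^\epsilon$ the KL divergence is bounded by some constant $B$, because only coordinates bounded below by $\epsilon$ enter it. By \eqref{eq:continuity_relaxation} there is $\delta>0$ with $H\le H(\theta^*)-\delta$ off $U$. The contribution from $\Delta_{K-1}^\epsilon\setminus U$ is therefore at most $e^{2nH(\theta^*)-2n\delta+n^\gamma B}$. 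Since $\gamma<1$, this is $o\bigl(e^{2nH(\theta^*)} n^{-p}\bigr)$ for every $p$, so it is negligible against the claimed rate. This is exactly where $\gamma<1$ is used.

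\textbf{Step 2 (local expansion of KL).} Near $\theta^*$ the function $\theta\mapsto \KL(\theta^*|\theta)$ is smooth, and it vanishes at $\theta^*$. Its gradient has entries $-\theta^*_k/\theta_k$, which at $\theta^*$ equal $-1$ on the positive coordinates and $0$ on the zero coordinates. For a feasible direction $d$ with $\mathbf{1}^\top d=0$, the first-order term is therefore $-\sum_{k>m} d_k=\sum_{i\le m} d_i$. This is linear in $s$ only, and it vanishes along the critical cone $\mathcal{C}(\theta^*)$. Taylor's theorem then gives
\[
0\le \KL(\theta^*|\theta)\le C_1\Bigl(\textstyle\sum_{i\le m}\theta_i+|u|^2\Bigr)\quad\text{on }U.
\]
The analogous expansion of $H$, which uses (A3) for the linear term $-\sum_i\lambda_i\theta_i$ and (A4) for the quadratic term, gives on a possibly smaller $U$
\[
2n\bigl(H(\theta)-H(\theta^*)\bigr)\le -c\,n\Bigl(\textstyle\sum_{i\le m}\theta_i+|u|^2\Bigr).
\]
This is the same domination bound used in the proof of Theorem~\ref{thm:boundary-laplace_standard}.

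\textbf{Step 3 (rescaling and dominated convergence).} Substitute $\theta_i=t_i/n$ for $i\le m$ and $u=v/\sqrt n$. This produces the Jacobian and Dirichlet-density factors $n^{-(K-1-m)/2}\,n^{-\sum_{i\le m}\alpha_i}$, exactly as in Theorem~\ref{thm:boundary-laplace_standard}. The rescaled KL exponent is at most $C_1 n^{\gamma-1}\bigl(\sum_i t_i+|v|^2\bigr)$, which tends to $0$ pointwise. For large $n$ it is also at most $\tfrac c2\bigl(\sum_i t_i+|v|^2\bigr)$. The integrand is therefore dominated by
\[
\prod_{i\le m} t_i^{\alpha_i-1}\,e^{-\frac c2(\sum_i t_i+|v|^2)},
\]
times bounded factors coming from the remaining Dirichlet terms, and this dominating function is integrable because $\alpha_i>0$. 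Dominated convergence then shows that the rescaled local integral with the KL factor has the same limit as without it, which yields \eqref{eq:sub_KL_integral}.

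The main obstacle is Step 2. A crude Lipschitz bound $\KL\le C|\theta-\theta^*|$ would give $n^\gamma|u|\sim n^{\gamma-1/2}$ after rescaling, which diverges for $\gamma>1/2$. One needs the structural fact that the KL gradient is orthogonal to the critical cone, so that KL is only linear in the boundary coordinates $s$ (scale $1/n$) and quadratic in the interior directions $u$ (scale $n^{-1/2}$). That is what makes $n^\gamma\KL\to 0$ for every $\gamma<1$, rather than only for $\gamma<1/2$.
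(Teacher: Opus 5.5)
Your proposal is correct and follows essentially the same route as the paper. It uses the same anisotropic rescaling ($n^{-1}$ in the zero coordinates, $n^{-1/2}$ in the positive ones) and the same key observation: the KL gradient equals $1$ on the zero coordinates and vanishes on the critical cone. As a result, $n^{\gamma}\KL$ becomes linear-plus-quadratic with an $n^{\gamma-1}$ prefactor after rescaling, is absorbed into the envelope of Lemma~\ref{lemma: bounds on maximum}, and dominated convergence gives the same constant $C'_H$.

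The differences are only technical. You localize first, using boundedness of $\KL$ on $\Delta_{K-1}^\epsilon$ together with $\gamma<1$, and you read off the pointwise limit from your upper bound plus $\KL\ge 0$. The paper instead works on the whole convex truncated simplex, via a global Hessian bound and the inequality $u_i\le\beta^2$, and obtains the pointwise limit through a separate L'Hospital's rule computation.
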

\begin{proof}
See proof in Section~\ref{sec:appendix-proof-boundary-laplace-with-kl}.
\end{proof}
A key implication of this result is that as $n\to\infty$, the behavior of this expectation is identical to $\mathbb{E}_{\text{Dir}_\alpha} \left[ e^{2nH({\theta})} \right]$. 
The variance reduction from importance sampling will thus be driven by the factor outside the expectation in \eqref{eq:secondmoment_IS_gamman}.

 \subsection{MSE Reduction}
\label{sec:MSE-of-Importance}

The proposed IS estimator in \eqref{eq:IS_estimator} achieves a reduction in mean squared error relative to the standard MC estimator as $n\to\infty$. We show that the extent of the reduction depends on the $\gamma$ chosen.

\begin{thm}[MSE reduction]
\label{thm:boundary-mse-is} 
Suppose assumptions (A1)-(A4) hold at the maximizer $\theta^*\in \Delta_{K-1}$. Let $m$ be the number of zero components of $\theta^*$.
Then as $n\to\infty$ 
\[
\frac{\text{MSE}(\hat{p}_{\text{IS}}^{\gamma})}{\text{MSE}(\hat{p}_{\text{MC}})}=\Theta\left(n^{-\gamma\cdot \frac{K-1-m}{2}}n^{-\gamma\sum_{i=1}^{m}\alpha_i}\right).
\]
\end{thm}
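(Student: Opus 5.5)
Since the ratio is taken for a fixed number $N$ of replications, it suffices to compare per-replication quantities. The MSE of $\hat p_{\text{MC}}$ is its variance, $\frac1N\bigl(I_2(n)-I(n)^2\bigr)$ with $I_2(n)=\mathbb{E}_{\text{Dir}_\alpha}[e^{2nH(\theta)}]$. By Theorem~\ref{thm:boundary-laplace_standard} and \eqref{eq:Laplace_boundary_integral_second_moment}, $I(n)^2$ is smaller than $I_2(n)$ by a polynomial factor. Hence $\text{MSE}(\hat p_{\text{MC}})\sim \frac1N C_H e^{2nH(\theta^*)}(2n)^{-(K-1-m)/2-\sum_{i\le m}\alpha_i}$.

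For $\hat p^\gamma_{\text{IS}}$ I would use the decomposition
\[
\text{MSE}=\frac1N\Bigl(M_2(n)-(I(n)-b(n))^2\Bigr)+b(n)^2 .
\]
Here $b(n)=\mathbb{E}_{\text{Dir}_\alpha}[e^{nH}\mathbf 1_{\Delta\setminus\Delta^\epsilon}]$ is the truncation bias, and $M_2(n)$ is the truncated second moment. Repeating the computation \eqref{eq:secondmoment_IS_gamman} with the indicator inserted gives
\[
M_2(n)=\frac{B(\alpha+n^\gamma\theta^*)e^{-n^\gamma\theta^*\cdot\log\theta^*}}{B(\alpha)}\,
\mathbb{E}_{\text{Dir}_\alpha}\bigl[e^{2nH+n^\gamma\KL(\theta^*|\theta)}\mathbf 1_{\Delta^\epsilon_{K-1}}\bigr].
\]
Theorem~\ref{thm:boundary-laplace-with-kl} gives the expectation as $\sim C'_H e^{2nH(\theta^*)}n^{-(K-1-m)/2-\sum_{i\le m}\alpha_i}$.

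\textbf{Prefactor.} The prefactor should be $\Theta(n^{-\gamma(K-1-m)/2})$; note that only the $K-m$ positive coordinates receive the $n^\gamma$ shift. I would prove this with Stirling's formula. For $i>m$, $\log\Gamma(\alpha_i+t\theta_i^*)=(\alpha_i+t\theta_i^*-\tfrac12)\log(t\theta_i^*)-t\theta_i^*+\tfrac12\log2\pi+o(1)$ with $t=n^\gamma$. The coordinates $i\le m$ contribute the constants $\Gamma(\alpha_i)$. The denominator contributes $\log\Gamma(\sum\alpha+t)$. Collecting terms:
\begin{itemize}
\item the linear terms $t\theta_i^*\log t$ cancel against $t\log t$;
\item $t\sum\theta_i^*\log\theta_i^*$ cancels the factor $e^{-t\theta^*\cdot\log\theta^*}$;
\item the $-t$ terms cancel;
\item the powers of $t$ left over are $\sum_{i>m}(\alpha_i-\tfrac12)-(\sum_i\alpha_i-\tfrac12)=-\sum_{i\le m}\alpha_i-\frac{K-m-1}{2}$.
\end{itemize}
So the prefactor is $\Theta(t^{-(K-1-m)/2-\sum_{i\le m}\alpha_i})=\Theta(n^{-\gamma(K-1-m)/2-\gamma\sum_{i\le m}\alpha_i})$. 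This matches the claimed rate, including the $\alpha_i$ part. I expect this bookkeeping to be the main point to get right, because the statement's exponent includes $\gamma\sum_{i\le m}\alpha_i$ and the Stirling constants must combine correctly.

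\textbf{Remaining terms.} It remains to show that the subtracted mean term and the bias are negligible relative to $M_2(n)/N$. For the mean term, $(I-b)^2\le I^2=\Theta(e^{2nH(\theta^*)}n^{-(K-1-m)-2\sum_{i\le m}\alpha_i})$. Against $M_2=\Theta(e^{2nH(\theta^*)}n^{-(1+\gamma)[(K-1-m)/2+\sum_{i\le m}\alpha_i]})$, the ratio is $n^{-(1-\gamma)[\cdots]}\to0$ when $\gamma<1$. The degenerate case $(K-1-m)/2+\sum\alpha_i=0$ would need $m=0$ and $K=1$, which is trivial. For the bias, $\Delta\setminus\Delta^\epsilon$ is a closed set excluding a neighborhood of $\theta^*$ (since $\epsilon<\min_{i>m}\theta_i^*$). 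By \eqref{eq:continuity_relaxation}, $\sup_{\Delta\setminus\Delta^\epsilon}H\le H(\theta^*)-\delta$ for some $\delta>0$, so $b(n)\le e^{n(H(\theta^*)-\delta)}$ and $b^2=O(e^{2nH(\theta^*)-2n\delta})$, which is exponentially negligible. Therefore $\text{MSE}(\hat p^\gamma_{\text{IS}})\sim M_2(n)/N$.

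Dividing by the MC asymptotics, the $e^{2nH(\theta^*)}$ factors and the $n^{-(K-1-m)/2-\sum\alpha_i}$ factors cancel, up to the constants $C'_H/C_H$ and powers of $2$. What remains is the prefactor rate $n^{-\gamma(K-1-m)/2}n^{-\gamma\sum_{i\le m}\alpha_i}$, which gives the $\Theta$ statement.
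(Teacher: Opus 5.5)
Your argument is correct and is essentially the paper's proof: split the MSE into variance plus squared truncation bias, factor the truncated second moment into the Beta prefactor times the KL-perturbed expectation of Theorem~\ref{thm:boundary-laplace-with-kl}, check that the squared means are negligible against the second moments when $\gamma<1$, and make the bias exponentially small via the strict gap \eqref{eq:continuity_relaxation} (the set $\Delta_{K-1}\setminus\Delta_{K-1}^\epsilon$ is not actually closed, but it stays a fixed distance from $\theta^*$, which is all that is needed). The only deviation is that you get the prefactor asymptotics directly from Stirling's formula instead of through Lemma~\ref{lem:beta-asymptotic} (Theorem~\ref{thm:boundary-laplace_standard} applied to $\widehat H$), an equivalent route the paper itself mentions, and your exponent bookkeeping there is correct.
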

\begin{proof}
See proof in Section~\ref{sec:appendix-proof-boundary-mse-is}.
\end{proof}

The reduction rate is polynomial with exponent depending on the zero components of $\theta^*$, the corresponding Dirichlet parameter values $\alpha_i$, and the scale parameter $\gamma$ of the proposal distribution.
This follows from two key results: (i) the ratio 
${\text{\ensuremath{Var}}(\hat{p}_{\text{IS}}^{\gamma})}/{\text{Var}(\hat{p}_{\text{MC}})}$ decays polynomially as $n \to \infty$; and (ii) the truncation bias of 
$\hat{p}_{\mathrm{IS}}$ is of smaller order than 
$\text{Var}(\hat{p}_{\text{MC}})$, so that the MSE comparison is 
asymptotically determined by the variance term.

\begin{thm}[Variance reduction]
\label{thm:boundary-variance-is} 
Suppose assumptions (A1)-(A4) hold at the maximizer $\theta^*\in \Delta_{K-1}$ and $\gamma\in (0,1)$. 
Let $m$ be the number of zero components of $\theta^*$.
Then as $n\to\infty$
\[
\frac{\text{\ensuremath{Var}}(\hat{p}_{\text{IS}}^{\gamma})}{\text{Var}(\hat{p}_{\text{MC}})}=\Theta\left(n^{-\gamma \cdot \frac{K-1-m}{2}}n^{-\gamma\sum_{i =1}^{m}\alpha_{i}}\right).
\]
\end{thm}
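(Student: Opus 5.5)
We work with a single replication ($N=1$), since both estimators average $N$ i.i.d.\ copies and the factor $1/N$ cancels in the ratio. For the plain estimator, $\text{Var}(\hat p_{\text{MC}})=I_\alpha(2n)-I(n)^2$, where $I_\alpha(2n)=\mathbb{E}_{\text{Dir}_\alpha}[e^{2nH}]$. By Theorem~\ref{thm:boundary-laplace_standard} and \eqref{eq:Laplace_boundary_integral_second_moment}, $I(n)^2\asymp e^{2nH(\theta^*)}n^{-(K-1-m)}n^{-2\sum_{i\le m}\alpha_i}$. This is of smaller polynomial order than $I_\alpha(2n)\asymp e^{2nH(\theta^*)}n^{-(K-1-m)/2}n^{-\sum_{i\le m}\alpha_i}$. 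Hence $\text{Var}(\hat p_{\text{MC}})\sim C_H e^{2nH(\theta^*)}(2n)^{-(K-1-m)/2}(2n)^{-\sum_{i\le m}\alpha_i}$.

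For the IS estimator, the same factorization that gave \eqref{eq:secondmoment_IS_gamman}, now carried out with the indicator $\mathbf{1}_{\Delta^\epsilon_{K-1}}$ inside, yields the second moment
\[
M_2=\frac{B(\alpha+n^\gamma\theta^*)\,e^{-n^\gamma\theta^*\cdot\log\theta^*}}{B(\alpha)}\;\mathbb{E}_{\text{Dir}_\alpha}\!\left[e^{2nH(\theta)+n^\gamma\KL(\theta^*|\theta)}\mathbf{1}_{\Delta^\epsilon_{K-1}}(\theta)\right].
\]
Theorem~\ref{thm:boundary-laplace-with-kl} gives the expectation factor exactly the order of $I_\alpha(2n)$. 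The mean of the IS estimator is $\mathbb{E}_{\text{Dir}_\alpha}[e^{nH}\mathbf{1}_{\Delta^\epsilon}]\le I(n)$. Its square is therefore $O(e^{2nH(\theta^*)}n^{-(K-1-m)}n^{-2\sum\alpha_i})$, and we must show this is $o(M_2)$. It then follows that $\text{Var}(\hat p^\gamma_{\text{IS}})/\text{Var}(\hat p_{\text{MC}})\asymp B(\alpha+n^\gamma\theta^*)e^{-n^\gamma\theta^*\cdot\log\theta^*}$.

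The remaining step is the prefactor asymptotics, which I will do by Stirling's formula. Write $s=\sum_j\alpha_j$ and $t=n^\gamma$. For $j>m$, $\log\Gamma(\alpha_j+t\theta^*_j)=(\alpha_j+t\theta_j^*-\tfrac12)\log(t\theta_j^*)-t\theta_j^*+\tfrac12\log 2\pi+o(1)$. For $j\le m$, $\Gamma(\alpha_j)$ is constant. In the denominator, $\log\Gamma(s+t)=(s+t-\tfrac12)\log t-t+\tfrac12\log2\pi+o(1)$, using $\sum_j\theta_j^*=1$. Subtracting, the terms linear in $t$ cancel. The $t\log t$ terms also cancel. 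The terms $t\theta_j^*\log\theta_j^*$ cancel exactly against $e^{-t\theta^*\cdot\log\theta^*}$.

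What remains is the power of $t$:
\[
\sum_{j>m}\Bigl(\alpha_j-\tfrac12\Bigr)-\Bigl(s-\tfrac12\Bigr)=-\frac{K-1-m}{2}-\sum_{i\le m}\alpha_i ,
\]
multiplied by a positive constant. Hence the prefactor is $\Theta\bigl(n^{-\gamma(K-1-m)/2}n^{-\gamma\sum_{i\le m}\alpha_i}\bigr)$. Because it is $\Theta(n^{-\gamma c})$ with $c=\tfrac{K-1-m}{2}+\sum_{i\le m}\alpha_i>0$ and $\gamma<1$, we get $M_2\asymp e^{2nH(\theta^*)}n^{-(1+\gamma)c}$. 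This dominates the squared mean, which is $O(e^{2nH(\theta^*)}n^{-2c})$, so the mean subtraction is negligible and the claimed rate follows.

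The main obstacle is this last negligibility check together with the Stirling bookkeeping, especially the cancellation of the $\theta^*\log\theta^*$ and $t\log t$ terms. Both rely on $\gamma<1$ and on Theorem~\ref{thm:boundary-laplace-with-kl} being available. Everything else is a direct combination of the two Laplace theorems.
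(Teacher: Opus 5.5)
Your proposal is correct and follows the paper's proof. You use the same factorization of the truncated IS second moment, Theorem~\ref{thm:boundary-laplace-with-kl} for the expectation factor, and the fact that, with your $c$, $n^{-2c}=o(n^{-(1+\gamma)c})$ for $\gamma<1$, so both variances are governed by their second moments. The one difference is the prefactor $B(\alpha+n^\gamma\theta^*)e^{-n^\gamma\theta^*\cdot\log\theta^*}$: you get it directly from Stirling's formula, while the paper obtains it in Lemma~\ref{lem:beta-asymptotic} by applying Theorem~\ref{thm:boundary-laplace_standard} to $\widehat{H}=H(\theta^*)-\KL(\theta^*|\theta)$ with $n^\gamma$ in place of $n$ (noting in a remark that Stirling also works). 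Your route is more elementary and avoids checking (A1)--(A4) for $\widehat{H}$.
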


\begin{proof}
See proof in Section~\ref{sec:appendix-proof-boundary-variance-is}.
\end{proof}

This theorem is a consequence of the following two observations:
(i) 
the expectation $$\mathbb{E}_{\text{Dir}_\alpha} \left[ e^{2nH({\theta})+n^{\gamma }\text{KL}(\theta^{*}|\theta)}\mathbf{1}_{\Delta_{K-1}^\epsilon}(\theta)\right]$$ 
asymptotically behaves the same as $\mathbb{E}_{\text{Dir}_\alpha} \left[ e^{2nH({\theta})} \right]$ by Theorem~\ref{thm:boundary-laplace-with-kl}, and (ii) the factor $B(\alpha+n^\gamma\theta^{*})e^{-n^\gamma\theta^{*}\cdot \log\theta^{*}}$ decays at rate $\Theta\left(n^{-\gamma({K-1-m})/{2}}n^{-\gamma\sum_{i =1}^{m}\alpha_{i}}\right)$.

\begin{lem}[Negligible bias]
\label{lem:boundary-bias-is} 
Suppose assumptions (A1)-(A4) hold at the maximizer $\theta^*\in \Delta_{K-1}$. Then there exists $\delta>0$ (independent of $\gamma)$ such that
\[
\frac{\text{\ensuremath{Bias}}(\hat{p}_{\text{IS}}^{\gamma})^{2}}{\text{Var}(\hat{p}_{\text{MC}})}=O(e^{-2n\delta}).
\]
\end{lem}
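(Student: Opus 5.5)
The plan is to compute the bias of $\hat{p}_{\text{IS}}^{\gamma}$ exactly and bound it separately from the asymptotics of $\text{Var}(\hat{p}_{\text{MC}})$. By the change of measure \eqref{eq:importance_sampling_change_of_measure}, restricted to $\Delta_{K-1}^\epsilon$, the estimator has mean $\mathbb{E}_{\text{Dir}_\alpha}[e^{nH(\theta)}\mathbf{1}_{\Delta_{K-1}^\epsilon}(\theta)]$. Hence
\[
\text{Bias}(\hat{p}_{\text{IS}}^{\gamma}) = -\,\mathbb{E}_{\text{Dir}_\alpha}\bigl[e^{nH(\theta)}\mathbf{1}_{\Delta_{K-1}\setminus\Delta_{K-1}^\epsilon}(\theta)\bigr].
\]
This expression does not depend on $\gamma$, which is why $\delta$ can be chosen independently of $\gamma$.

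Next we bound this bias. The complement $\Delta_{K-1}\setminus\Delta_{K-1}^\epsilon$ consists of points with $\theta_i<\epsilon$ for some $i$ with $\theta_i^*>0$. Since $\epsilon<\min_{i:\theta^*_i>0}\theta^*_i$, the set
\[
U=\{\theta:\ |\theta_i-\theta_i^*|<\theta_i^*-\epsilon \text{ for all } i \text{ with } \theta_i^*>0\}
\]
is an open neighborhood of $\theta^*$ disjoint from the complement. By (A1)--(A2), in the form of the strict gap \eqref{eq:continuity_relaxation}, there is $\delta_0>0$ with $H(\theta)\le H(\theta^*)-\delta_0$ on $\Delta_{K-1}\setminus U$. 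Since $\text{Dir}_\alpha$ is a probability measure, we get $|\text{Bias}|\le e^{n(H(\theta^*)-\delta_0)}$, and therefore $\text{Bias}^2\le e^{2nH(\theta^*)}e^{-2n\delta_0}$.

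For the denominator, note that $\text{Var}(\hat{p}_{\text{MC}})$ is $1/N$ times $\mathbb{E}[e^{2nH}]-I(n)^2$, where $N$ is fixed. By \eqref{eq:Laplace_boundary_integral_second_moment} and Theorem~\ref{thm:boundary-laplace_standard}, the second moment is $\Theta(e^{2nH(\theta^*)}n^{-p})$ with $p=\frac{K-1-m}{2}+\sum_{i\le m}\alpha_i$. Meanwhile $I(n)^2=\Theta(e^{2nH(\theta^*)}n^{-2p})$, which is of smaller order since $p>0$ (if $p=0$ then $K=1$, a trivial case). So the variance is at least $c\,e^{2nH(\theta^*)}n^{-p}$ for large $n$, and the ratio is $O(n^{p}e^{-2n\delta_0})$. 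Taking any $\delta\in(0,\delta_0)$ absorbs the polynomial factor and gives $O(e^{-2n\delta})$.

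The only delicate points are whether $\delta$ may depend on $\epsilon$ (it does, which is presumably allowed since only independence from $\gamma$ is claimed) and the polynomial absorption. If one insists on exactly $\delta_0$, it is enough to note that the statement only asks for the existence of some $\delta>0$. I expect no real obstacle here; the key step is recognizing that the bias is $\gamma$-free and supported away from $\theta^*$.
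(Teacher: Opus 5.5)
Your proof is correct and follows the paper's route. You write the bias as the $\text{Dir}_\alpha$-expectation of $e^{nH}$ over $\Delta_{K-1}\setminus\Delta_{K-1}^\epsilon$ and bound it by $e^{n(H(\theta^*)-\delta_0)}$ via the strict gap, lower-bound $\text{Var}(\hat p_{\text{MC}})$ by $c\,e^{2nH(\theta^*)}n^{-p}$ using Theorem~\ref{thm:boundary-laplace_standard}, and absorb the polynomial factor by shrinking $\delta$; the only differences are cosmetic (a coordinatewise neighborhood instead of the paper's $\ell_2$-ball of radius $r_\epsilon/\sqrt{K}$, and explicit tracking of the sign and the $1/N$ factor).
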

\begin{proof}
See proof in Section~\ref{sec:appendix-proof-boundary-bias-is}.
\end{proof}

The intuition is that the truncation is constructed so as not to exclude the maximizer $\theta^*$. As $n \to \infty$, the dominant contribution to the expectation arises from a shrinking neighborhood of $\theta^*$, which lies entirely within the non-truncated region. Consequently, the contribution from the truncated portion of the domain is asymptotically negligible.

\subsubsection{Strong Efficiency}
\label{sec:strong_efficiency}
The mean-squared error rate obtained in Theorem~\ref{thm:boundary-laplace_standard} 
provides theoretical support for our proposed importance sampling estimator and our use of a Dirichlet proposal distribution. In particular, Theorem~\ref{thm:boundary-laplace_standard} implies that our IS estimator is a strongly efficient estimator as $\gamma \to 1$. 
An estimator $\widehat{Z}_n$ of a sequence $\mu_{n}$ is defined to be \textit{strongly efficient}, or to have \textit{bounded relative error}, if 
$$ \limsup_{n\to\infty} \frac{\mathbb{E}[(\widehat{Z}_n - \mu_{n})^2]}{\mu_{n}^2}<\infty.$$
This is a generalization of the usual definition~(\cite{asmussen2007stochastic}, Chapter VI) to include biased estimators, and implies that the mean-squared error grows at most as the order of the squared mean.

The standard Monte Carlo estimator is \emph{not} strongly efficient in either the interior or the boundary case, as the ratio of the variance to the squared mean grows at the rate of $\Theta(n^{\frac{K-1-m}{2} + \sum_{i =1}^{m}\alpha_{i}})$. Our importance sampling estimator reduces this rate substantially. In fact, we can show that by choosing $\gamma$ sufficiently close to 1, the IS estimator’s relative MSE achieves arbitrarily small polynomial growth and thus can be made arbitrarily close to having bounded relative error:
for any $0<\xi<1$ there exists $\gamma \in (0,1)$ such that the IS estimator achieves a relative MSE bounded by $O(n^\xi)$. 

\begin{cor}
\label{cor:relative_MSE} (Efficiency rate) As $n\to\infty$, the ratio of the mean-squared error to the square of the estimand $I(n)$ is,
\begin{align}
\frac{\text{MSE}(\hat{p}_{\text{IS}}^{\gamma})}{I(n)^2} 
&=\Theta\left(n^{(1-\gamma)\left(\frac{K-1-m}{2} + \sum_{i=1}^{m}\alpha_{i}\right)}\right).
\end{align}
\end{cor}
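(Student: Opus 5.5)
The corollary should follow by combining Theorem~\ref{thm:boundary-mse-is} with the asymptotics of the plain Monte Carlo estimator, both of which are already available. I would write
\[
\frac{\text{MSE}(\hat{p}_{\text{IS}}^{\gamma})}{I(n)^2}
=\frac{\text{MSE}(\hat{p}_{\text{IS}}^{\gamma})}{\text{MSE}(\hat{p}_{\text{MC}})}\cdot\frac{\text{MSE}(\hat{p}_{\text{MC}})}{I(n)^2}.
\]
The first factor is $\Theta\bigl(n^{-\gamma\frac{K-1-m}{2}}n^{-\gamma\sum_{i\le m}\alpha_i}\bigr)$ by Theorem~\ref{thm:boundary-mse-is}. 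For a fixed sample size $N$ this ratio does not depend on $N$, so I would treat $N$ as fixed throughout.

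For the second factor, note that $\hat{p}_{\text{MC}}$ is unbiased. Hence $\text{MSE}(\hat{p}_{\text{MC}})=\frac1N\bigl(I_2(n)-I(n)^2\bigr)$, where $I_2(n)=\mathbb{E}_{\text{Dir}_\alpha}[e^{2nH(\theta)}]$. By \eqref{eq:Laplace_boundary_integral_second_moment},
\[
I_2(n)\sim C_H e^{2nH(\theta^*)}(2n)^{-\frac{K-1-m}{2}-\sum_{i\le m}\alpha_i}.
\]
By Theorem~\ref{thm:boundary-laplace_standard},
\[
I(n)^2\sim C_H^2 e^{2nH(\theta^*)}n^{-(K-1-m)-2\sum_{i\le m}\alpha_i}.
\]
Let $c=\frac{K-1-m}{2}+\sum_{i\le m}\alpha_i$. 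Then
\[
\frac{I(n)^2}{I_2(n)}=\Theta(n^{-c})\to 0
\]
as long as $c>0$, so $\text{Var}(\hat{p}_{\text{MC}})\sim I_2(n)/N$. Dividing by $I(n)^2$ gives
\[
\frac{\text{MSE}(\hat{p}_{\text{MC}})}{I(n)^2}=\Theta(n^{2c-c})=\Theta(n^{c}),
\]
because the exponential factors $e^{2nH(\theta^*)}$ cancel exactly. This is the statement made in Section~\ref{sec:strong_efficiency} about the plain estimator.

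Multiplying the two $\Theta$ bounds gives $\Theta(n^{-\gamma c}\,n^{c})=\Theta(n^{(1-\gamma)c})$, which is the claim. Products of $\Theta$ relations are again $\Theta$ relations, so both the upper and the lower bound carry through.

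The only point that needs care is the positivity of $c$, which is used to make the squared mean negligible relative to the second moment. There are two cases:
\begin{itemize}
\item If $m<K-1$, then $c>0$ immediately.
\item If $m=K-1$, then $\theta^*$ is a vertex, and $c=\sum_{i\le m}\alpha_i>0$ because every $\alpha_i>0$.
\end{itemize}
So $c>0$ always holds. Beyond this check, the argument is bookkeeping: confirm that the constants $C_H$ and the factors $2^{-c}$ are positive and finite, so they are absorbed into the $\Theta$.
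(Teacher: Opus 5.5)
Your proposal is correct and follows essentially the same route as the paper. The paper also combines the ratio in Theorem~\ref{thm:boundary-mse-is} with the plain Monte Carlo rates from Theorem~\ref{thm:boundary-laplace_standard}, namely $\text{MSE}(\hat{p}_{\text{MC}})=\Theta(e^{2nH(\theta^*)}n^{-c})$ and $I(n)^2=\Theta(e^{2nH(\theta^*)}n^{-2c})$ (with your $c$), to get $\text{MSE}(\hat{p}_{\text{IS}}^{\gamma})=\Theta(e^{2nH(\theta^*)}n^{-(1+\gamma)c})$.

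One small correction: the MSE ratio does depend on $N$, through the term $N\,\text{Bias}^2/\text{Var}_{\alpha}(e^{nH(\theta)})$. With $N$ fixed, this term is exponentially negligible by Lemma~\ref{lem:boundary-bias-is}, so your conclusion is unaffected.
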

This follows from the asymptotic rate of the mean-squared error:
$$\text{MSE}(\hat{p}_{\text{IS}}^{\gamma})
=
\Theta\!\left(
e^{2nH(\theta^*)}
\,n^{-(1+\gamma) \left(\frac{(K-1-m)}{2} + \sum_{i=1}^{m}\alpha_{i} \right)}
\right).$$
This rate can be made arbitrarily close to the 
MSE achievable by a strongly efficient importance sampling estimator, for which the MSE is of the same order as the square of the $\mathbb{E}[e^{nH(\theta)}]$, namely $O\!\left(
e^{2nH(\theta^*)}
\,n^{-(K-1-m)}
\,n^{-2\sum_{i =1}^{m}\alpha_i}
\right).$

\section{Control Variate}  
\label{sec:control_variate}
\subsection{Introduction}  
In this section, we investigate how related ideas can be used to design and analyze control variates for variance reduction. We begin by introducing the control variate framework in our setting.

Suppose $\widehat{H}: \Delta_{K-1} \to \mathbb{R}$ is a function for which $e^{n \widehat{H}(\theta)}$ is correlated with $e^{n H(\theta)}$ and the expectation $\mathbb{E}_{\text{Dir}_{\alpha}} \left[ e^{n \widehat{H}(\theta)} \right]$ is known in closed form. An example of $\widehat{H}$ could be an approximation of $H$. 
We call $e^{n\widehat{H}(\theta)}$ a control variate.  

We can use the control variate to define a new estimator which improves upon the standard MC estimator. Define 
\begin{align}
\label{eq:CV_estimator_generic} 
\hat{p}_{\text{CV}}	&=\underbrace{\frac{1}{N}\sum_{i=1}^N e^{nH(\theta^{(i)})}}_{\hat{p}_{\text{MC}}}
+c\left({\frac{1}{N}\sum_{i=1}^N e^{n\widehat{H}(\theta^{(i)})}}-\mathbb{E}_{\text{Dir}_{\alpha}}[e^{n\widehat{H}(\theta)}]\right), 
\quad \theta ^{(i)}\stackrel{iid}{\sim} \text{Dir}_\alpha  
\end{align}
  where  $c$ is a constant. For any $c$, this is an unbiased estimator of $\mathbb{E}_{\text{Dir}_{\alpha}}[e^{n H(\theta)}]$.
 It is known that 
 the variance-minimizing choice of $c$ is given by
\begin{equation}
\label{eq:optimal_CV_coefficient}
 c^* =-\frac{\text{Cov}(e^{nH(\theta)},e^{n\widehat{H}(\theta)})}{\text{Var}(e^{n\widehat{H}(\theta)})},  
\end{equation}
and the resulting variance reduction is
\begin{align}
\label{eq:correlation}
\frac{\text{Var}(\hat{p}_{\text{CV}})}{\text{Var}(\hat{p}_\text{{MC}})} = (1-\rho_n^2), \quad  \rho_n=\frac{\text{Cov}(e^{nH(\theta)},e^{n\widehat{H}(\theta)})}{\sqrt{\text{Var}(e^{nH(\theta)})\text{Var}(e^{n\widehat{H}(\theta)})}},
\end{align}
where $\rho_n$ is the correlation between $e^{nH(\theta)}$ and $e^{n\widehat{H}(\theta)}$.

From \eqref{eq:correlation} we see that  
the effectiveness of a control variate $e^{n\widehat{H}(\theta)}$ is determined by its correlation with $e^{nH(\theta)}$.
 Since the quantities in the numerator and denominator of $\rho_n$ in \eqref{eq:correlation} both have exponential form, it is natural to consider applying the Laplace method to evaluate $\rho_n$, at least approximately.

\subsection{A KL-Based Control Variate}

 For the control variate based estimator to achieve variance reduction, the limiting correlation must not vanish. Therefore we need the leading-order contribution to the numerator to be of the same order as the leading-order contribution to the denominator. 
The KL divergence term  $\KL(\theta^*|\theta)$ which emerges from the likelihood ratio in importance sampling  gives rise to such control variate. 
As before, let $\theta^*$ denote the maximizer of $H$ at which assumptions (A1)-(A4) are satisfied. 
Define
\begin{equation}
\label{eq:lowerbound_mu_incentive}
\widehat{H}(\theta) = H(\theta^*)- \KL (\theta^*|\theta).
\end{equation}
We note a few properties of $\widehat{H}$. First, since $\widehat{H}$ involves the negative KL divergence (unlike for importance sampling), it is concave, and it is maximized at $\theta^*$. Second, the unique maximizer of the sum $H+\widehat{H}$ also remains at $\theta^*.$ Third, the the sum $H+\widehat{H}$ satisfies the negative definiteness property (A4) which allows the use of the Laplace method (shown later). Fourth, there is a closed form for the expectation 
\begin{equation}
\label{eq:control_variate_expectation}
    \mathbb{E}_{\text{Dir}_\alpha}[e^{n\widehat{H}(\theta)}]=\frac{B(\alpha+n\theta^*)}{B(\alpha)}\cdot e^{n(H(\theta^*)-\theta^*\cdot \log\theta^*)}.
\end{equation}
These properties make $\exp(n\widehat{H}(\theta))$ a promising control variate.

In the case of LDA (or any model with $\mu=1$ in the KKT conditions) with an interior $\theta^*$, $\hat{H}$ also emerges has a first-order Taylor expansion of $H$ in $\log\theta$ around $\log\theta^*$. This further suggests that our control variate should be highly correlated with the target, at least in these cases.

\subsection{Variance Reduction}

To quantify the variance reduction achieved by this CV estimator, we analyze the correlation $\rho_n$ in \eqref{eq:correlation} using the Laplace method in Theorem ~\ref{thm:boundary-laplace_standard}. When the maximizer $\theta^*$ is a boundary point, the relevant quantity, instead of the full Hessian, is the reduced Hessian on the critical cone $\mathcal{C}(\theta^*)$. If $m \le K-2$, the condition in (A4) is equivalent to the negative definiteness of the reduced Hessian
\begin{equation}
\label{eq:reduced_hessian}
U^\top \nabla^2 H(\theta^*)\, U \prec 0 ,
\end{equation}
where $U$ is defined as 
\begin{equation}
\label{eq:U_explicit}
U =
\begin{pmatrix}
0_{m\times (K-1-m)} \\
I_{K-1-m} \\
-\mathbf{1}^\top_{K-1-m}
\end{pmatrix}
\in \mathbb{R}^{K\times (K-1-m)} .
\end{equation}
The columns of $U$ form a basis of the critical cone $\mathcal{C}(\theta^*)$. With this, we discuss the limiting correlation achieved by our control variate $\hat{p}_{\text{CV}}$. 

\begin{thm}[Limiting Correlation: Boundary Case]
\label{thm:CV_correlation_boundary} 
Suppose $H:\mathbb{R}^K\to\mathbb{R}$ satisfies assumptions (A1)-(A4) at the maximizer $\theta^*$. Let $m\le K-2$ be the number of zero components of $\theta^*$ and let $\lambda$ denote the KKT multipliers of the inequality constraints $\theta_i\ge0$. Then as $n\to\infty$
\begin{align}
\label{eq:limiting_rho}
\rho^{2}_n \to  \rho^2:= 
\left( \, \frac{\left|\det\!Q\right|^\frac{1}{2} \, \left|\det\!R\right|^\frac{1}{2}}
{\left|\det\!\left(\frac{Q+R}{2}\right)\right|}\right) 
\prod_{k=1}^{m}\left(\frac{4 \lambda_k}{(\lambda_k+1)^2}\right)^{\alpha_k}
\end{align}
where $Q=U^{\top}\nabla^{2}H(\theta^{*})U$ and $R= -U^{\top}\nabla^2\text{KL}(\theta^*|\theta)\big|_{\theta=\theta^*}U$. 
\end{thm}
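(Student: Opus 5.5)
The plan is to compute each moment in $\rho_n$ with Theorem~\ref{thm:boundary-laplace_standard} and show that everything except the Laplace constants cancels. Write $a_n=\mathbb{E}_{\text{Dir}_\alpha}[e^{nH}]$, $b_n=\mathbb{E}_{\text{Dir}_\alpha}[e^{n\widehat{H}}]$, $c_n=\mathbb{E}_{\text{Dir}_\alpha}[e^{n(H+\widehat{H})}]$, and let $s_n$ and $t_n$ be the second moments of $e^{nH}$ and $e^{n\widehat{H}}$. Then
\[
\rho_n=\frac{c_n-a_nb_n}{\sqrt{(s_n-a_n^2)(t_n-b_n^2)}} .
\]
I will treat $2H$, $2\widehat{H}$ and $H+\widehat{H}$ as exponent functions, each satisfying (A1)--(A4) at the same point $\theta^*$ with the same zero set $\{1,\dots,m\}$. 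Theorem~\ref{thm:boundary-laplace_standard} then gives $s_n,t_n,c_n\sim C\,e^{2nH(\theta^*)}n^{-(K-1-m)/2-\sum_{i\le m}\alpha_i}$ with three different constants $C$. Applied to $H$ and $\widehat{H}$ separately, the same theorem gives $a_n^2$ and $b_n^2$ (and hence $a_nb_n$) the same exponential rate but an extra factor $n^{-(K-1-m)/2-\sum_{i\le m}\alpha_i}\to0$. So all mean terms are negligible, and
\[
\rho_n^2\to \frac{C_{H+\widehat H}^2}{C_{2H}\,C_{2\widehat H}} .
\]

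Before this, I must verify the assumptions for $\widehat{H}$ and for $H+\widehat{H}$.
\begin{itemize}
\item \emph{Smoothness.} $\widehat{H}(\theta)=H(\theta^*)-\sum_{k>m}\theta_k^*\log(\theta_k^*/\theta_k)$ depends only on the coordinates with $\theta^*_k>0$, so it is $C^2$ near $\theta^*$ in $\mathbb{R}^K$.
\item \emph{Unique maximum and gap.} By Gibbs' inequality, $\widehat{H}$ is uniquely maximized on the simplex at $\theta^*$. It equals $-\infty$ on the faces $\{\theta_k=0\}$, $k>m$. Full continuity therefore fails, but the gap condition \eqref{eq:continuity_relaxation} holds, and the paper notes that this suffices. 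Adding $H$ preserves the unique maximizer and the gap.
\item \emph{KKT data.} $\partial_i\widehat{H}(\theta^*)=\theta_i^*/\theta_i^*=1$ for $i>m$ and $=0$ for $i\le m$. Hence $\mu=1$ and $\lambda_i^{\widehat H}=1>0$. For $H+\widehat{H}$ the multipliers add, giving $\lambda_i+1$. For $2H$ and $2\widehat{H}$ they are $2\lambda_i$ and $2$. Strict complementarity (A3) holds in all cases.
\item \emph{Reduced Hessians.} They are $2Q$, $2R$ and $Q+R$. Here $R=-U^\top\nabla^2\text{KL}\,U$ is negative definite, because $\nabla^2\text{KL}=\mathrm{diag}(\theta_k^*/\theta_k^{2})$ on the positive coordinates is positive definite there and $U$ has full column rank. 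Since $Q\prec0$ as well, (A4) holds.
\end{itemize}

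Next I use the explicit constant \eqref{eq:C_H_constant}. I expect it to have the form $C=\kappa(\alpha,\theta^*)\,(2\pi)^{(K-1-m)/2}\,|\det(\text{reduced Hessian})|^{-1/2}\prod_{i\le m}\Gamma(\alpha_i)\lambda_i^{-\alpha_i}$. The factor $\kappa$ depends only on the prior and on $\theta^*$ (the Dirichlet factors $\prod_{k>m}\theta_k^{*\alpha_k-1}$, $1/B(\alpha)$, and so on), not on the exponent function. This is what one obtains from the interior Gaussian integral in the $K-1-m$ free directions, together with $\int_0^\infty y^{\alpha_i-1}e^{-n\lambda_i y}\,dy=\Gamma(\alpha_i)(n\lambda_i)^{-\alpha_i}$ in each active coordinate. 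In the ratio $C_{H+\widehat H}^2/(C_{2H}C_{2\widehat H})$, the factors $\kappa$, $(2\pi)^{\cdot}$ and $\Gamma(\alpha_i)$ cancel. What remains is
\[
\frac{|\det(Q+R)|^{-1}}{|\det 2Q|^{-1/2}|\det 2R|^{-1/2}}\prod_{k\le m}\frac{(\lambda_k+1)^{-2\alpha_k}}{(2\lambda_k)^{-\alpha_k}2^{-\alpha_k}} .
\]
Using $\det(2Q)=2^{K-1-m}\det Q$, this rearranges exactly to \eqref{eq:limiting_rho}.

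The main obstacle is this last step: confirming that \eqref{eq:C_H_constant} indeed has the multiplicative structure above. In particular, the exponent function must enter only through $|\det(U^\top\nabla^2(\cdot)U)|^{-1/2}$ and $\prod_{i\le m}\lambda_i^{-\alpha_i}$, with no cross-dependence on the multiplier $\mu$ or on the non-critical directions. If the constant involves the full gradient differently, I would rederive it from the local change of variables in the proof of Theorem~\ref{thm:boundary-laplace_standard}. In that derivation the linear term is $-\sum_{i\le m}\lambda_i y_i$ and the quadratic term is restricted to the critical cone, so only these two quantities should survive.
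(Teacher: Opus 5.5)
Your proposal is correct and follows essentially the same route as the paper. You apply Theorem~\ref{thm:boundary-laplace_standard} to $H+\widehat H$ and to the two second moments, note that the products of means carry an extra factor $n^{-(K-1-m)/2-\sum_{i\le m}\alpha_i}$, and take the ratio of the explicit constants. The constant \eqref{eq:C_H_constant} has exactly the multiplicative structure you anticipated, with $\kappa=\text{Dir}_{\alpha,m}(\theta^*)$. The only difference is bookkeeping: the paper replaces $n$ by $2n$, writing $C_B$ in Beta-function form and converting it via \eqref{eq:det_of_projected_KL}, whereas you rescale the exponents to $2H$ and $2\widehat H$. The two are equivalent.
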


\begin{proof}
See proof in Section~\ref{sec:CV_limiting_correlation_boundary}. The proof relies on the fact that the numerator and denominator of $\rho_n^2$ have matching exponential orders, by the properties of $\widehat{H}$ and $H+\widehat{H}$ discussed previously. The polynomial factors $n^{-(K-1-m)/2}$ cancel so that the variance reduction is determined by the remaining constants in \eqref{eq:limiting_rho}.\end{proof}

\begin{remark}
We can identify $\rho^2$ in (\ref{eq:limiting_rho}) as involving 1) the ratio of the determinants of the geometric mean to the arithmetic mean of the of the corresponding reduced Hessians and 2) a function of the KKT multipliers. 

The determinant factor on the left is between 0 and 1 by the log-concavity of the determinant on the cone of $(K-1-m)\times (K-1-m)$ symmetric positive definite matrices $\mathbb{S}_{++}^{K-1-m}$:
\begin{align}
\label{eq:log_determinant_inequality}
\det \left(\frac{Q+R}{2} \right)
&\ge  \sqrt{\det Q \det R},\quad \text{for all } Q,R\in\mathbb{S}^{K-1-m}_{++}.
\end{align}
The KKT factor on the right is less than 1 since $(\lambda_k + 1)^2 \ge 4\lambda_k$ for any $\lambda_k$. Therefore $\rho^2$ lies between $0$ and $1$, consistent with the fact that square of a correlation must take values in $[0,1]$. 

It remains of interest to identify when $\rho^2$ is close to 1, as this corresponds to maximal variance reduction. We note that $\rho^2=1$ when both the Hessian and KKT factors equal one. This happens when $Q=R$ (by the strict concavity of $\log\det$) and when $\lambda_k=1$ for $k\le m$. More generally, we can expect the geometric mean to be close to the arithmetic mean when two matrices $Q,R$ are close. Hence, the variance reduction is near optimal when the corresponding reduced Hessians are close and the $\lambda_k$'s are near 1.
\end{remark}

When $\theta^*$ is a vertex point on the simplex ($m=K-1$), the critical cone reduces to $\mathcal C(\theta^*)=\{ \mathbf{0}\}$. Therefore the negative definiteness of the Hessian condition (A4) is vacuous and holds automatically. The result in Theorem \ref{thm:CV_correlation_boundary} still holds except the determinant factor in \eqref{eq:limiting_rho} reduces to 1. 
On the other hand when $\theta^*$ is an interior point ($m=0$), the KKT factor disappears and only the Hessian factor remains. When $H$ is the log-likelihood function of LDA, we get a simpler expression for $\rho^2$. Instead of the reduced Hessians, it suffices to look at the ratio of the full Hessians. 

\begin{cor}[Limiting Correlation: LDA Interior Case]
\label{cor:CV_correlation_interior} 

Suppose $H:\mathbb{R}^K\to\mathbb{R}$
is the log-likelihood function of LDA defined in \eqref{eq:entropy} with an interior maximizer $\theta^*$.
Then as $n\to\infty$ 
\begin{align}
\label{eq:limiting_rho_interior}
\rho^{2}_n \to  \rho^2:= 
\frac{|\det Q|^{1/2}\; |\det R|^{1/2}}
     {\left|\det\left(\frac{Q+R}{2}\right)\right|}\,
\end{align}
where $Q=\nabla^2 H(\theta^*)$ and $R= -\nabla^2\text{KL}(\theta^*|\theta)\big|_{\theta=\theta^*}$.
\end{cor}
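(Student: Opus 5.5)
The plan is to read the corollary off Theorem~\ref{thm:CV_correlation_boundary} with $m=0$, and then show that for the LDA log-likelihood the ratio of reduced-Hessian determinants equals the same ratio for the full $K\times K$ Hessians. First I confirm that (A1)--(A4) hold. As discussed in Section~\ref{sec:LDA}, $H$ is smooth near the simplex because $\theta^\top\phi(v)>0$, and $\nabla^2 H\prec 0$ when the topic matrix has full rank. This gives uniqueness of $\theta^*$ and (A4). Condition (A3) is vacuous when $m=0$. With $m=0$ the product over KKT multipliers in \eqref{eq:limiting_rho} is empty, so
\[
\rho^2=\frac{|\det Q_U|^{1/2}|\det R_U|^{1/2}}{|\det(\tfrac{Q_U+R_U}{2})|},\qquad Q_U=U^\top A U,\quad R_U=U^\top B U,
\]
where $A=\nabla^2H(\theta^*)$ and $B=-\nabla^2\KL(\theta^*|\theta)|_{\theta=\theta^*}$.

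The key structural fact is that $A$ and $B$ act identically on $\theta^*$. From \eqref{eq:hess},
\[
A\theta^*=-\sum_v p_v\frac{\phi(v)}{\theta^{*\top}\phi(v)}=-\nabla H(\theta^*)=-\mathbf{1}_K,
\]
where the last equality uses \eqref{eq:LDA_gradient} for an interior $\theta^*$. Since $\nabla^2_\theta\KL(\theta^*|\theta)|_{\theta^*}=\mathrm{diag}(1/\theta^*_k)$, we also have $B=-\mathrm{diag}(1/\theta_k^*)$ and so $B\theta^*=-\mathbf{1}_K$. Because $\mathbf{1}^\top\theta^*=1$, it follows that $\theta^{*\top}A\theta^*=\theta^{*\top}B\theta^*=-1$.

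Next I change basis with $P=[\,U\ \ \theta^*\,]\in\mathbb{R}^{K\times K}$. This matrix is invertible: the columns of $U$ satisfy $\mathbf{1}^\top U=0$ while $\mathbf{1}^\top\theta^*=1$, so $\theta^*$ is not in the span of $U$. Using $U^\top A\theta^*=-U^\top\mathbf{1}=0$, I get
\[
P^\top A P=\begin{pmatrix} Q_U & 0\\ 0 & -1\end{pmatrix},
\]
and the same block form holds for $B$ (with $R_U$) and for $(A+B)/2$ (with $(Q_U+R_U)/2$). Therefore $|\det A|=(\det P)^2|\det Q_U|$, with the analogous identities for $B$ and $(A+B)/2$. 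In the ratio defining $\rho^2$, the factors $(\det P)^2$ and the corner entries $-1$ cancel, since the numerator has total power one in $(\det P)^2$, matching the denominator. The full-Hessian expression in \eqref{eq:limiting_rho_interior} is then exactly the $\rho^2$ of Theorem~\ref{thm:CV_correlation_boundary}.

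The main obstacle, though a mild one, is the identity $A\theta^*=B\theta^*=-\mathbf{1}$. It relies specifically on the LDA property $\mu=1$ and on interiority, which together give $\nabla H(\theta^*)=\mathbf{1}$. This identity is what makes the off-diagonal blocks vanish and the corner entries coincide. If the corner entries differed between $A$ and $B$, a residual factor $\sqrt{ab}/((a+b)/2)$ would remain in the ratio. Everything else is bookkeeping with determinants.
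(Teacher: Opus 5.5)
Your proposal is correct and follows essentially the paper's route: you specialize Theorem~\ref{thm:CV_correlation_boundary} to $m=0$ and use the LDA identity $\nabla^2H(\theta^*)\theta^*=-\mathbf{1}_K$, together with its $\KL$ analogue, to pass from reduced to full determinants. The paper augments $U$ by $\mathbf{1}_K$ and evaluates the Schur-complement factor $\mathbf{1}_K^\top M^{-1}\mathbf{1}_K$, which equals $1$, $1$ and $1/2$ for $H$, $\KL$ and $H+\widehat H$, whereas your augmentation by $\theta^*$ block-diagonalizes $P^\top MP$ directly and handles $(Q+R)/2$ with no leftover factor; one harmless slip is that congruence gives $|\det M|=|\det Q_U|/(\det P)^2$ rather than $(\det P)^2|\det Q_U|$, and in any case $\det P=\mathbf{1}^\top\theta^*=1$ here.
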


\begin{proof} 
See proof in Section~\ref{sec:appendix-proof-CV_correlation_interior}.
\end{proof}

In the setting of Corollary \ref{cor:CV_correlation_interior}, the matrix $R$ (the Hessian of $\KL$
evaluated at $\theta=\theta^*$) takes the form
\begin{equation*}
  \nabla^2_\theta \mathrm{KL}(\theta^*|\theta)\big|_{\theta=\theta^*}
  = \sum_{i=1}^{K}
  \frac{1}{\theta_i^*} e_i e_i^\top,
  \qquad e_i \in \mathbb{R}^K .
\end{equation*}
In particular, $R$ is a diagonal matrix with nonzero entries on the support of $\theta^*$. 
Thus, for $Q$ to be close to $R$, the Hessian of $H$ at the maximizer must be approximately diagonal, with its dominant contribution along the diagonal entries. 
This indicates that the proposed control variate is particularly effective in problems where the Hessian at the maximizer is sparse or nearly diagonal. We will see that this can naturally arise in the case of LDA.

\vspace{1em}

\subsection{LDA: Almost Mutually Orthogonal Case}
\label{subsection: mutually_orthogonal_case}

In LDA, a topic vector $\phi_k$ typically puts most of its mass on a small subset of words in the vocabulary, which comprise the core words in the topic. Different topics concentrate mass on different subsets. These properties make the topic vectors nearly orthogonal. We will formulate this property precisely and show that it leads to a limiting correlation close to 1.

We assume that each element of the vocabulary has a most-likely topic, in the following sense:
\begin{assumplistB}
    \item For every $v\in V$, the probability $\phi_k(v)$ is maximized by a unique topic index, defined by
$$ k(v)=\arg\max_k \phi_k (v).$$
\end{assumplistB}
If the topic vectors $\phi_1,\dots,\phi_K$ are sampled independently from a Dirichlet distribution, (B1) holds almost surely. Even if $\phi$ is estimated deterministically (e.g., via variational inference), (B1) is not restrictive because individual words tends to be important for a small number of topics.

We call an LDA topic model $\phi$ \textit{$\varepsilon$-sparse} if
\begin{equation}
\label{eq:sparsity}
\varepsilon = \max_v \frac{\sum_{j\neq k(v)} \phi_j (v)}{\phi_{k(v)}(v)}. 
\end{equation}
The ratio measures the total mass assigned to non-dominant topics relative to the mass on its preferred topic $k(v)$ for each vocabulary element $v$. 
Thus, smaller values of $\varepsilon$ correspond to stronger concentration on $k(v)$ for every $v$, and the vectors $\phi(v)$ approach 0--1 vectors as $\varepsilon \to 0$.

Next, we define a function that will be used to define a threshold for $\varepsilon$. Suppose $\theta^*$ is an interior point. Define 
\begin{equation}
\label{eq:F_epsilon}
F(\varepsilon):=4 (C_{\max}^{(2)})^2 K+K(K-1)\left(2C_{\max}^{(1)}+C_{\max}^{(2)}\varepsilon\right)^{2}  
\end{equation}
where
\begin{equation}
         \label{eq:constants_max}
         C_{\max}^{(1)}:=\frac{(\theta_{\max}^{*})^{1/2}}{(\theta_{\min}^{*})^{3/2}}, \quad C_{\max}^{(2)}:=\frac{\theta_{\max}^{*}}{(\theta_{\min}^{*})^{2}},\quad \theta_{\min} = \min_{k}\theta_k^{*}, \quad \theta_{\max} = \max_{k}\theta_k^{*}. 
\end{equation}
We note that $\varepsilon \sqrt{F(\varepsilon)}$ is strictly increasing for $\varepsilon>0$ and vanishes at $\varepsilon=0$. Let $\varepsilon_0>0$ be the unique root of $\varepsilon \sqrt{F(\varepsilon)}=1/2$.

We can show that the squared correlation $\rho^2$ is close to $1$ if the topic vectors $\phi$ are $\varepsilon$-sparse with $\varepsilon>0$ small enough such that $\varepsilon < \varepsilon_0$. 

\begin{thm}
\label{thm:epsilon_main_result}
Suppose $H:\mathbb{R}^K\to\mathbb{R}$
is the log-likelihood function of LDA defined in \eqref{eq:entropy} with an interior maximizer $\theta^*$. Assume that $K$ (the number of topics) is greater than 3. If the topic vectors $\phi$ satisfies (B1) and are $\varepsilon$-sparse with
$\varepsilon < \varepsilon_0$, 
then the squared correlation in \eqref{eq:limiting_rho_interior} satisfies the lower bound
\begin{equation}
\label{eq:epsilon_main_result}
     \rho^2 \ge \exp\left(-C^2\varepsilon^2\right), 
\end{equation}
where $C>0$ is a constant that depends on $\theta^*$, $K$, and $\varepsilon_0$. 
\end{thm}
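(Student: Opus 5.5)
We start from Corollary~\ref{cor:CV_correlation_interior}, which gives $\rho^2=|\det Q|^{1/2}|\det R|^{1/2}/|\det((Q+R)/2)|$ with $Q=\nabla^2H(\theta^*)$ and $R=-\mathrm{diag}(1/\theta_i^*)$. Writing $D=\mathrm{diag}(\theta^*)$, we symmetrize by $D^{1/2}$: set $M=-D^{1/2}QD^{1/2}$, so that $-D^{1/2}RD^{1/2}=I$. Since the congruence factors cancel in the ratio, $\rho^2=\det(M)^{1/2}/\det((I+M)/2)$. If $\mu_1,\dots,\mu_K>0$ are the eigenvalues of $M$, this becomes
\[
\rho^2=\prod_k \frac{2\sqrt{\mu_k}}{1+\mu_k},\qquad \log\rho^2=-\sum_k\log\frac{1+\mu_k}{2\sqrt{\mu_k}}.
\]
For $\mu=1+x$ one has $\log\frac{1+\mu}{2\sqrt\mu}\le c\,x^2$ whenever $|x|\le 1/2$, with $c$ absolute; for instance $c=1/4$ after adjusting for the cubic remainder. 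Hence $\rho^2\ge\exp(-c\|M-I\|_F^2)$ as soon as $\|M-I\|_{op}\le 1/2$. The proof therefore reduces to showing $\|M-I\|_F\le \varepsilon\sqrt{F(\varepsilon)}$. The hypothesis $\varepsilon<\varepsilon_0$ then guarantees the $1/2$ bound, and we obtain the result with $C^2=c\,F(\varepsilon_0)$, using that $F$ is increasing.

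The core step is the entrywise estimate of $M=\sum_v p_v\,\psi(v)\psi(v)^\top$, where $\psi(v)_k=\sqrt{\theta_k^*}\,\phi_k(v)/(\theta^{*\top}\phi(v))$. Write $s_v=\theta^{*\top}\phi(v)$ and $k=k(v)$. By $\varepsilon$-sparsity, $\phi_j(v)\le\varepsilon\phi_k(v)$ for $j\ne k$. This gives $s_v\ge\theta_k^*\phi_k(v)$ and $s_v\le\theta_k^*\phi_k(v)(1+\varepsilon\theta^*_{\max}/\theta^*_{\min})$.

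\emph{Off-diagonal entries.} For $i\neq j$, at most one of $i,j$ equals $k(v)$, so each term $\psi_i\psi_j$ carries at least one factor $\varepsilon$. Bounding ratios by $\theta^*_{\max}/\theta^*_{\min}$ powers yields $|M_{ij}|\le\varepsilon(2C^{(1)}_{\max}+C^{(2)}_{\max}\varepsilon)$. The term with one index equal to $k(v)$ contributes the $C^{(1)}$ part, and the term with neither index equal to $k(v)$ contributes the $\varepsilon^2$, $C^{(2)}$ part.

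\emph{Diagonal entries.} Here we use the KKT identity \eqref{eq:LDA_gradient}, namely $\nabla H(\theta^*)_i=\sum_v p_v\phi_i(v)/s_v=1$ in the interior case. This gives
\[
1-M_{ii}=\sum_v p_v\frac{\theta_i^*\phi_i(v)}{s_v}\Bigl(1-\frac{\theta_i^*\phi_i(v)}{s_v}\Bigr)\cdot\frac{1}{\theta_i^*}\ \text{(after rescaling)}.
\]
For $v$ with $k(v)=i$ the bracket is $O(\varepsilon)$ with constant $C^{(2)}_{\max}$. For the other $v$, the prefactor $\theta_i^*\phi_i(v)/s_v$ is itself $O(\varepsilon)$. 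Combined with $\sum_v p_v\phi_i(v)/s_v=1$, this gives $|M_{ii}-1|\le 2C^{(2)}_{\max}\varepsilon$.

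Summing squares then gives $\|M-I\|_F^2\le\varepsilon^2\bigl(4(C^{(2)}_{\max})^2K+K(K-1)(2C^{(1)}_{\max}+C^{(2)}_{\max}\varepsilon)^2\bigr)=\varepsilon^2F(\varepsilon)$, matching \eqref{eq:F_epsilon}.

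The main obstacle is the diagonal bound. It requires using the first-order condition to cancel the $O(1)$ part of $M_{ii}$, and bookkeeping which words $v$ are dominated by topic $i$ so that the constants come out exactly as $C^{(2)}_{\max}$. I would verify this carefully first. The off-diagonal bound and the final eigenvalue inequality are routine. The assumption $K>3$ presumably only enters through constant-matching, which I would check at the end.
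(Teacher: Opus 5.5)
Your proposal is correct and is essentially the paper's proof. The congruence by $D^{1/2}$ is exactly the paper's $Z=X^{-1/2}YX^{-1/2}$, and the eigenvalue form $\log\rho^2=-\sum_k g(\mu_k)$ and the entrywise bounds giving $\|Z-I\|_F^2\le\varepsilon^2F(\varepsilon)$ (with $\nabla H(\theta^*)=\mathbf{1}_K$ cancelling the diagonal) match it term by term. The only difference is cosmetic: you use a local bound $g(1+x)\le x^2/4$ for $|x|\le 1/2$, valid once $\|Z-I\|_2\le 1/2$, whereas the paper uses the global bound $g\le f/8$ together with $\lambda_{\min}(Z)\ge 1-\varepsilon\sqrt{F(\varepsilon)}$ from Weyl's inequality; both yield $\rho^2\ge\exp(-F(\varepsilon_0)\varepsilon^2/4)$.
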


\begin{proof}
    See proof Section in \ref{sec:appendix-proof-epsilon_main_result}. 
\end{proof}

\begin{remark}
The choice of $1/2$ in the definition of $\varepsilon_0$ is arbitrary; 
any fixed $\nu \in (0,1)$ yields an analogous bound with 
$C = \sqrt{F(\varepsilon_0)/(1-\nu)}$. 
Larger $\nu$ relaxes the sparsity requirement but increases $C$.
 \end{remark}

The proof, given in Section~\ref{sec:appendix-proof-epsilon_main_result}, 
relies on properties of the $\log\det$ function to quantify how the ratio between the geometric and arithmetic means of the determinants of the Hessians approaches $1$ under $\varepsilon$-sparsity.


\begin{figure}[t]
\centering

    \includegraphics[height = 3.3in]{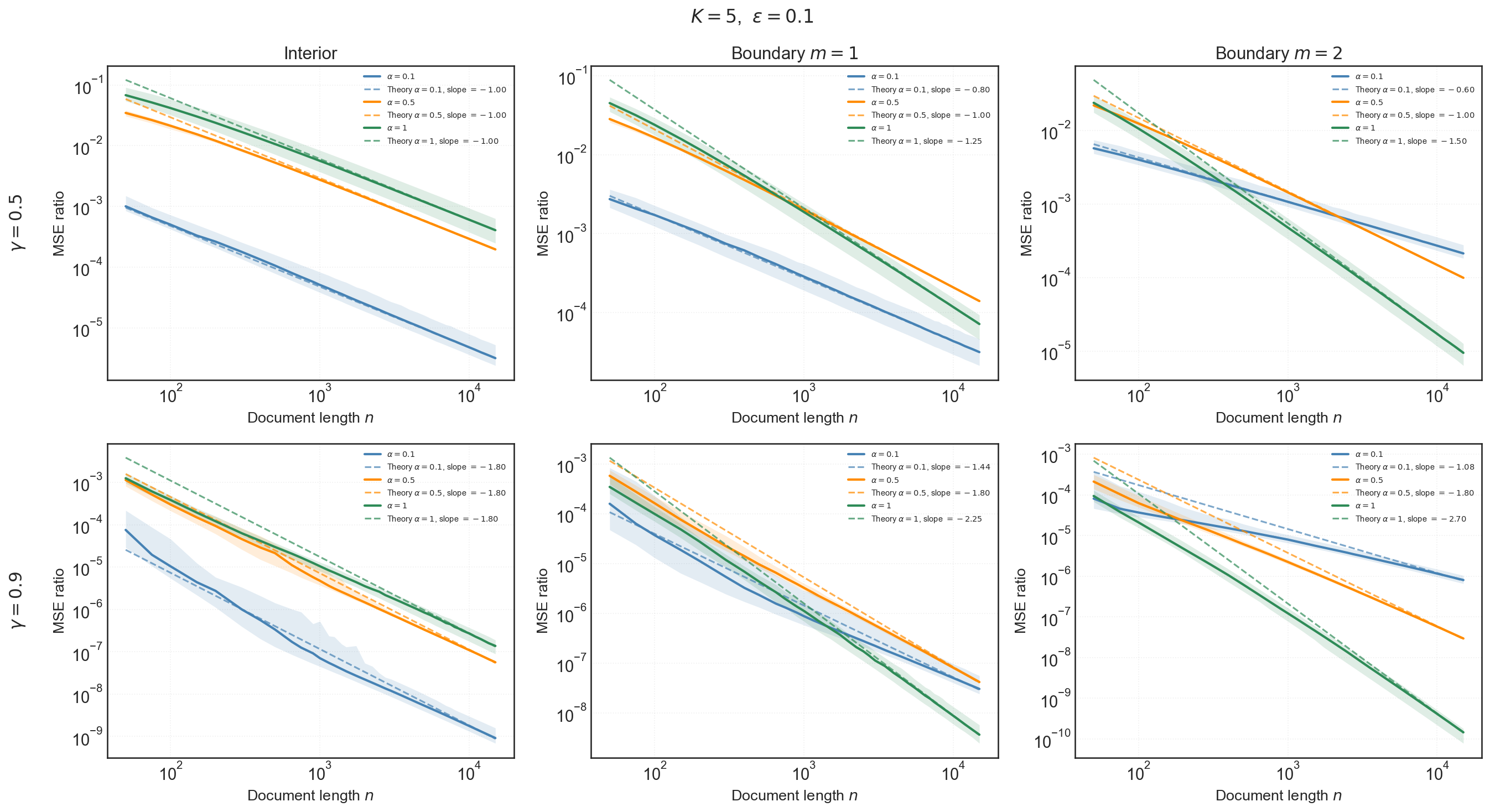}    
\caption{\label{fig:imp_sampling_epsilon_0p1_K_5} 
    $\text{MSE}(\hat{p}_{\text{IS}})/\text{MSE}(\hat{p}_{\text{MC}})$ versus $n$ on a log-log scale for $K=5$ topics and $\epsilon=0.1$. 
Top: $\gamma=0.5$. Bottom: $\gamma=0.9$. 
Results are computed over 100 independent instances of $(\phi,\theta^*,p)$; solid lines denote the empirical median and shaded regions indicate the interquartile range.
Dashed lines represent the theoretical asymptotic reduction rate.
}
\end{figure}

\section{Numerical Experiments}
\label{sec:numerical}

In this section, we present numerical experiments designed to illustrate the performance of the proposed estimators. We first study the estimators in a synthetic setting, and then evaluate them on a dataset of news articles.

\subsection{Synthetic Data Experiments}
\label{sec:synthetic_experiments}

We evaluate the proposed estimators using synthetic instances with vocabulary size $V = 1000$ and topic size $K=5$. In all experiments, the topic-word distributions are sampled independently as $\phi_k \sim \text{Dir}_{\beta}$ with $\beta = 0.1\,\mathbf{1}_V$.

To generate an instance of an interior case, we sample $\theta_{\text{true}} \sim \mathrm{Dir}_{\alpha_\text{true}}$ with $\alpha_{\text{true}} = \,\mathbf{1}_K$, which lies in the interior of the simplex almost surely. We then set $p_v = \sum_{k=1}^K \theta_{\text{true},k} \phi_k (v)$, so that the maximizer satisfies $\theta^*=\theta_{\text{true}}$. For boundary cases, we prescribe a boundary point $\theta^*$ on a face of the simplex with $m$ zero components for each $m\in \{ 0,1,2\}$ and construct $p=(p_v)$ such that $\theta^*$ is the maximizer of $H$  and the associated KKT multipliers satisfy strict complementarity. The details of this sampling method are in Supplemental Material \ref{sec:simulation_boundary_cases}.

Each such construction yields a single instance of $(\phi,p,\theta^*)$. In the experiments below, we generate multiple independent instances according to this procedure to assess variability across problem configurations. 

\subsubsection{Importance Sampling Estimator}

  Preliminary numerical results limited to the interior case $(m=0)$ with $\gamma=0.5$ and $\alpha=0.1$ were reported in \cite{glee}. Here we provide more comprehensive numerical experiments comparing $\gamma\in \{0.5,0.9\}$, $m\in\{0,1,2\}$, and Dirichlet prior parameters $\alpha\in\{0.1,0.5,1\}\mathbf{1}_K$ to illustrate the more general results proved here. For each configuration $(K,m)$, we generate $100$ independent instances of $(\phi,p,\theta^*)$. 
For importance sampling, we set $\epsilon=0.1$ and apply truncation coordinate-wise: we retain a sample $\theta$ if $\theta_i \ge \epsilon \theta_i^*$ for all $i$ with $\theta^*_i>0$.
  We consider document lengths $n$, ranging from $50$ to $1.5\times 10^4$. For the standard Monte Carlo estimator, its MSE equals its variance. Additional details of the implementation are discussed in Supplemental Material~\ref{sec:degeneracy_plainMC}. For estimating the moments of the standard MC estimator and IS estimator we use $N=10^6$ samples.

Since $\hat{p}_{\text{IS}}$ has a small bias, we report the log MSE ratio $\log (\text{MSE}(\hat{p}_{\text{IS}}))/(\text{MSE}(\hat{p}_{\text{MC}}))$. For each $n$, the results are computed over 100 instances of $(\phi,\theta^*, p)$; we plot the empirical medians (solid lines) with interquartile-range bands
(Figure~\ref{fig:imp_sampling_epsilon_0p1_K_5}). Consistent with Theorem~\ref{thm:boundary-mse-is}, the decay of the MSE ratio depends on the number of zero components $m$ and the corresponding Dirichlet parameters $\alpha_i$. The slope of the dashed line corresponds to the exponent in Theorem ~\ref{thm:boundary-mse-is}, i.e., $-\gamma\big((K-1-m)/2+\sum_{i=1}^m \alpha_i\big)$. 
To facilitate comparison, the intercept for each dashed line is selected by fitting the line to the simulation values using the five largest values of $n$.

As expected, we see greater MSE reduction at $\gamma=0.9$ than at $\gamma=0.5$. We also confirm that $\text{Bias}^2(\hat{p}_{\text{IS}})/\text{MSE}(\hat{p}_{\text{MC}})$ decays rapidly with $n$, in agreement with
  the $O(e^{-2n\delta})$ upper bound in Lemma~\ref{lem:boundary-bias-is} (see Figure~\ref{fig:bias_and_gamma_1}). With $\gamma=1$, we have observed that the MSE ratio diverges as $n$ increases and becomes unstable even at small $n$. 
Overall, the proposed IS estimator achieves substantial variance reduction with negligible bias relative to the standard MC estimator when $\gamma<1$.

\begin{figure}[t]
\centering
  \includegraphics[width=\linewidth]
{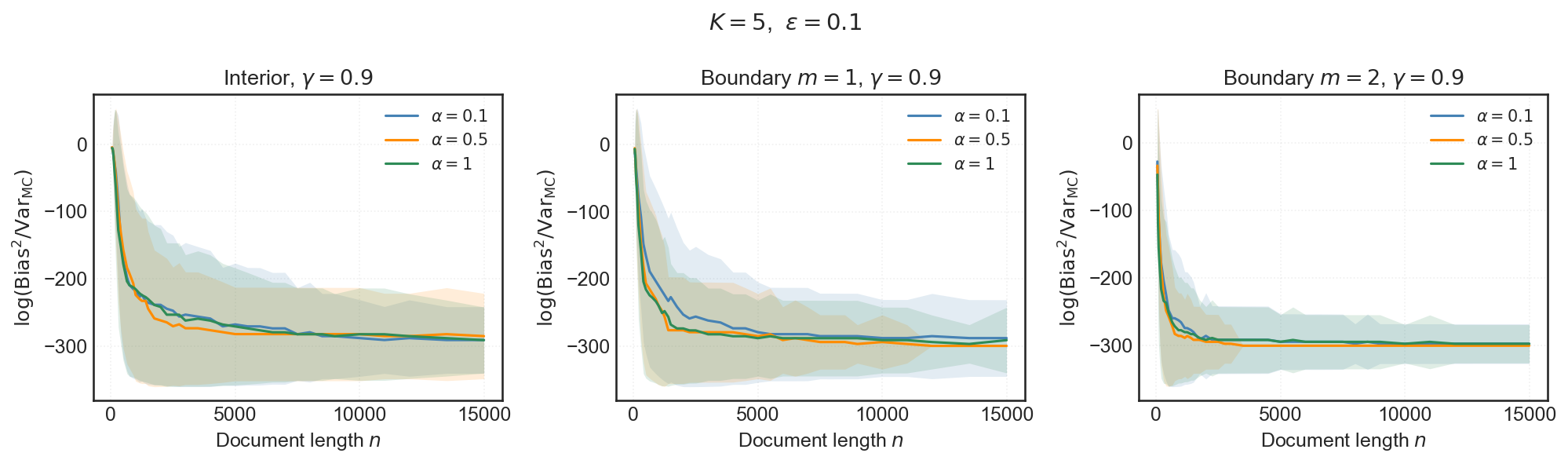}
\hfill
\caption{\label{fig:bias_and_gamma_1}
Plot of $\log(\text{Bias}(\hat p_{\text{IS}})^2/\text{MSE}(\hat p_{\text{MC}}))$ versus $n$ for $K=5$, $\epsilon=0.1$, and $\gamma=0.9$, for $\alpha \in \{0.1,0.5,1\}\mathbf{1}_K$. The solid lines indicate the average over 100 instances and shaded bands show $\pm 1$ standard deviation.  The normalized bias decays rapidly with $n$, consistent with
  the $O(e^{-2n\delta})$ upper bound of Lemma~\ref{lem:boundary-bias-is}.  Instances with zero truncation bias (i.e., no samples were truncated)
  are assigned a floor of $10^{-300}$. By $n = 1000$ approximately $70$--$95\%$ of instances have zero bias, rising to $95$--$100\%$       by $n = 15{,}000$. }

\end{figure}

\subsubsection{Control Variate Estimator}

    As before, for each configuration $(K,m)$, we generate $100$ independent instances of $(\phi,p,\theta^*)$.
    For each instance, we vary the document length $n$ over the range $n=50$ to $1.5\times 10^4$ and consider Dirichlet priors $\alpha\in \{ 0.1,2\}\mathbf{1}_K$. For each instance,
we compute the log ratio    $\log(1 - \hat{\rho}_n^2) - \log(1 - \rho^2)$ between the log variance reduction and its theoretical limit. 
Results are summarized across the $100$ independent instances by plotting the median and interquartile range (see Figure~\ref{fig:oracle_theory_log_ratio}). 
The results illustrate the convergence of the variance reduction to its theoretical limit in Theorem \ref{thm:CV_correlation_boundary}.

\begin{figure}[t]
\centering

\includegraphics[width=0.48\linewidth]
{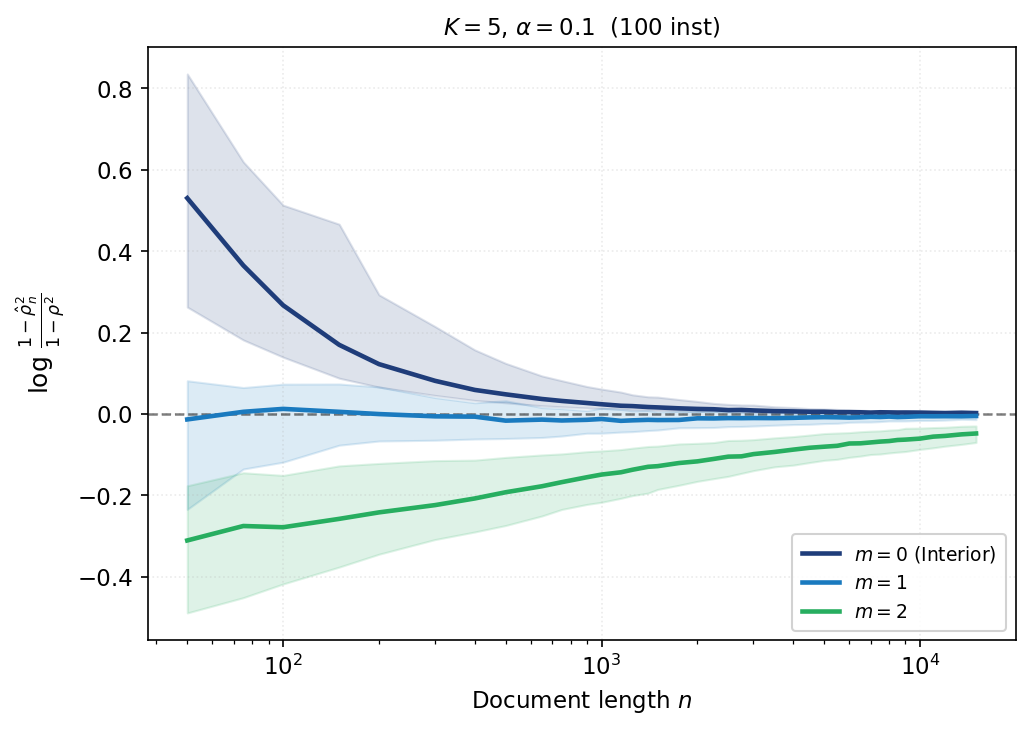}
\includegraphics[width=0.48\linewidth]
{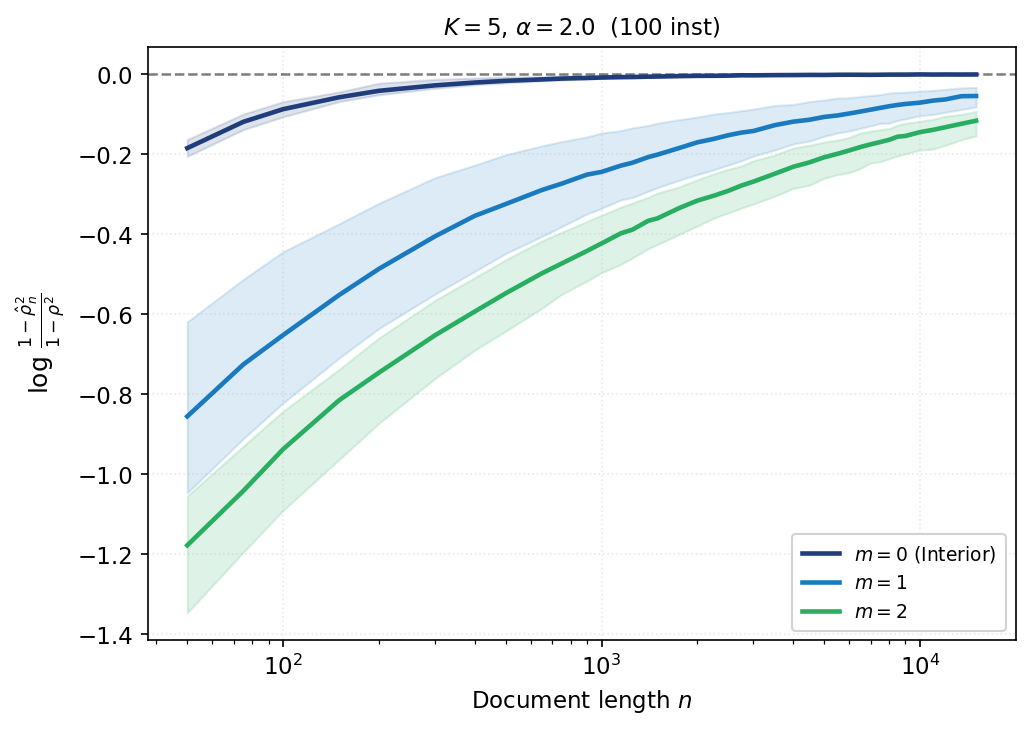}

\caption{\label{fig:oracle_theory_log_ratio}
Log-ratio of the variance reduction to its theoretical limit versus $n$ for $K=5$ topics. Left: $\alpha=0.1 \mathbf{1}_K$. Right: $\alpha =2 \mathbf{1}_K$. Results are computed over 100 independent instances of $(\phi,\theta^*,p)$; solid lines denote the empirical medians and shaded regions indicate
the interquartile range.
The dashed line at $0$ indicates agreement with theory.}

\end{figure}

  We also test Theorem~\ref{thm:epsilon_main_result} by examining how the empirical $\hat{\rho}^2_n$ varies with sparsity level $\varepsilon$. For each $\varepsilon$ in the range $10^{-7}$ to $5$, we generate  
  200 synthetic topic matrices $\phi$ with $K=10$ topics and $V=1000$ vocabulary, constructing each $\phi$ so that its sparsity level matches the target. For each $\phi$, we draw 20 synthetic documents of length 
  $n=1000$ from the LDA generative model, retaining only those with interior maximizer $\theta^*$, and estimate $\rho^2_n$ via Monte Carlo with $10^5$ draws from        
  $\text{Dir}_\alpha$ where $\alpha=0.1\mathbf{1}_K$. Figure~\ref{fig:sparsity_test} shows that $\hat{\rho}^2_n \to 1$ as $\varepsilon \to 0$, consistent with the lower bound in Theorem~\ref{thm:epsilon_main_result}.

\begin{figure}[t]
\centering
\includegraphics[height = 2.4in]
{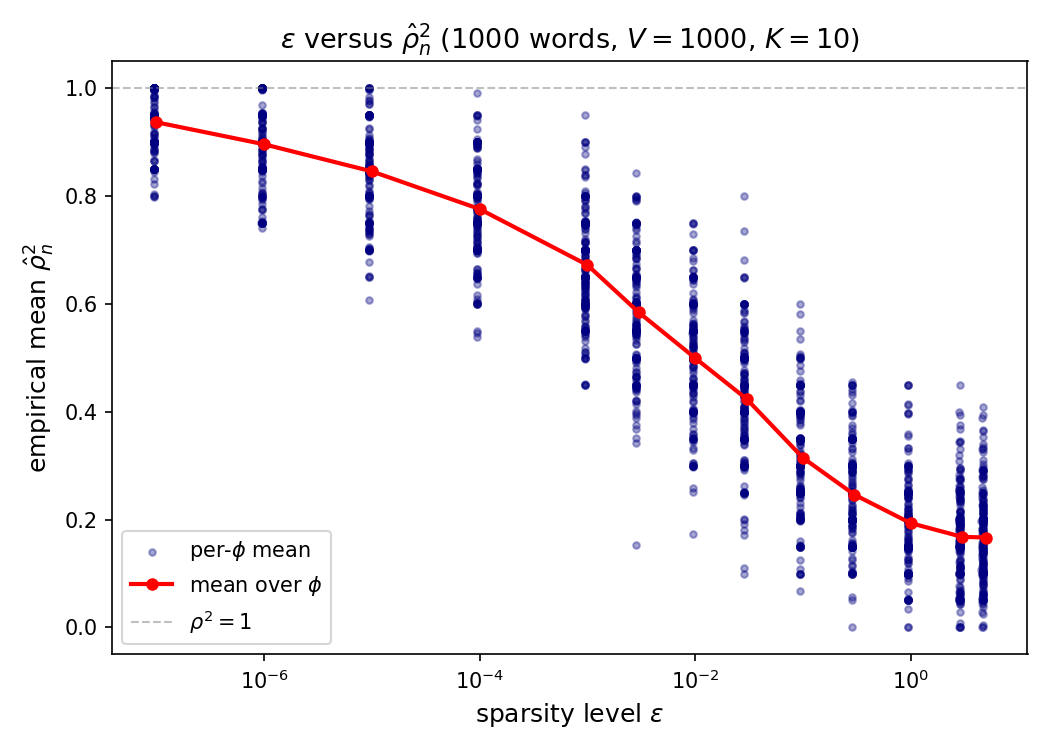}

\caption{\label{fig:sparsity_test}
A scatter plot of the empirical mean $\hat{\rho}^2_n$ across 200 controlled $\phi$ matrices per sparsity level $\varepsilon$, with $K=10$, $V=1000$, and $n=1000$ words. The red line connects the group means. As $\varepsilon \to 0$ (sparser $\phi$), $\hat{\rho}^2_n$ approaches 1, consistent with the theoretical prediction in Theorem~\ref{thm:epsilon_main_result}.
}

\end{figure}

\subsection{Real Dataset: Reuters Corpus}

We next evaluate our estimators on the Reuters-21578 corpus~\citep{lewis1997reuters21578}, a standard text classification dataset, accessed through the Natural Language Toolkit (NLTK) interface~\citep{BirdKleinLoper2009}. In the NLTK distribution, the dataset contains 10,788 documents (about 1.3 million words), partitioned into train and test datasets. The vocabulary size $V$ is $14,838$ words. For our
model, we choose $K = 10$ topics and use variational inference to extract the topic-word distributions $\phi$
from 7,770 training documents, with a default prior of $\alpha = 0.1 \cdot \mathbf{1}_{K}$. Then, for each of the remaining 3,018 test documents, we calculate the empirical word frequencies $p_{v}$ and evaluate the MSE ratio between our estimators and  the standard Monte Carlo estimator using $N=10^5$ samples each.

This setting does not strictly fall within the scope of our theoretical analysis because here we cannot vary $n$ while holding $p_v$ fixed; we simply have documents of different lengths. Also, we cannot guarantee (A1).
We can nevertheless check for error reduction through the MSE ratio. 

The left panel in Figure~\ref{fig:hist_imp_sampling_reuters} 
shows the distribution of log-MSE ratios across the test documents for the proposed importance sampling estimator with $\gamma=0.9$ and $\epsilon=0.1$. The results show substantial error reduction across the corpus. For 99.7\% of test documents, the importance sampling estimator achieves variance reduction over the standard MC estimator. We observe one outlier document with high MSE ratio and short length ($n=32$), which is not depicted in the figure. Furthermore, 95\% of test documents show a log-MSE ratio less than $-1.63$ (equivalent to an MSE ratio of $0.023$), with a median log-MSE ratio of $-2.94$ (equivalent to an MSE ratio of $0.001$). In other words, for at least half of the documents, the estimator achieves more than a $863\times$ reduction in MSE.  We also observe a statistically significant negative correlation between document length and MSE ratio (Spearman $\rho = -0.458$, $p$-value $= 3.3  
  \times 10^{-156}$), indicating that importance sampling gives larger variance reductions for longer documents.  

The importance sampling estimator requires the computation of the maximizer $\theta^*$, which we use the recursive algorithm in \cite{cover1984algorithm} to calculate. We note that the overhead incurred by this step is negligible. 
Running the algorithm for $100$ steps takes 21.0 ms on an AMD EPYC 7B12 CPU (2.25 GHz), which is approximately $290$ times faster than sampling $N = 10^{6}$ points from the Dirichlet distribution and computing $\frac{1}{N}\sum_{i=1}^{N} e^{nH(\theta_{i})}$ (8.99 s).
Therefore it has no practical impact on the overall runtime.

Next we turn to the control variate estimator.
The right panel of Figure~\ref{fig:hist_imp_sampling_reuters}
shows the distribution of log-MSE ratios for $\hat{p}_{\text{CV}}.$
Improvement is again observed across all documents, although
the variance reduction is now smaller, with a median log MSE ratio of $-0.19$ (equivalent to an MSE ratio of 0.65).

\begin{figure}[t]
\centering
\includegraphics[width=0.48\linewidth]{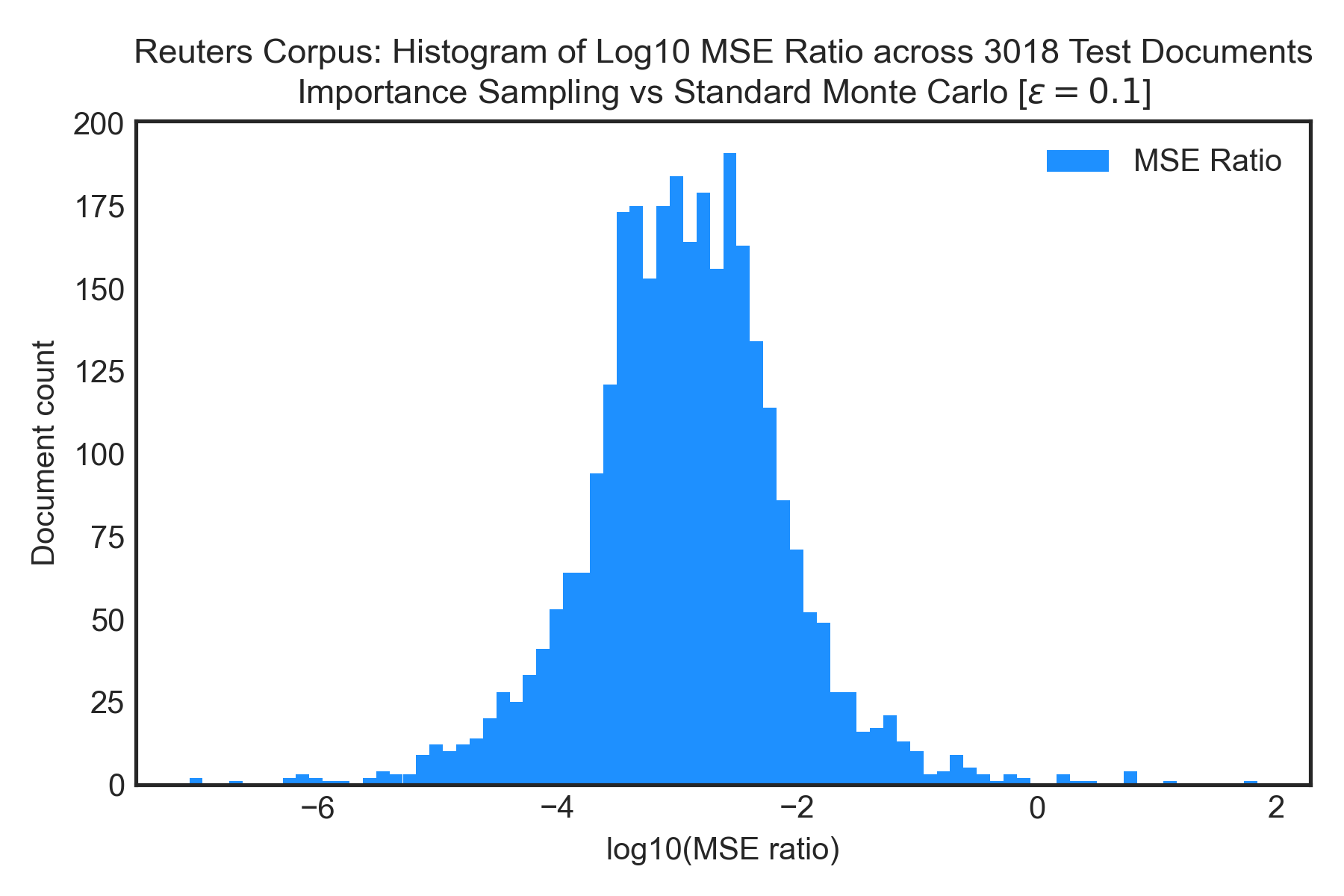}
\includegraphics[width=0.48\linewidth]{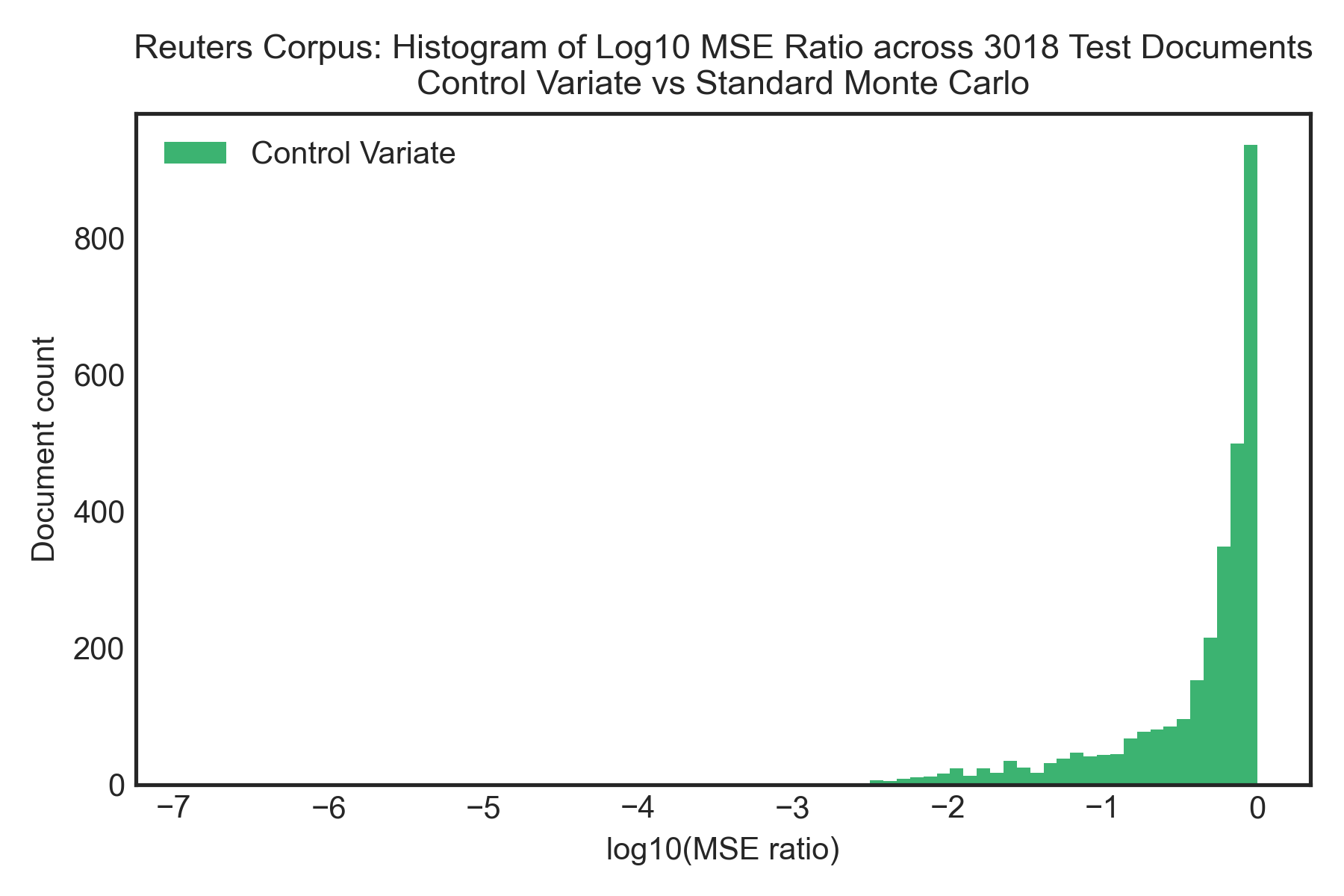}
\hfill
    
    \caption{\label{fig:hist_imp_sampling_reuters} Reuters-21578 text corpus. Left: histogram of the log MSE ratio $\log(\text{MSE}(\hat{p}_{\text{IS}}^{\gamma})/\text{MSE}(\hat{p}_{\text{MC}}))$ of the importance sampling estimator with $\gamma=0.9$ and $\epsilon=0.1$ across 3,018 
    test documents using $K=10$ topics. Right: histogram of the log MSE ratio of the control variate estimator $\log(\text{MSE}(\hat{p}_{\text{CV}})/\text{MSE}(\hat{p}_{\text{MC}}))$ across the same test documents. 
    }
    
\end{figure}

\section{Conclusion}
\label{sec:conclusion}

In this work, we study variance reduction techniques for estimating expectations of the form $\mathbb{E}[\exp(nH(\theta))]$ under Dirichlet distributions. We propose an importance sampling and control variate estimator and analyze their statistical efficiency for large $n$. Our analysis is based on novel extensions of the Laplace method for sparse maximizers $\theta^*$ that illustrate how sparsity influences the constant and polynomial terms in the asymptotics, which inform the asymptotic mean-squared error of the estimators.

Our analysis shows that these estimators are capable of substantial variance reduction compared to the plain Monte Carlo estimator for large $n$. The importance sampling estimator reduces the relative mean-squared error from $\Theta(n^{c})$ to $\Theta(n^{(1-\gamma)c})$ for any $\gamma \in (0,1)$ (where $c>0$ is a problem-specific exponent), achieving near-bounded relative error. We show that the control-variate estimator, guaranteed to achieve variance reduction, reduces the MSE by constant factor based on the Hessian of $H$.

\appendix
\label{appendix}
\section{Appendix}

\subsection{The Projected Simplex}
\label{subsec:appendix_assumptions}

In light of the discussion in Section \ref{subsec: definitions}, the expectation with respect to $\text{Dir}_\alpha$ can be written as \[\mathbb{E}_{\text{Dir}_{\alpha}}[f(\theta)] =\int_{\tilde{\Delta}_{K-1} } f(T(y)) \text{Dir}_{\alpha}^{(K-1)}({y}) \, d{y}\]
where 
$T:\tilde{\Delta}_{K-1} \to {\Delta}_{K-1}$ is the coordinate map (bijective and affine)
\begin{equation} 
\label{eq:T_map} 
T(y):=(y_{1},\dots,y_{K-1},1-\sum_{i=1}^{K-1}y_{i})=Ay+b,\quad A=\left[\begin{array}{c} I_{K-1}\\ -\mathbf{1}^{\top}_{K-1} \end{array}\right],\quad b=e_{K} 
\end{equation} 
where the $\mathbf{1}_{K-1}$ is the column vector of ones in $\mathbb{R}^{K-1}$. 
Clearly,
\begin{equation}
\label{eq:A_kernel}
    A^\top \mathbf{1}_K=0.
\end{equation}

Note that $T$ is an affine bijection between the truncated simplex defined in \eqref{eq:truncated_simplex} and the projected truncated simplex
\begin{equation}
\label{eq:epsilon_projected_simplex}
\tilde{\Delta}_{K-1}^\epsilon 
= \left\{ y \in \tilde{\Delta}_{K-1} \,\middle|\, 
y_i \ge \epsilon \enspace \text{for all } i\le K-1 \text{ such that } \theta^*_i>0 \right\}.
\end{equation}

\subsection{Analysis of Laplace Method on the Simplex}

We state an auxiliary lemma which will be used to establish an integrable bound in the proof of the Laplace method. 

\subsubsection{Bound around Maximum Lemma}

Under (A1)-(A4), the composed function $H(T(\cdot))$ on the projected simplex can be bounded above by a decreasing envelope centered around the maximizer in the projected simplex. If $\theta^*$ lies at the boundary of the simplex, this envelope is a linear quadratic expression that decays linearly in the first $m$ (active) coordinates and quadratically in the remaining coordinates. This follows from the first and second-order optimality conditions at $\theta^*$ and serves as the upper bound required for the Laplace method on the simplex.  

\begin{lem}
\label{lemma: bounds on maximum}
Suppose $H:\Delta_{K}\to\mathbb{R}$ satisfies assumptions (A1)-(A4) at the maximizer $\theta^*$. Let $m$ be the number of zero components of $\theta^*$ and let $T:\tilde{\Delta}_{K-1}\to \Delta_{K-1}$ be the coordinate map defined in \eqref{eq:T_map}.
Then there exists $c_1, c_2>0$ such that for any $y\in \tilde{\Delta}_{K-1}$,
\begin{equation}
\label{eq:H_hess_bound} 
H(T(y)) \le  H(\theta^*) -c_{1} \sum_{i=1}^{m}|y_i-{\theta}^*_i| - c_{2}\sum_{i=m+1}^{K-1}|y_i-{\theta}^*_i|^2.
\end{equation}
\end{lem}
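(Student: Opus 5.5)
The plan is to work in the projected coordinates and set $h(y):=H(T(y))$ on $\tilde{\Delta}_{K-1}$, with $y^*:=(\theta^*_1,\dots,\theta^*_{K-1})$. Note that $y^*_K$-coordinate $\theta^*_K>0$, so the coordinate map $T$ keeps the inactive coordinate $K$ implicit. The bound will be proved locally near $y^*$ by a Taylor expansion and then extended globally by compactness.

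\emph{Local step.} Since $H$ is $C^2$ near $\theta^*$ and $T$ is affine, $h$ is $C^2$ near $y^*$. Its gradient is
\[
\nabla h(y^*)=A^\top\nabla H(\theta^*)=-\sum_{i\le m}\lambda_i e_i,
\]
by \eqref{eq:KKT_conditions} and \eqref{eq:A_kernel}, since $\mu A^\top\mathbf 1_K=0$. Write $d=y-y^*$. Feasibility forces $d_i=y_i\ge0$ for $i\le m$. Split $d=(d_a,d_f)$ into the active coordinates ($i\le m$) and the free coordinates ($m<i\le K-1$). Taylor's theorem gives
\[
h(y)=h(y^*)-\sum_{i\le m}\lambda_i d_i+\tfrac12 d^\top M d+o(|d|^2),
\]
where $M=A^\top\nabla^2H(\theta^*)A$. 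The block $M_{ff}$ equals $U^\top\nabla^2 H(\theta^*)U=Q$, because $A$ restricted to the free coordinates is exactly $U$. By (A4), $Q\prec0$, so $d_f^\top M_{ff}d_f\le -2\kappa|d_f|^2$ for some $\kappa>0$. (If $m=K-1$ there are no free coordinates and this part is vacuous.)

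The cross terms and the active-block terms are controlled as follows. We have $|d_a^\top M_{af}d_f|\le C|d_a||d_f|\le C|d_a|\big(\tfrac{\kappa}{2C}|d_f|^2/|d_a|\big)$, or more simply
\[
|d_a^\top M_{af}d_f|+|d_a^\top M_{aa}d_a|\le C|d_a|\,|d|.
\]
On a small ball $|d|\le r$, this quantity is at most $\tfrac{\lambda_{\min}}{4}\sum_{i\le m}d_i$, using $|d_a|\le\sum_{i\le m} d_i$ (valid since $d_i\ge0$). The $o(|d|^2)$ remainder is at most $\eta(|d_a|^2+|d_f|^2)$, and it is absorbed in the same way. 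Strict complementarity (A3) gives $\lambda_{\min}:=\min_{i\le m}\lambda_i>0$. Combining these, on $|d|\le r$ we obtain
\[
h(y)\le h(y^*)-\tfrac{\lambda_{\min}}{2}\sum_{i\le m}|d_i|-\tfrac{\kappa}{2}\sum_{i>m}|d_i|^2 .
\]

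\emph{Global step.} Outside the ball, \eqref{eq:continuity_relaxation} (which follows from (A1)--(A2) and compactness) gives $h(y)\le h(y^*)-g$ for some $g>0$. On the bounded set $\tilde\Delta_{K-1}$, the envelope $\sum_{i\le m}|d_i|+\sum_{i>m}|d_i|^2$ is bounded by some $D<\infty$. Hence the global bound holds with $c_1=\min(\lambda_{\min}/2,\,g/D)$ and $c_2=\min(\kappa/2,\,g/D)$, since decreasing the constants only weakens the bound inside the ball.

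\emph{Main obstacle.} The delicate part is the local step, specifically absorbing the mixed and active quadratic terms $d_a^\top M d$ into the linear penalty. This works only because the active displacements are one-signed ($d_i\ge0$ on the feasible set) and because $\lambda_i>0$. It is exactly where (A3) is needed, and where restricting the Hessian condition to the critical cone, rather than requiring full definiteness, suffices.
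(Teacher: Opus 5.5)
Your proof is correct, but it takes a different route from the paper.

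\textbf{The paper's route.} The paper argues by contradiction, following Breitung's Lemma~45. It takes a sequence $y^{(i)}$ violating the bound with $c_1=c_2=1/i$ and uses the strict gap property to force $y^{(i)}\to\tilde{\theta}^*$. It then inserts the intermediate point $\bar y^{(i)}=P_m\tilde{\theta}^*+(I_{K-1}-P_m)y^{(i)}$, which zeroes the active coordinates. The increment splits into two pieces:
\begin{itemize}
\item a purely active displacement $y^{(i)}-\bar y^{(i)}$, which the mean value theorem and continuity of $\nabla H$ bound above by $-\tfrac12\sum_{k\le m}\lambda_k y^{(i)}_k$;
\item a purely free displacement $\bar y^{(i)}-\tilde{\theta}^*$, on which the first-order term vanishes, and Taylor's theorem with continuity of $\nabla^2 H$ bounds the rest by half the reduced-Hessian quadratic form.
\end{itemize}
In that decomposition the mixed block $M_{af}$ and the active block $M_{aa}$ never appear explicitly. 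Their effect is absorbed implicitly by evaluating the gradient at an intermediate point of the segment $[\bar y^{(i)},y^{(i)}]$.

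\textbf{Your route.} You expand the full displacement once. You then explicitly absorb $\tfrac12 d_a^\top M_{aa}d_a+d_a^\top M_{af}d_f$, and the active part of the remainder, into the linear penalty via $C|d_a|\,|d|\le Cr\sum_{i\le m}d_i$. This is exactly where the one-signedness of $d_a$ and (A3) enter. Finally, you patch the complement of the ball directly with the strict gap.

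\textbf{Comparison.} Your version buys a direct, non-contradiction argument with explicit constants $c_1=\min(\lambda_{\min}/2,\,g/D)$ and $c_2=\min(\kappa/2,\,g/D)$. It also cleanly separates the local Taylor estimate from the global compactness step, and reads $\kappa$ off $M_{ff}=Q$ in the $y$-coordinates. The paper's version is designed so that cross terms never need to be estimated. Both rest on the same two ingredients: (A3) together with $d_i\ge 0$ for the active coordinates, and (A4) on the free block.

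\textbf{One cosmetic fix.} Delete the first chain $C|d_a||d_f|\le C|d_a|\bigl(\tfrac{\kappa}{2C}|d_f|^2/|d_a|\bigr)$. Its second inequality fails unless $|d_a|\le\tfrac{\kappa}{2C}|d_f|$. The bound you give next, $|d_a^\top M_{af}d_f|+|d_a^\top M_{aa}d_a|\le C|d_a|\,|d|$, is the one your argument actually uses, and it is valid.
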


\begin{remark}
    Linear terms appear in the envelope due to the gradient not being zero at the boundary of the simplex. This lemma is used in the proof of the Laplace theorem (Theorem \ref{thm:boundary-laplace_standard}) to show an integrable upper bound.
\end{remark}

\begin{proof}
    Refer to the Supplementary Material \ref{sec:bound_around_maximum_lemma}
\end{proof}

\subsubsection{ Proof of Theorem \ref{thm:boundary-laplace_standard} (Laplace Method on the Simplex)}
\label{sec:appendix-proof-boundary-laplace_standard}

We first introduce some identities and notation. Recall the equivalent formulation of (A4) discussed in \eqref{eq:reduced_hessian}. 
The matrix $U$ in \eqref{eq:U_explicit}
satisfies the identity
\begin{equation}
\label{eq:U_basis}
U = A V_m \in \mathbb{R}^{K\times (K-1-m)}, 
\qquad
V_m =
\begin{bmatrix}
0_{m\times (K-1-m)} \\
I_{K-1-m}
\end{bmatrix},
\end{equation}
with $A$ the gradient of map $T$ defined in \eqref{eq:T_map}. When $\theta^*$ is an interior point $(m=0)$, $U$ coincides with $A$. When $\theta^*$ is a boundary point $(m>0)$, the matrix $V_m$ extracts columns of $A$ associated with strictly positive components of $\theta^*$, i.e., the last $K-1-m$ columns of $A$. Each extracted column spans a tangent direction in the critical cone at $\theta^*$, which determines the quadratic contribution in the Laplace approximation.

Furthermore for any $1 \le p \le K-1$, define 
\begin{equation}
\label{eq:projection_matrix}
P_p = \sum_{j=1}^p e_j e_j^\top \in \mathbb{R}^{(K-1)\times(K-1)},
\quad e_j \in \mathbb{R}^{K-1}
\end{equation}
which is the orthogonal projection onto the first $p$ coordinate directions. Together with $V_m$, we have the identity
\begin{equation}
    \label{eq:VV_I_P_identity}
    V_m V_m^\top = I_{K-1}-P_m
\end{equation}
which will be used to rewrite an integral later. 

We also introduce the partial Dirichlet factor
\begin{equation} 
\text{Dir}_{{\alpha},m}(\theta)  =\frac{1}{B(\alpha)}\prod_{j=m+1}^{K}(\theta_j)^{\alpha_{j}-1} ,\quad  \theta \in \Delta_{K-1}.
\end{equation}
Evaluated at $\theta^*$, this is the subproduct of $\text{Dir}_\alpha (\theta^*)$ over the inactive coordinates. These will be used in the limiting argument.

\begin{proof}[Proof of Theorem \ref{thm:boundary-laplace_standard}]
The outline of the proof is as follows.

 \begin{enumerate}[itemsep=0pt, topsep=0pt, parsep=0pt, partopsep=0pt]
 
  \item We restrict the integration domain to the truncated simplex which  contains $\theta^*$, as this gives the same asymptotic rate by a localization lemma. We re-parameterize the integrating variable around $\theta^*$. 
\item We apply a scaling factor on the integrand to obtain a pointwise limit.
  \item We leverage Lemma~\ref{lemma: bounds on maximum} (bound around maximum) and domain truncation to obtain an integrable upper bound of the integrand. 
  \item By the Dominated Convergence Theorem, we get the desired limit.
\end{enumerate}

\paragraph{1. Reparameterization}

By the Localization lemma (cf. Supplementary Material \ref{sec:localization_lemma}), we can restrict the integration domain to the truncated simplex $\Delta_{K-1}^\epsilon$ in \eqref{eq:truncated_simplex} . Using the reparameterization $\beta=n^{1/2}$ and map $T$ defined in \eqref{eq:T_map}, we can write $\mathbb{E}_{\text{Dir}_\alpha} \left[ e^{nH({\theta})}\right]$ as
\begin{equation}
\label{eq: main_integral_rewritten}
    I(\beta)=\int_{\Omega} e^{\beta^2 H(T(x))}
    \text{Dir}_{\alpha}^{(K-1)}\left(T(x)\right)
    \mathbf{1}\{x\in\tilde{\Delta}_{K-1}^\epsilon\} dx
\end{equation}
where $\Omega\equiv[0,\infty)^{m}\times\mathbb{R}^{K-1-m}\subseteq\mathbb{R}^{K-1}$ and $\tilde{\Delta}_{K-1}^{\epsilon}$ is the projected truncated simplex defined in \eqref{eq:epsilon_projected_simplex}.
Define $\ttheta=T^{-1}(\theta^*)$ to be the corresponding maximizer in the projected simplex $\tilde{\Delta}_{K-1}$. For any fixed $\beta>0$, we apply the following change of variables around $\tilde{\theta}^*$, $$x(u)=h_{\beta}(u)+\tilde{\theta}^*$$ where
\begin{align*}
h_{\beta}(u) & =\begin{cases}
\beta^{-2}u_{k}, & \text{if }k=1,...,m;\\
\beta^{-1}u_{k}, & \text{otherwise}.
\end{cases} 
\end{align*}
This change of variables can be expressed with the shorthand notation 
$$
h_{\beta}(u)=(\beta^{-2}P_m+\beta^{-1}({I}_{K-1}-P_m))u,
$$ 
with $P_m$ as defined in \eqref{eq:projection_matrix}.
With this, \eqref{eq: main_integral_rewritten} can be re-written as
\begin{align}
    I(\beta) 
    &=\beta^{-(K-1+m)}  \int_{\Omega} e^{\beta^2 H(T(h_{\beta}(u)+\tilde{\theta}^*))}
    \text{Dir}_{\alpha}^{(K-1)}\left(T(h_{\beta}(u)+\tilde{\theta}^*)\right)
    \mathbf{1}\{h_{\beta}(u)+\tilde{\theta}^*\in\tilde{\Delta}_{K-1}^\epsilon\} du \nonumber \\
    &= \beta^{-(K-1+m)} \beta^{-\sum_{k=1}^{m}2(\alpha_{k}-1)} \exp \left(\beta^{2}H(T(\ttheta)) \right)\int_{\Omega}  g_{\beta}(u)\mathbf{1}\{h_{\beta}(u)+\tilde{\theta}^*\in\tilde{\Delta}_{K-1}^\epsilon \} du
\end{align}
where 
\[
g_{\beta}(u)\equiv\underbrace{\beta^{\sum_{k=1}^{m}2(\alpha_{k}-1)} \text{Dir}_{\alpha}^{(K-1)}\left(T(h_{\beta}(u)+\tilde{\theta}^*)\right)}_{(a)}\underbrace{\exp{ \left( \beta^2   H(T(h_{\beta}(u)+\tilde{\theta}^*))- \beta^2 H(T(\ttheta))  \right)} }_{(b)}.
\]
Clearly $\mathbf{1}\{h_{\beta}(u)+\tilde{\theta}^*\in\tilde{\Delta}_{K-1}\}\to 1$ as $\beta\to1$ since for any $u\in\mathbb{R}^{K-1}$, $\hb \to \ttheta $. 

The rest of the proof proceeds as follows. We first establish the pointwise
limit of the integrand $g_{\beta}(u)$. Then we show an integrable upper bound of $g_{\beta}(\cdot)\mathbf{1}\{h_{\beta}(u)+\tilde{\theta}^*\in\tilde{\Delta}_{K-1}\}$ and apply the Dominated Convergence Theorem.

\paragraph{2. Pointwise limit of $g_{\beta}(u)$.} 

For the point-wise limit of (a), 
\begin{align*}
 & \lim_{\beta\to\infty}\beta^{\sum_{k=1}^{m}2(\alpha_{k}-1)} \text{Dir}_{\alpha}^{(K-1)}\left(T(h_{\beta}(u)+\tilde{\theta}^*)\right)\\
 & =\frac{1}{B(\alpha)}\prod_{k=1}^{m}u_{k}^{\alpha_{k}-1}\prod_{j=m+1}^{K-1}(\ttheta_j)^{\alpha_{j}-1}(1-\sum\nolimits_{k=1}^{K-1} \ttheta_k)^{\alpha_{K}-1}\\
 & =\frac{1}{B(\alpha)}\prod_{k=1}^{m}u_{k}^{\alpha_{k}-1}\text{Dir}_{\alpha,m}(\rtheta)\\
 & >0.
\end{align*}
 For the exponential term (b), we first note that $A$, the gradient of the map $T$, satisfies 
 $$A^\top \nabla H(\theta^*)= -\sum_{i=1}^{m} \lambda_i e_i$$
 which follows from $A^\top 1_K=0$ in \eqref{eq:A_kernel} and the KKT equation \eqref{eq:KKT_conditions}. Since $P_m$ is a orthogonal projection that keeps only first $m$ coordinates, we have that
$$ P_m A^\top\nabla H(\rtheta)=A^\top\nabla H(\rtheta) \quad \mbox{and} \quad (I_{K-1}-P_m)A^\top\nabla H(\rtheta)=0.$$
With this, by differentiability of $H$ near $\theta^*$ and L'Hospital's rule: 
\begin{align*}
 & \lim_{\beta\to\infty}\beta^{2} \left(H(T(\hb))-H(T(\ttheta))\right)\\
 & =u^{\top}P_m A^{\top}\nabla H(T(\ttheta))+\lim_{\beta\to\infty}\frac{u^{\top}(I_{K-1}-P_m )A^{\top}\nabla_{\theta}H(T(\hb)))}{2\beta^{-1}} \\
&= u^{\top} A^{\top}\nabla H(\rtheta)+\frac{1}{2}u^{\top}(I_{K-1}-P_m)A^{\top}\nabla^{2}H(\rtheta)A(I_{K-1}-P_m)u.
\end{align*}
Thus, we have that 
\begin{eqnarray}
\lefteqn{
\lim_{\beta\to\infty}\exp \left(\beta^{2}H(T(\hb))-\beta^{2}H(T(\ttheta)) \right) 
} && \nonumber \\
&=&\exp\left(u^{\top} A^{\top}\nabla H(\rtheta) \right)
\cdot \exp \left(\frac{1}{2}u^{\top}(I_{K-1}-P_m)A^{\top}\nabla^{2}H(\rtheta)A(I_{K-1}-P_m)u\right).
\label{eq:quadratic_fluctuation}
\end{eqnarray}

\paragraph{3. Integrable bound for $g_{\beta}$.} Next we will show that there exists a integrable function $G(u)\geq0$ such that for all $u\in\Omega$
\[
g_{\beta}(u)\mathbf{1}\{\hb\in\tilde{\Delta}_{K-1}^\epsilon \}\leq G(u).
\]

For the bound on (a), we first fix the notation
\begin{equation}
\label{eq:reduced_s_b}
    S_\beta := \{u \in \Omega : h_{\beta}(u)+\tilde{\theta}^*\in \tilde{\Delta}_{K-1}^\epsilon\}.
\end{equation}
On $S_\beta$, the vector $T(h_{\beta}(u)+\tilde{\theta}^*)$ lies in the truncated simplex $\Delta_{K-1}^\epsilon$, which restricts each component indexed by $k\ge m+1$ to lie between $\epsilon$ and $1$. Hence, for every such $k$,
$$
\left(T(h_\beta(u)+\tilde{\theta}^*)_k\right)^{\alpha_k-1}
\le 
\begin{cases}
1, & \alpha_k \ge 1,\\[4pt]
\epsilon^{\alpha_k-1}, & \alpha_k < 1.
\end{cases}
$$
Therefore the following partial product of the last $K-m$ components in $\text{Dir}_{\alpha}^{(K-1)}\left(T(h_{\beta}(u)+\tilde{\theta}^*)\right)$ can be upper bounded by
\begin{align}
    \prod_{j=m+1}^{K-1}(\beta^{-1}u_{j}+\theta_{j}^{*})^{\alpha_{j}-1}(1-\sum_{k=1}^{m}(\beta^{-2}u_{k})-\sum\nolimits_{j=m+1}^{K-1}(\beta^{-1}u_{j}+\theta_{j}^{*}))^{\alpha_{K}-1} 
  \le \prod_{k=m+1}^{K}\max (1,\epsilon^{\alpha_k -1})
\end{align}
Thus we have the following bound on $(a)$:
\begin{equation}
\label{eq:upperbound_1}
\beta^{\sum_{k=1}^{m}2(\alpha_{k}-1)} \text{Dir}_{\alpha}^{(K-1)}\left(T(h_{\beta}(u)+\tilde{\theta}^*)\right)\le \frac{1}{B(\alpha)}	\prod_{k=1}^{m}u_{k}^{\alpha_{k}-1} \prod_{k=m+1}^{K}\max (1,\epsilon^{\alpha_k -1}).
\end{equation}

For the bound on (b), the bound around maximum lemma ( \ref{lemma: bounds on maximum}) gives constants $C_1,C_2>0$ such that
\begin{align}
\label{eq:upperbound_2}
 \beta^{2}\left(H(T(\hb))-H(T(\ttheta))\right)\nonumber &\le \beta^{2}\left( -C_{1}||P_m h_{\beta}(u)||_{1} -C_{2}||(I_{K-1}-P_m)h_\beta (u)||_{2}^{2} \right) \\
&= -C_1|| P_m u ||_1
- C_2\|(I_{K-1}-P_m) u \|_{2}^{2}.  
\end{align}
Putting the two bounds \eqref{eq:upperbound_1} and \eqref{eq:upperbound_2} together, we have 
\begin{align*}
\label{eq:int_F_b}
 & g_{\beta}(u)\mathbf{1}\{h_{\beta}(u)+\tilde{\theta}^*\in\tilde{\Delta}_{K-1}^\epsilon \} \\
    & \leq \frac{1}{B(\alpha)}	\prod_{k=1}^{m}u_{k}^{\alpha_{k}-1} \prod_{k=m+1}^{K}\max (1,\epsilon^{\alpha_k -1})\exp(-C_1 \|P_m u\|_{1})\exp(-C_2\|(I_{K-1}-P_m)u\|_{2}^{2}) 
\end{align*}
which is separable and integrable on $\Omega$.
Integrability holds since for any $C_1,C_2>0$ and $\alpha_k>0,\quad k=1,\dots,m$
\begin{equation}
\label{eq:Gammabound_1}
\int_{0}^{\infty} u^{\alpha_k-1}\exp\!\left(-C_1\,u\right)\,du < \infty   
\end{equation}
and 
\begin{equation}
\label{eq:Gammabound_2}
    \int_{\mathbb{R}^{K-1-m}} \exp\!\left(-C_2\,\|v\|_2^{\,2}\right)\,dv < \infty
\end{equation}
and by applying Tonelli's Theorem. 

\paragraph{4. Limit for $I(\beta)$.} By the Dominated Convergence Theorem, we have that
\begin{align*}
 & \lim_{\beta\to\infty}\beta^{(K-1+m)}\beta^{\sum_{k=1}^{m}2(\alpha_{k}-1)}e^{-\beta^{2}H(T(\ttheta))}I(\beta)\\
  &=   \left(\int_{\Omega} \lim_{\beta\to\infty} g_{\beta}(u)\mathbf{1}\{h_{\beta}(u)+\ttheta\in\tilde{\Delta}_{\epsilon}\}du\right)\\
 & =\frac{\prod_{j=m+1}^{K}(\theta^*_j)^{\alpha_{j}-1}}{B(\alpha)}\int_{\Omega}\prod_{k=1}^{m}u_{k}^{\alpha_{k}-1}\exp\left(u^{\top} A^{\top}\nabla H(\rtheta)+\frac{1}{2}u^{\top}(I_{K-1}-P_m)A^{\top}\nabla^{2}H(\rtheta)A(I_{K-1}-P_m)u\right)du.
\end{align*}

We can further simplify the last integral by using the identity $V_m V_m^\top = I_{K-1}-P_m $ discussed in \eqref{eq:VV_I_P_identity}.
Noting that $V_m^\top: \mathbb{R}^{K-1}\to \mathbb{R}^{K-1-m}$ drops the first $m$ coordinates, 
and recalling that $A^\top \nabla H(\theta^*)= -\sum_{i=1}^{m} \lambda_i e_i$,
we can write
\begin{align}
    \label{eq:intergral_factorization}
&\int_{\Omega}\prod_{k=1}^{m}u_{k}^{\alpha_{k}-1}
    \exp\left(
      u^{\top} A^{\top}\nabla H(\rtheta)
      +\frac{1}{2}u^\top(I_{K-1}-P_m)A^{\top}\nabla^{2}H(\rtheta)A(I_{K-1}-P_m)u
    \right)du \nonumber \\
&=\int_{\Omega}\prod_{k=1}^{m}u_{k}^{\alpha_{k}-1}\exp\left(u^{\top} A^{\top}\nabla H(\rtheta)+\frac{1}{2}(V_m^\top u)^\top (V_m^\top A^{\top}\nabla^{2}H(\rtheta)AV_m) (V_m^\top u)\right)du \nonumber \\
=&\int_{[0,\infty)^m}\prod_{k=1}^{m}u_{k}^{\alpha_{k}-1}
   \exp\left(-\sum_{i=1}^m \lambda_i u_i \right) \,du \times 
  \int_{\mathbb{R}^{K-1-m}}
   \exp \left(\frac{1}{2}\xi^\top (U^{\top}\nabla^{2}H(\rtheta)U)\xi\right) d\xi,
\end{align}
with $\xi := V_m^\top u \in \mathbb{R}^{K-1-m}$.
If $m\le K-2$, evaluating the integrals gives the constant
\begin{equation}
\label{eq:C_H_constant}
C_{H}\equiv
\text{Dir}_{{\alpha},m}(\theta^*)  \prod_{k=1}^{m}\lambda_{k}^{-\alpha_{k}}\Gamma(\alpha_{k})
\frac{(2\pi)^{\frac{K-1-m}{2}}}{|\det(U^\top \nabla^2 H(\theta^*)U)|^{1/2}}
\end{equation}    
where $U^\top \nabla^2 H(\theta^*)U$ is the reduced Hessian discussed in \eqref{eq:reduced_hessian}. 

If $m = K-1$, then the last integral in \eqref{eq:intergral_factorization} is absent
and the constant reduces to
\begin{equation}
\label{eq:C_H_constant_alternative}
C_{H}\equiv
\frac{1}{B(\alpha)}{\theta^*_{K}}^{\alpha_{K}-1}  \prod_{k=1}^{K-1}\lambda_{k}^{-\alpha_{k}}\Gamma(\alpha_{k}).
\end{equation}

\end{proof}

\subsection{Results involving $\KL$ Divergence} 

\subsubsection{Derivatives of $\KL$ Divergence}
\label{sec:derivatives_KL}

We first note several facts about the derivatives of $\KL$  defined in \eqref{eq:KL_divergence}. 
The gradient is
\begin{equation}
\label{eq:KL_gradient}
\nabla_\theta \KL(\theta^* |\theta)
=
-\left(
\frac{\theta_1^*}{\theta_1},\dots,\frac{\theta_K^*}{\theta_K}
\right)^\top, \qquad
\theta \in \Delta_{K-1}.
\end{equation}
The Hessian is given by
\begin{equation}
\label{eq:KL_Hessian}
\nabla_\theta^2 \KL(\theta^* |\theta)
=
\operatorname{diag}\!\left(
\frac{\theta_1^*}{\theta_1^2},\dots,
\frac{\theta_K^*}{\theta_K^2}
\right),
\qquad
\theta \in \Delta_{K-1}.
\end{equation}
For indices $i$ such that $\theta_i^* = 0$, the corresponding entries in
the gradient and Hessian are defined to be zero.

\subsubsection{Proof of Theorem ~\ref{thm:boundary-laplace-with-kl} (Laplace Method with KL Factor)}
\label{sec:appendix-proof-boundary-laplace-with-kl}

\begin{proof}[Proof of Theorem \ref{thm:boundary-laplace-with-kl}]

The proof is similar to the proof of Theorem \ref{thm:boundary-laplace_standard} except that $g_\beta (u)$ has an extra factor. Using the reparameterization $\beta=n^{1/2}$, we can write $\mathbb{E}_{\text{Dir}_\alpha} \left[ e^{2nH({\theta})+n^{\gamma }\text{KL}(\theta^{*}|\theta)}\mathbf{1}_{\Delta_{K-1}^\epsilon}(\theta)\right]$ as
\begin{equation}
\label{eq: KL_integral_rewritten}
    I(\beta)=\int_{\Omega} e^{2\beta^2 H(T(x))+\beta^{2\gamma}\text{KL}\left (T(\ttheta)|T(x) \right)}
    \text{Dir}_{\alpha}^{(K-1)}\left(T(x)\right)
    \mathbf{1}\{x\in\tilde{\Delta}_{K-1}^\epsilon\} dx
\end{equation}
where 
$\Omega\equiv[0,\infty)^{m}\times\mathbb{R}^{K-1-m}\subseteq\mathbb{R}^{K-1}$. Let $\ttheta=T^{-1}(\theta^*)$. With the change of variables $x(u)=h_{\beta}(u)+\tilde{\theta}^*$, we rewrite $I(\beta)$ as
\begin{align*}
    I(\beta) 
    &= \beta^{-(K-1+m)} \beta^{-\sum_{k=1}^{m}2(\alpha_{k}-1)} \exp \left(2\beta^{2}H(T(\ttheta))  \right)\int_{\Omega}  g_{\beta}(u)\mathbf{1}\{h_{\beta}(u)+\tilde{\theta}^*\in\tilde{\Delta}_{K-1}^\epsilon \} du
\end{align*}
where 
\begin{align*}
g_{\beta}(u) &\equiv\underbrace{\beta^{\sum_{k=1}^{m}2(\alpha_{k}-1)} \text{Dir}_{\alpha}^{(K-1)}(T(h_{\beta}(u)+\tilde{\theta}^*))}_{(a)}\cdot \underbrace{\exp{ \left( 2\beta^2  H(T(h_{\beta}(u)+\tilde{\theta}^*))-2\beta^2 H(T(\ttheta))  \right)} }_{(b)} \\ &\cdot \underbrace{\exp\left(\beta^{2\gamma}\text{KL}\left (T(\ttheta)|T(h_{\beta}(u)+\tilde{\theta}^*) \right)\right)}_{(c)}.
\end{align*}
The rest of the proof goes as follows.
\begin{enumerate}[itemsep=0pt, topsep=0pt, parsep=0pt, partopsep=0pt]
  \item  We show that the factor (c) has limit value of 1 as $\beta\to \infty$. 
  \item  We show an upper bound of the factor (c) on  $S_\beta$ defined in \eqref{eq:reduced_s_b}.
  \item We combine with the previous upper bound in the proof of Theorem~\ref{thm:boundary-laplace_standard} and show an integrable upper bound of $g_{\beta}(u) \mathbf{1}\{ u \in S_\beta\}$. 
  \item By the Dominated Convergence Theorem, we get the desired limit for $I(\beta)$.
\end{enumerate}

\paragraph{1. Pointwise limit of (c)}
    We first show that the pointwise limit of factor $(c)$ is $1$, which yields the same pointwise limit for
$g_{\beta}(u)\mathbf{1}\{h_{\beta}(u)+\tilde{\theta}^*\in\tilde{\Delta}_{K-1}\}$
in Theorem~\ref{thm:boundary-laplace-with-kl}.
By L'Hospital's rule,
\begin{equation}\label{eq:KL_lhopital_conclusion_start}
\lim_{\beta\to\infty}
\frac{\KL\!\left(T(\ttheta)\,\middle|\,T(h_\beta(u)+\ttheta)\right)}{\beta^{-2\gamma}}
=
-\frac{1}{2\gamma}\,
\lim_{\beta\to\infty}\beta^{2\gamma+1}\,
\frac{d}{d\beta}\KL\!\left(T(\ttheta)\,\middle|\,T(h_\beta(u)+\ttheta)\right).
\end{equation}
For the derivative of \KL, we can write
\[
\frac{d}{d\beta}\,\KL\!\left(T(\ttheta)\,\middle| \,T(h_\beta(u)+\ttheta)\right)
=
\beta^{-2}B_\beta(u)+R_\beta(u),
\qquad
R_\beta(u)=O(\beta^{-3}).
\]
where 
\[
B_\beta(u)
:=
\sum_{k=m+1}^{K-1}
\frac{\tilde{\theta}_k^*u_k}{\beta^{-1}u_k+\tilde{\theta}_k^*}
-
\sum_{k=m+1}^{K-1}
\frac{(1-\mathbf {1}^\top\tilde{\theta}^*)u_k}{D_\beta(u)}
\]
and
\[
D_\beta(u):=
1-\sum_{i=1}^{m}\beta^{-2}u_i-\sum_{r=m+1}^{K-1}\bigl(\beta^{-1}u_r+\tilde{\theta}_r^*\bigr).
\]
Since $B_\beta(u)=O(\beta^{-1})$ (shown in the Supplemental Material \ref{sec:KL_derivative}), it follows that 
\begin{equation}\label{eq:KL_lhopital_conclusion_rate}
\beta^{2\gamma+1}\left( \beta^{-2}B_\beta(u)+R_\beta(u)
\right)=O(\beta^{2\gamma-2}).
\end{equation}
Therefore \eqref{eq:KL_lhopital_conclusion_start} satisfies
\[
\lim_{\beta\to\infty}
\frac{\KL\!\left(T(\ttheta)\,\middle| T(h_\beta(u)+\ttheta)\right)}{\beta^{-2\gamma}}
=0,
\qquad \text{whenever } 0<\gamma<1.
\]
Hence the point-wise limit of $(c)$ is 
\begin{align*}
\lim_{\beta\to\infty}\exp\left(\beta^{2\gamma}\text{KL}(T(\ttheta)|T(h_{\beta}(u)+\tilde{\theta}^*))\right)
&= \exp\left( \lim_{\beta\to\infty} \beta^{2\gamma}\text{KL}(T(\ttheta)|T(h_{\beta}(u)+\tilde{\theta}^*))\right) \\
 &=1.
\end{align*}

\paragraph{2. Upper bound on (c) in $\{ u \in \Omega: h_{\beta}(u)+\tilde{\theta}^*\in \tilde{\Delta}_{K-1}^\epsilon \}$.} 

We show an upper bound of $(c)$ on $S_\beta = \{ u \in \Omega: h_{\beta}(u)+\tilde{\theta}^*\in \tilde{\Delta}_{K-1}^\epsilon \}$.

On the truncated simplex $\Delta_{K-1}^\epsilon$, the Hessian of the
$\KL$ divergence in \eqref{eq:KL_Hessian} admits the uniform bound
\begin{equation}
\label{eq:KL_truncated_bound}
\nabla_\theta^2 \KL(\theta^*|\theta)
\preceq M I_K,
\qquad
\theta \in \Delta_{K-1}^\epsilon
\end{equation}
where
$$M := \max_{i\ge m+1}\frac{\theta_i^*}{\epsilon^2}$$
which follows from $\theta_i\ge \epsilon$ for all
$i\ge m+1$ on $\Delta_{K-1}^\epsilon$. Since the gradient of map $T$ is $A$ and $T(\ttheta)=\theta^*$,  
its Hessian in the projected coordinates is
$$
\nabla_y^2 \KL(T(\ttheta)|T(y))
=
A^\top \left.\nabla_\theta^2 \KL(\theta^*| \theta)\right|_{\theta=T(y)} A.
$$
Combined with the upperbound \eqref{eq:KL_truncated_bound} in $\mathbb{R}^K$, we have the upper bound in $\mathbb{R}^{K-1}$
$$\nabla_y^2 \KL(T(\ttheta)|T(y))
\preceq
M A^\top A.$$
We can further bound this by noting the properties of $A^\top A$
$$A^\top A = I_{K-1} + \mathbf{1}_{K-1} \mathbf{1}_{K-1} ^\top,
\qquad
\lambda_{\max}(A^\top A)=K,$$
from which we have
\begin{equation}
\label{eq:KL_projected_bound}
\nabla_y^2 \KL(T(\ttheta)|T(y))
\preceq
(MK) I_{K-1}.
\end{equation}
Therefore, with the inequality (9.13) in \cite{boyd2004convex},
\begin{align}
\label{eq:KL_upperbound}
    \text{KL}(T(\ttheta)|T(y))	&\le\text{KL}(T(\ttheta)|T(\tilde{\theta^{*}}))+\left(\nabla_y \text{KL}(T(\ttheta)|T(y))|_{y=\ttheta}\right)^{\top}\left(T(y)-T(\ttheta)\right) \nonumber \\
    &+\frac{MK}{2}||y-T(\ttheta)||_2^{2} \nonumber \\
	&=\sum_{i=1}^{m}(y_{i}-\tilde{\theta}_{i}^{*})+\frac{MK}{2}||y-\tilde{\theta}^{*}||_2^{2} \nonumber \\
    &= \sum_{i=1}^{m}y_{i}+\frac{MK}{2}||y-\tilde{\theta}^{*}||_2^{2}
\end{align}
where the first equality follows from $\text{KL}(T(\ttheta)|T(\tilde{\theta^{*}}))=0$ and  
\begin{equation*}
    (A^\top \nabla_{\theta}\KL(\theta^* |\theta)\big|_{\theta=\theta^*})_j
=
\begin{cases}
1, & j=1,\dots,m,\\
0, & j=m+1,\dots, K-1
\end{cases}
\end{equation*}
which follows from the gradient expression in  \eqref{eq:KL_gradient}. 
Letting $y=h_{\beta}(u)+\ttheta$ in \eqref {eq:KL_upperbound} and using the fact that $(h_{\beta}(u)+\ttheta)_i\ge0$ for $i\le m$ on $\Omega$,
\begin{align}
\label{eq:KL_bound_1}
    \beta^{2\gamma}\text{KL}(T(\ttheta) \, | \, T(h_{\beta}(u)+\ttheta))  
    &\le\beta^{2(\gamma-1)}||P_{m}u||_1 + \beta^{2(\gamma-2)}\frac{MK}{2}||u||_2^{2}
\end{align}
Using the orthogonality relations $(P_{m}u)\perp(I-P_{m})u$, the second term in \eqref{eq:KL_bound_1} can be rewritten as
\begin{equation}
\label{eq:KL_triangle_inequality}
    \beta^{2(\gamma-2)}\frac{MK}{2}||u||_2^{2} =   \beta^{2(\gamma-2)}\frac{MK}{2}||P_{m}u||_2^{2}+\beta^{2(\gamma-1)}\frac{MK}{2}||(I_{K-1}-P_{m})u||_2^{2}.
\end{equation}
Note that if $u\in S_\beta$, then $h_{\beta}(u)+\ttheta \in \tilde{\Delta}_{K-1}$. Hence, for $i\le m$, $0\le \beta^{-2}u_i +\ttheta_i \le 1$. Since $\ttheta_i=0$, it follows that $0\le u_i\le \beta^2.$
This implies
\begin{equation}
\label{eq:S_beta_bound}
     \| P_m u\|_2^2 \le \| P_m u\|_1 \| P_m u\|_\infty \le \beta^2 \| P_m u\|_1 ,\quad u\in S_\beta,
\end{equation}
which multiplied with $\beta^{2(\gamma-2)} \frac{MK}{2}$ yields
\begin{equation}
\label{eq:KL_bound_2}
    \beta^{2(\gamma-2)}\frac{MK}{2}||P_{m}u||_2^{2} \le \beta^{2(\gamma-1)}\frac{MK}{2}||P_{m}u||_1.
\end{equation}
By combining the bounds \eqref {eq:KL_triangle_inequality} and \eqref{eq:KL_bound_2}, and then applying them in \eqref{eq:KL_bound_1}, we achieve the following upper bound of $(c)$ on $S_\beta$
\begin{equation}
\label{eq:upperbound_on_c}
    (c) \le  \exp \left( \beta^{2(\gamma-1)} \left(1+\frac{MK}{2} \right)||P_{m}u||_1 + \beta^{2(\gamma-1)}\frac{MK}{2}||(I_{K-1}-P_{m})u||_2^{2}\right),\quad u\in S_\beta.
\end{equation}

\paragraph{3. Integrable upper bound for $g_{\beta}(u)$.}
We will show that there exists $\beta_0<\infty$ and a $\beta$ independent function $G_* \in L^1 (\Omega)$
such that for all $\beta \ge \beta_0$
$$ g_{\beta}(u) \mathbf{1}\{ u \in \Omega: h_{\beta}(u)+\tilde{\theta}^*\in \tilde{\Delta}_{K-1}^\epsilon\} \le G_* (u),\quad \forall u\in \Omega $$
We combine \eqref{eq:upperbound_on_c} with the bound \eqref{eq:upperbound_2} in Theorem $\ref{thm:boundary-laplace_standard}$ proof (taking $2H$ instead). There exist positive constants $C_1,C_2>0$ such that
\begin{align*}
\label{eq:int_F_b_KL}
g_{\beta}(u)\mathbf{1}\{h_{\beta}(u)+\tilde{\theta}^*\in\tilde{\Delta}_{K-1}^{\epsilon}\}
&\le \frac{1}{B(\alpha)}	\prod_{k=1}^{m}u_{k}^{\alpha_{k}-1} \prod_{k=m+1}^{K}\max (1,\epsilon^{\alpha_k -1})   \\
&\times \exp \left(-C_1 \|P_m u\|_{1} +\beta^{2(\gamma-1)} \left(1+\frac{MK}{2} \right)||P_{m}u||_1 \right) \\
&\times \exp\left(-C_2\|(I_{K-1}-P_m)u\|_{2}^{2} + \beta^{2(\gamma-1)}\frac{MK}{2}||(I_{K-1}-P_{m})u||_2^{2} \right) \\
&:= G(\beta,u).
\end{align*}
Since $\gamma \in (0,1)$ and therefore $\beta^{2(\gamma-1)}\to 0$, there exists $\beta_0$ such that for all $\beta \ge \beta_0$
\begin{align*}
    (1+\frac{MK}{2}) \beta^{2(\gamma-1)} & \le \frac{C_1}{2} \\
  \frac{MK}{2}\beta^{2(\gamma-1)} &\le \frac{C_2}{2}.
\end{align*}
 This implies that when $\beta \ge \beta_0$, $G(\beta,u)$ is bounded by 
\begin{equation}
\label{eq:G_asterik_upperbound}
    G_* (u):= C \left( \prod_{k=1}^m u_{k}^{\alpha_k-1} \right) \exp \left(-\frac{C_1}{2} \|P_m u\|_{1} \right) \exp \left( -\frac{C_2}{2}\|(I_{K-1}-P_m)u\|_{2}^{2}\right)
\end{equation}
where $C>0$. 

The bound $G_*(u)$ is integrable on $\Omega$ since it factorizes over $[0,\infty)^m \times \mathbb{R}^{K-1-m}$. Using the integrability of each factor in \eqref{eq:Gammabound_1} and \eqref{eq:Gammabound_2} with $C_1/2$ and $C_2/2$, and Tonelli's Theorem, we have that $G_*\in L^1(\Omega)$.

\paragraph{4. Limit for $I(\beta)$.} 
Recall that we have the following pointwise limit results:
\begin{align*}
\beta^{-\sum_{k=1}^{m}2(\alpha_{k}-1)}\text{Dir}_{\alpha}^{(K-1)}(h_{\beta}(u)+\ttheta) & =\frac{1}{B(\alpha)}\prod_{k=1}^{m}u_{k}^{\alpha_{k}-1}\prod_{j=m+1}^{K}(\ttheta_{j})^{\alpha_{j}-1}(1-\sum\nolimits_{k=1}^{K-1}\ttheta_{k})^{\alpha_{K}-1}\\
\lim_{\beta\to\infty}e^{-2\beta^{2}H(T(\ttheta))}e^{\beta^{2}H(T(h_{\beta}(u)+\ttheta))} & =\exp\left(2u^{\top}PA^{\top}\nabla H(\theta^{*})+u^{\top}(I-P)A^{\top}\nabla^{2}H(\theta^{*})A(I-P)u\right)\\
\lim_{\beta\to\infty}e^{\beta^{2\gamma}\text{KL}(T(\ttheta)|T(h_{\beta}(u)+\ttheta))} & =1
\end{align*}
By the Dominated Convergence Theorem, we have that
\begin{align}
\label{eq:integral_factorization_2}
 & \lim_{\beta\to\infty}\beta^{(K-1+m)}\beta^{\sum_{k=1}^{m}2(\alpha_{k}-1)}e^{-2\beta^{2}H(T(\ttheta))}I(\beta)\nonumber \\
=&\left(\int_{\Omega} \lim_{\beta\to\infty} g_{\beta}(u)\mathbf{1}\{h_{\beta}(u)+\ttheta\in\tilde{\Delta}_{\epsilon}\}du\right) \nonumber \\
&=\frac{\prod_{j=m+1}^{K}(\theta^*_j)^{\alpha_{j}-1}}{B(\alpha)}\int_{[0,\infty)^m}\prod_{k=1}^{m}u_{k}^{\alpha_{k}-1}
   \exp\left(-2\sum_{i=1}^m \lambda_i u_i \right) \,du \times 
  \int_{\mathbb{R}^{K-1-m}}
   \exp \left(\xi^\top (U^{\top}\nabla^{2}H(\rtheta)U)\xi\right) d\xi
\end{align}
where $\xi := V_m^\top u \in \mathbb{R}^{K-1-m}$. If $m\le K-2$, evaluating the integrals gives the constant
\begin{equation}
\label{eq:C_H_constant_KL}
C^{\prime}_H\equiv
 \text{Dir}_{{\alpha},m}(\theta^*)  \prod_{k=1}^{m}(2\lambda_{k})^{-\alpha_{k}}\Gamma(\alpha_{k})
\frac{(\pi)^{\frac{K-1-m}{2}}}{|\det(U^\top \nabla^2 H(\theta^*)U)|^{1/2}}
\end{equation}    
If $m = K-1$, then the last integral in \eqref{eq:integral_factorization_2} is absent
and the constant reduces to
\begin{equation}
\label{eq:C_H_constant_alternative}
C^{\prime}_H \equiv
\frac{1}{B(\alpha)}{\theta^*_{K}}^{\alpha_{K}-1}  \prod_{k=1}^{K-1}(2\lambda_{k})^{-\alpha_{k}}\Gamma(\alpha_{k}).
\end{equation}  
We note that these constants coincide with the constant $C_H$ in Theorem \ref{thm:boundary-laplace_standard} applied to the second moment $\mathbb{E}[e^{2nH(\theta)}]$. 
\end{proof}

\subsection{Asymptotics of the Beta Function: An Application of Theorem \ref{thm:boundary-laplace_standard}}
\label{sec:reduced_KL}

Fix any $\gamma>0$. We will characterize the asymptotics of the Beta function $B(\alpha + n^\gamma \theta^*)$ by applying Theorem \ref{thm:boundary-laplace_standard} to $\mathbb{E}[e^{n\widehat{H}(\theta)}]$ where $\widehat{H}$ is defined in \eqref{eq:lowerbound_mu_incentive}. The analysis in this section will be used later in the analysis of the control variate based estimator in \eqref{sec: control_variate_analysis}. 

To apply Theorem \ref{thm:boundary-laplace_standard}, we need to verify conditions (A1)--(A4).

\paragraph{(A1)(A2) Maximizer and Differentiability} 
We note that $\widehat{H}(\theta)$ is uniquely maximized at $\theta^*$, implying (A1). This follows from the fact that $\text{KL}(\theta^*|\theta)\ge0$ and equals zero if and only if $\theta=\theta^*$. 
Next, by definition, $\nabla \widehat{H}= -\nabla \text{KL}(\theta^*|\theta)$ and $\nabla^2 \widehat{H}= -\nabla^2 \text{KL}(\theta^*|\theta)$. It is not hard to see that there exists an open neighborhood around $\theta^*$ in $\mathbb{R}^K$ on which $\KL(\theta^*|\theta)$ is twice differentiable. 
It is not continuous on $\Delta_{K-1}$ (if $\theta_i=0$ for some $i$ with $\theta^*_i>0$, then $\KL(\theta^*|\theta)=+\infty$, and hence  $\widehat{H}=-\infty$)
but it satisfies the strict gap condition \eqref{eq:continuity_relaxation}, which follows from the lower semi-continuity of $\KL(\theta^*|\theta)$.

\paragraph{(A3) KKT Multipliers} 

The problem of maximizing $\widehat{H}$ over the simplex admits KKT multipliers at $\theta^*$. Since
\begin{equation}
\label{eq:gradient_of_wide_hat_H}
\nabla \widehat{H}(\theta^*)=-\nabla_\theta \KL(\theta^* |\theta)\big|_{\theta=\theta^*}=\mathbf{1}_K-\sum_{i=1}^{m} e_i
\end{equation}
the KKT stationary condition $\nabla \widehat{H}(\theta^*) = \lambda\,\mathbf{1}_K - \mu$ is satisfied with $\lambda = 1$ and $\mu_k = 1 > 0$ for each $k$ with $\theta_k^* = 0$, establishing strict complementarity (A3).

\paragraph{(A4) Negative Definiteness} 

Let $U$ be the basis of the critical cone $\mathcal C(\theta^*)$ defined in \eqref{eq:U_basis}. The reduced Hessian of $\KL$ evaluated at $\theta^*$ has the expression
\begin{equation}
\label{eq:reduced_hessian_KL}
U^\top \nabla_\theta^2 \KL(\theta^* |\theta)\big|_{\theta=\theta^*} U
=
\operatorname{diag}\!\Bigl(
\frac{1}{\theta_{m+1}^*},\dots,\frac{1}{\theta_{K-1}^*}
\Bigr)
+
\frac{1}{\theta_K^*}\,
\mathbf{1}_{K-1-m}\,\mathbf{1}_{K-1-m}^\top.    
\end{equation}
Its determinant admits the explicit expression
\begin{align}
\label{eq:det_of_projected_KL}
\det\bigl(U^\top \nabla_\theta^2 \KL(\theta^*| \theta)\big|_{\theta=\theta^*} U\bigr)
&= \left(
\prod_{i=m+1}^{K-1} \frac{1}{\theta_i^*}
\right)
\left(
1 + \frac{1}{\theta_K^*}
\sum_{i=m+1}^{K-1} \theta_i^*
\right) \nonumber \\
&=\prod_{i=m+1}^{K}\frac{1}{\theta_i^*}
>0.
\end{align}
Since \eqref{eq:det_of_projected_KL} is positive, we can conclude that  $ U^\top \nabla_\theta^2 \KL(\theta^* |\theta)\big|_{\theta=\theta^*} U \succ 0$
and \begin{equation}
\label{eq:reduced_widehat}
    U^\top \nabla^2\widehat{H}(\theta^*)U \prec 0.
\end{equation} 
Therefore $\widehat{H}$ satisfies condition (A4). 

We can now apply Theorem \ref{thm:boundary-laplace_standard} with $n$ replaced by $n^\gamma$, yielding
\begin{equation}
\label{eq:main_widehat_H}
\mathbb{E}_{\text{Dir}_\alpha}\!\bigl[e^{n^\gamma\widehat{H}(\theta)}\bigr]
\;\sim\; C_B\;\cdot\; e^{n^\gamma H(\theta^*)}\;\cdot\; n^{-\gamma\frac{K-1-m}{2}}\;\cdot\; n^{-\gamma\sum_{i=1}^m\alpha_i}
\end{equation}
where
\[
C_B \;=\; \frac{(2\pi)^{\frac{K-1-m}{2}}}{B(\alpha)}\;\prod_{i=1}^m\Gamma(\alpha_i)\;\prod_{j=m+1}^K(\theta_j^*)^{\alpha_j-\frac{1}{2}}.
\]

\subsubsection{Beta Function Approximation}

Since $e^{-n^\gamma\KL(\theta^*|\theta)}=e^{-n^\gamma\theta^*\cdot\log\theta^*}\prod_{k=m+1}^K\theta_k^{n^\gamma\theta_k^*}$, we have that
\begin{align}
\mathbb{E}_{\mathrm{Dir}_\alpha}\!\left[e^{n^\gamma\widehat{H}(\theta)}\right]
&=
e^{n^\gamma H(\theta^*)}
\mathbb{E}_{\mathrm{Dir}_\alpha}\! \left[ e^{-n^\gamma\KL(\theta^*\mid\theta)} \right]
\nonumber\\
&=
e^{n^\gamma H(\theta^*)}e^{-n^\gamma\theta^*\cdot\log\theta^*}
\mathbb{E}_{\mathrm{Dir}_\alpha}\! \left[\prod_{k=m+1}^K \theta_k^{\,n^\gamma\theta_k^*} \right]
\nonumber\\
&=
e^{n^\gamma H(\theta^*)-n^\gamma\theta^*\cdot\log\theta^*}
\frac{B(\alpha+n^\gamma\theta^*)}{B(\alpha)}.
\end{align}

Combined with \eqref{eq:main_widehat_H}, we have the following lemma.
\begin{lem}[Beta Function Approximation]
\label{lem:beta-asymptotic}
Let $\theta^*$ be a point on $\Delta_{K-1}$ with $m$ zero components following the ordering assumption in \eqref{eq:ordering_assumption}. For any $\gamma>0$, as $n\to\infty$,
\[
B(\alpha+n^\gamma\theta^*)\sim C_{B}n^{-\gamma \frac{K-1-m}{2}}n^{-\gamma \sum_{i=1}^{m} \alpha_{i}}e^{n^\gamma \theta^* \cdot\log\theta^*}
\]
where $C_{B}=(2\pi)^{\frac{K-1-m}{2}}\prod_{i=1}^{m}\Gamma(\alpha_{i})\prod_{k=m+1}^{K}(\theta_{k})^{\alpha_{k}-\frac{1}{2}}>0$.
\end{lem}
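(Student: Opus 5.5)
The lemma follows almost directly from combining the identity just derived with the asymptotic \eqref{eq:main_widehat_H}. The plan is to solve the identity
\[
\mathbb{E}_{\mathrm{Dir}_\alpha}\bigl[e^{n^\gamma\widehat{H}(\theta)}\bigr]=e^{n^\gamma H(\theta^*)-n^\gamma\theta^*\cdot\log\theta^*}\frac{B(\alpha+n^\gamma\theta^*)}{B(\alpha)}
\]
for $B(\alpha+n^\gamma\theta^*)$. This gives
\[
B(\alpha+n^\gamma\theta^*)=B(\alpha)\,e^{-n^\gamma H(\theta^*)}e^{n^\gamma\theta^*\cdot\log\theta^*}\,\mathbb{E}_{\mathrm{Dir}_\alpha}\bigl[e^{n^\gamma\widehat{H}(\theta)}\bigr].
\]
We then substitute the asymptotic equivalence \eqref{eq:main_widehat_H}. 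The factors $e^{\pm n^\gamma H(\theta^*)}$ cancel exactly. The prefactor $1/B(\alpha)$ in $C_B$ of \eqref{eq:main_widehat_H} cancels the $B(\alpha)$, which leaves the constant $(2\pi)^{(K-1-m)/2}\prod_{i\le m}\Gamma(\alpha_i)\prod_{k>m}(\theta_k^*)^{\alpha_k-1/2}$ stated in the lemma. Since $\sim$ is preserved under multiplication by nonvanishing deterministic factors, the claim follows.

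The substantive work is justifying \eqref{eq:main_widehat_H}, that is, applying Theorem~\ref{thm:boundary-laplace_standard} to $\widehat{H}$ with $n$ replaced by $N=n^\gamma$. Because the theorem is an asymptotic statement as the scaling parameter $N\to\infty$ over the reals, and $n^\gamma\to\infty$ for every $\gamma>0$, the theorem applies directly; no restriction $\gamma<1$ is needed. I would check its hypotheses as follows.
\begin{itemize}
\item \textbf{(A1).} This holds because $\KL(\theta^*|\theta)\ge 0$ with equality only at $\theta^*$.
\item \textbf{(A2).} Continuity on all of $\Delta_{K-1}$ fails, since $\widehat{H}=-\infty$ on faces where $\theta_i=0$ for some $i$ with $\theta_i^*>0$. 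I would instead use the strict-gap substitute \eqref{eq:continuity_relaxation}, which the paper notes suffices. It follows from lower semicontinuity of $\theta\mapsto \KL(\theta^*|\theta)$ on the compact set $\Delta_{K-1}\setminus U$ together with positivity of $\KL$ there.
\item \textbf{(A3).} This holds with multipliers $\lambda_i=1$, by \eqref{eq:gradient_of_wide_hat_H}.
\item \textbf{(A4).} This follows from the determinant computation \eqref{eq:det_of_projected_KL}. Positivity of the determinant alone does not give positive definiteness, but \eqref{eq:reduced_hessian_KL} exhibits the matrix as a positive diagonal plus a positive semidefinite rank-one term, so it is positive definite. The case $m=K-1$ is vacuous.
\end{itemize}

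Next we specialize the constant \eqref{eq:C_H_constant} to $\widehat{H}$. With $\lambda_k=1$ the factor $\prod_k\lambda_k^{-\alpha_k}$ equals $1$. The reduced Hessian determinant is $\prod_{j>m}(\theta_j^*)^{-1}$, so its inverse square root contributes $\prod_{j>m}(\theta_j^*)^{1/2}$. Multiplying by $\mathrm{Dir}_{\alpha,m}(\theta^*)=B(\alpha)^{-1}\prod_{j>m}(\theta_j^*)^{\alpha_j-1}$ yields the exponent $\alpha_j-\tfrac12$. When $m=K-1$, we use \eqref{eq:C_H_constant_alternative} instead, which gives $\theta_K^{*\,\alpha_K-1}$. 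This agrees with the general formula because the empty determinant contributes $1$ and the factor $(\theta_K^*)^{1/2}$ equals $1$ when $\theta_K^*=1$.

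The only step needing care is (A2), since Theorem~\ref{thm:boundary-laplace_standard} is stated for $H:\mathbb{R}^K\to\mathbb{R}$ continuous on the simplex. I would rely on the paper's remark that its proof only uses the strict gap, entering through the localization lemma. Note also that on the truncated simplex used in the proof $\widehat{H}$ is finite and smooth, so Lemma~\ref{lemma: bounds on maximum} and the dominated convergence argument go through unchanged.
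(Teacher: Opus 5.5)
Your proposal is correct and follows the paper's route: verify (A1)--(A4) for $\widehat{H}$, apply Theorem~\ref{thm:boundary-laplace_standard} at scale $n^\gamma$ to obtain \eqref{eq:main_widehat_H}, and combine it with the exact identity $\mathbb{E}_{\mathrm{Dir}_\alpha}[e^{n^\gamma\widehat{H}}]=e^{n^\gamma H(\theta^*)-n^\gamma\theta^*\cdot\log\theta^*}B(\alpha+n^\gamma\theta^*)/B(\alpha)$, with $B(\alpha)$ cancelling in the constant. Your justification of (A4) via the positive-diagonal-plus-rank-one structure of \eqref{eq:reduced_hessian_KL} is cleaner than the paper's, which infers positive definiteness from positivity of the determinant alone.
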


\begin{remark} The result holds for any $\theta^*$ on the simplex (not just for a maximizer of $H$) since the choice of $\theta^*$ in the definition of $\widehat{H}$ is arbitrary.
The same result can proved directly by applying Stirling's formula. 

\end{remark}

This lemma will be important in quantifying the variance reduction achieved by the IS estimator.

\subsection{Analysis of MSE Reduction using Importance Sampling}

\subsubsection{Proof of Theorem \ref{thm:boundary-variance-is} (Variance reduction by the IS estimator)}
\label{sec:appendix-proof-boundary-variance-is}

\begin{proof}

The variance ratio is
\begin{align}
\label{eq:variance_ratio}
\frac{\text{Var}(\hat{p}_{\text{IS}}^{\gamma})}{\text{Var}(\hat{p}_{\text{MC}})} & =\frac{(1/N)\text{Var}_{\alpha+n^\gamma \theta^{*}}\left(e^{nH(\theta)}\frac{\text{Dir}_{\alpha}(\theta)}{\text{Dir}_{\alpha+n^\gamma\theta^{*}}(\theta)}\mathbf{1}_{\Delta_{K-1}^{\epsilon}}\right)}{(1/N)\text{Var}_{\alpha}\left(e^{nH(\theta)}\right)} \nonumber \\
 & =\frac{\mathbb{E}_{\text{Dir}_{\alpha+n^\gamma \theta^{*}}}\left[\left(e^{nH(\theta)}\frac{\text{Dir}_{\alpha}(\theta)}{\text{Dir}_{\alpha+n^\gamma\theta^{*}}(\theta)}\mathbf{1}_{\Delta_{K-1}^{\epsilon}}\right)^{2}\right]-\mathbb{E}_{\text{Dir}_{\alpha}}
 [e^{nH(\theta)}\mathbf{1}_{\Delta_{K-1}^{\epsilon}}]^{2}}{\mathbb{E}_{\text{Dir}_{\alpha}}\left[e^{2nH(\theta)}\right]-\mathbb{E}_{\text{Dir}_{\alpha}}[e^{nH(\theta)}]^{2}}
\end{align}
where the second moment of $\hat{p}_{\text{IS}}^{\gamma}$ can further expressed as
\begin{align}
\label{eq:second_moment_rewritten}
\mathbb{E}_{\text{Dir}_{\alpha+n^\gamma\theta^{*}}}\left[\left(e^{nH(\theta)}\frac{\text{Dir}_{\alpha}(\theta)}{\text{Dir}_{\alpha+n^\gamma\theta^{*}}(\theta)}\mathbf{1}_{\Delta_{K-1}^{\epsilon}}\right)^{2}\right]
&= \mathbb{E}_{\text{Dir}_{\alpha}}\left[e^{2nH(\theta)}\frac{\text{Dir}_{\alpha}(\theta)}{\text{Dir}_{\alpha+n^\gamma\theta^{*}}(\theta)}\mathbf{1}_{\Delta_{K-1}^{\epsilon}}\right] \nonumber \\ 
 & =\frac{B(\alpha+n^\gamma\theta^{*})}{B(\alpha)} \mathbb{E}_{\text{Dir}_{\alpha}}\left[e^{2nH(\theta)-n^\gamma\theta^{*}\cdot\log\theta}\mathbf{1}_{\Delta_{K-1}^{\epsilon}}\right] \nonumber \\
 & =\frac{B(\alpha+n^\gamma\theta^{*})e^{-n^\gamma\theta^{*}\cdot \log\theta^{*}}}{B(\alpha)}\mathbb{E}_{\text{Dir}_{\alpha}}\left[e^{2nH(\theta)+n^\gamma\text{KL}(\theta^{*}|\theta)}\mathbf{1}_{\Delta_{K-1}^{\epsilon}}\right],
\end{align}
where, as before, $\text{KL}(\theta^{*}|\theta)$
is the KL divergence defined in \eqref{eq:KL_divergence}. By the Laplace method (Thm \ref{thm:boundary-laplace-with-kl}), the last expectation in \eqref{eq:second_moment_rewritten} is of the same asymptotic order as $\mathbb{E}_{\text{Dir}_{\alpha}}\left[e^{2nH(\theta)}\right]$ (Thm \ref{thm:boundary-laplace_standard}).
Moreover the Beta function approximation lemma (Lemma~\ref{lem:beta-asymptotic}) gives that the first factor of \eqref{eq:second_moment_rewritten} is
\begin{equation}
    \frac{B(\alpha+n^\gamma\theta^{*})e^{-n^\gamma\theta^{*}\cdot \log\theta^{*}}}{B(\alpha)} \sim \Theta(n^{-\gamma\frac{K-1-m}{2}}n^{-\gamma\sum_{i =1}^{m}\alpha_{i}}) 
\end{equation}
which decays slower than $\mathbb{E}_{\text{Dir}_{\alpha}}
 [e^{nH(\theta)}]$ since $\gamma<1$. Moreover, since $\mathbb{E}_{\text{Dir}_{\alpha}}
 [e^{nH(\theta)}]$ and $\mathbb{E}_{\text{Dir}_{\alpha}}
 [e^{2nH(\theta)}]$ have the same polynomial factor, and since $\mathbb{E}_{\text{Dir}_{\alpha}}
 [e^{nH(\theta)}\mathbf{1}_{\Delta_{K-1}^\epsilon}]$ is of the same order as $\mathbb{E}_{\text{Dir}_{\alpha}}
 [e^{nH(\theta)}]$ by Localization lemma (Supplemental Material \ref{sec:localization_lemma}), we have that the second moments are the dominating terms in both the numerator and the denominator in \eqref{eq:variance_ratio}. Hence it suffices to look at the ratio of the second moments for the variance ratio. 

The ratio of second moments satisfies
    \begin{align*}
\frac{\mathbb{E}_{\text{Dir}_{\alpha}}\left[e^{2nH(\theta)}\frac{\text{Dir}_{\alpha}(\theta)}{\text{Dir}_{\alpha+n^\gamma   \theta^{*}}(\theta)}\mathbf{1}_{\Delta_{K-1}^{\epsilon}}\right]}{\mathbb{E}_{\text{Dir}_{\alpha}}\left[e^{2nH(\theta)}\right]} 
& =\frac{\frac{e^{-n^\gamma\theta^{*} \cdot \log\theta^{*}}B(\alpha+n^\gamma \theta^{*})}{B(\alpha)}\mathbb{E}_{\text{Dir}_{\alpha}}\left[e^{2nH(\theta)+n^\gamma\text{KL}(\theta^{*}|\theta)}\mathbf{1}_{\Delta_{K-1}^{\epsilon}}\right]}{\mathbb{E}_{\text{Dir}_{\alpha}}\left[e^{2nH(\theta)}\right]}\\
 & \stackrel{Thm \ref{thm:boundary-laplace_standard},\ref{thm:boundary-laplace-with-kl}}{\sim} \frac{C^{\prime}_H}{C_H} \cdot \frac{e^{-n^\gamma \theta^{*} \cdot \log\theta^{*}}B(\alpha+n^\gamma \theta^{*})}{B(\alpha)}\\
 & \stackrel{Lem \ref{lem:beta-asymptotic}}{\sim} \frac{C^{\prime}_H}{C_H}  \cdot \frac{e^{-n^\gamma \theta^{*} \cdot \log\theta^{*}}C_{B}n^{-\gamma\frac{K-1-m}{2}}n^{-\gamma\sum_{i =1}^{m}\alpha_{i}}e^{n^\gamma \theta^{*}\cdot\log\theta^{*}}}{B(\alpha)}\\
 & =Cn^{-\gamma\frac{K-1-m}{2}}n^{-\gamma\sum_{i =1}^{m}\alpha_{i}}
\end{align*}
where 
$C=\frac{C^{\prime}_H}{C_H}  \cdot \frac{C_{B}}{B(\alpha)}=  \frac{C_{B}}{B(\alpha)}>0$. 
\end{proof}

\subsubsection{Proof of Lemma \ref{lem:boundary-bias-is} (Negligible bias of the IS estimator) }
\label{sec:appendix-proof-boundary-bias-is}

\begin{proof}

Define the neighborhood radius $$r_\epsilon := \min_{i: \theta^*_i>0} \left( \theta_i^*-\epsilon\right)>0$$
which is well defined by the definition of $\epsilon$.
Then $||\theta-\theta^*||_1 > r_\epsilon$ on $\Delta_{K-1} \backslash \Delta_{K-1}^\epsilon$. Since $|| \cdot||_1 \le \sqrt{K} \,||\cdot||_2$ on $\mathbb{R}^K$, we have that
\begin{equation}
\label{eq:complement_truncated_simplex_ball}
    {\Delta}_{K-1}\backslash {\Delta}_{K-1}^\epsilon \subset \{ \theta \in {\Delta}_{K-1}:\, ||\theta -\theta^*||_2 > \frac{r_\epsilon}{\sqrt{K}}\}.
\end{equation}
By the strict gap property \eqref{eq:continuity_relaxation}, which follows from continuity of $H$ on the compact simplex $\Delta_{K-1}$ and uniqueness of the maximizer $\theta^*$, there exists $\delta = \delta(r_\epsilon/\sqrt{K})>0$ such that
 \begin{equation}
 \label{eq:ball_complement_H}
 \sup_{\theta\in \Delta_{K-1}\backslash B_{r_\epsilon/\sqrt{K}}(\theta^*)} H(\theta) 
\leq H(\theta^{*})-\delta.
 \end{equation}
From \eqref{eq:complement_truncated_simplex_ball} and \eqref{eq:ball_complement_H}, we have that $\sup_{\theta\in \Delta_{K-1}\backslash \Delta_{K-1}^{\epsilon}} H(\theta) 
\leq H(\theta^{*})-\delta.$
This gives
\[
\text{\text{Bias}}(\hat{p}_{\text{IS}}^{\gamma})=\mathbb{E}_{\text{Dir}_{\alpha}}[e^{nH(\theta)}\mathbf{1}_{\left(\Delta_{K-1}\backslash\Delta_{K-1}^{\epsilon}\right)}]\leq e^{n(H(\theta^{*})-\delta)}.
\]
Next we bound the variance of the standard Monte Carlo estimator from below.
From Theorem \ref{thm:boundary-laplace_standard}, we have that $$\mathbb{E}_{\text{Dir}_{\alpha}}\left[e^{2nH(\theta)}\right]=\Theta(n^{-\frac{(K-1-m)}{2}}n^{-\sum_{i=1}^{m}\alpha_{i}}e^{2nH(\theta^*)})$$ and 
$$\mathbb{E}_{\text{Dir}_{\alpha}}\left[e^{nH(\theta)}\right]^{2}=\Theta (n^{-(K-1-m)}n^{-2\sum_{i=1}^{m}\alpha_{i}}e^{2nH(\theta^{*})}).$$ 
Hence for large $n$, there exists a constant $C>0$ such that
\[
\text{Var}_{\alpha}(e^{nH(\theta)})\geq C\cdot n^{-\frac{(K-1-m)}{2}}n^{-\sum_{i=1}^{m}\alpha_{i}}e^{2nH(\theta^{*})}.
\]
Combining these bounds, we get
\begin{align*}
\frac{\text{Bias}^2(\hat{p}_{\text{IS}}^{\gamma})}{\text{Var}(\hat{p}_{\text{MC}})} & \leq\frac{e^{2n(H(\theta^{*})-\delta)}}{C\cdot n^{-\frac{(K-1-m)}{2}}n^{-\sum_{i=1}^{m}\alpha_{i}}e^{2nH(\theta^{*})}}\\
 & =C^{-1}n^{\frac{K-1-m}{2}}n^{\sum_{i=1}^{m}\alpha_{i}}e^{-2n\delta}\\
 & =O(e^{-2n\delta'})\quad\forall \enspace 0<\delta'<\delta.
\end{align*}
\end{proof}

\subsubsection{Proof of Theorem~\ref{thm:boundary-mse-is} (MSE reduction of the IS estimator)} 
\label{sec:appendix-proof-boundary-mse-is}

\begin{proof}[Proof of Theorem \ref{thm:boundary-mse-is}]

Since $\hat{p}_{\mathrm{MC}}$ is unbiased $\text{MSE}(\hat{p}_{\mathrm{MC}})=\text{Var}(\hat{p}_{\mathrm{MC}})$. From Lemma \ref{lem:boundary-bias-is} we have that
\begin{equation}
\frac{\text{Var}(\hat{p}_{\mathrm{IS}})}{\text{MSE}(\hat{p}_{\mathrm{MC}})}
\le 
\frac{\text{MSE}(\hat{p}_{\mathrm{IS}})}{\text{MSE}(\hat{p}_{\mathrm{MC}})}
\le 
\frac{\text{Var}(\hat{p}_{\mathrm{IS}})}{\text{MSE}(\hat{p}_{\mathrm{MC}})}+ O(e^{-2n\delta'})
\end{equation}
for some $\delta'>0$. Since any function $f$ in $O(e^{-2n\delta'})$ is also in $o(n^{-k})$ for every $k > 0$, the result follows from Theorem \ref{thm:boundary-variance-is}. 
\end{proof}

\subsection{Analysis of KL-Based Control Variate}
\label{sec: control_variate_analysis}

To analyze the correlation, an important quantity to analyze is $\mathbb{E}[e^{n(H(\theta)+\widehat{H}(\theta))}]$ which is the leading order term of the covariance. We note that the sum $H+\widehat{H}$ satisfies (A1)--(A4), which allows the application of Theorem \ref{thm:boundary-laplace_standard}.

\paragraph{(A1)--(A4) for $H+\widehat{H}.$}

From the discussion of $\widehat{H}$ in Section \ref{sec:reduced_KL}, it is clear that $H+\widehat{H}$ is uniquely maximized at the same maximizer $\theta^*$ of $H$. Moreover $H+\widehat{H}$ satisfies (A2) in the sense that it satisfies the strict gap condition \eqref{eq:continuity_relaxation} instead of the full continuity on $\Delta_{K-1}$.

The problem of maximizing $H+\hat{H}$ admits KKT multipliers $\tilde{\lambda}$ associated with the inequality constraints $\theta_i\ge0$, given by
$$\tilde{\lambda}
=
\lambda
+
\sum_{i=1}^{m} e_i,\quad e_i\in\mathbb{R}^K, $$
where $\lambda$ is the vector of KKT multipliers for $H$ discussed in \eqref{eq:KKT_conditions}.
Equivalently,
$$\tilde{\lambda}_i=
\begin{cases}
\lambda_i+1, & \theta_i^*=0,\\
\lambda_i, & \theta_i^*>0.
\end{cases}$$
This follows from the KKT multiplier properties of $\widehat{H}$ in \eqref{eq:gradient_of_wide_hat_H},
which modifies the stationary condition in \eqref{eq:KKT_conditions} by adding one unit in each strictly positive coordinate. Since $\tilde{\lambda}_i>0$, strict complementarity (A3) always holds for $H+\widehat H$ at $\theta^*$ even if (A3) fails for $H$.

By the negative definiteness of the reduced Hessians of $H$ and $\widehat{H}$, 
$$U^\top \nabla_\theta^2 (H+\widehat H)(\theta^*) U \prec 0$$
which gives (A4).

\subsubsection{Proof of Theorem \ref{thm:CV_correlation_boundary}}
\label{sec:CV_limiting_correlation_boundary}

\begin{proof}[Proof of Theorem \ref{thm:CV_correlation_boundary}]

Let $\rho_n$ denote the correlation in \eqref{eq:correlation}. We study the squared correlation $\rho_n^{2}$:
\begin{align}
 \rho_n^{2}=\frac{\text{Cov}(e^{nH(\theta)},e^{n\widehat{H}(\theta)})^{2}}{\text{Var}(e^{nH(\theta)})\text{Var}(e^{n\widehat{H}(\theta)})}.
\end{align}
The leading order term in the numerator is the square of $\mathbb{E} \left[e^{n\left(H(\theta)+\widehat{H}(\theta) \right)}\right]$. On the other hand, leading order term in the denominator should be the product of $\mathbb{E}\left[e^{2nH(\theta)} \right]$ and $\mathbb{E}\left[e^{2n\widehat{H}(\theta)} \right]$. The outline of the proof is the following:
\begin{enumerate}[itemsep=0pt, topsep=0pt, parsep=0pt, partopsep=0pt]
\item We apply the Laplace method to obtain the asymptotic rate for the leading order term $\mathbb{E} \left[e^{n\left(H(\theta)+\widehat{H}(\theta) \right)}\right]$ in the numerator.
\item Repeat for the leading terms $\mathbb{E}\left[e^{2nH(\theta)} \right]$ and $\mathbb{E}\left[e^{2n\widehat{H}(\theta)} \right]$ in the denominator of $\rho_n^2$
\item Evaluate the limiting correlation. The numerator and denominator have identical scaling in $n$, so the squared correlation is determined by the constants in the asymptotics.
\end{enumerate}

\paragraph{1. Asymptotics for the covariance.} 

We proceed to analyze the leading-order term in the covariance. It is useful to introduce the notation $$\widetilde{H}(\theta) := H(\theta) + \widehat{H}(\theta)$$ for the sum. Since $\widetilde{H}$ satisfies (A1)--(A4), Theorem \ref{thm:boundary-laplace_standard} gives that
\begin{align}
\label{eq:tilde_H_laplace_1}
\mathbb{E}[e^{n\widetilde{H}(\theta)}] 
 &\sim C_{\widetilde{H}}n^{-\frac{(K-1-m)}{2}}n^{-\sum_{i=1}^{m}\alpha_{i}}e^{n\widetilde{H}(\theta^{*})} 
\end{align}
where $C_{\widetilde{H}}$ is the constant given by
\begin{equation}
C_{\widetilde{H}}=\text{Dir}_{\alpha,m}(\theta^{*})\prod_{k=1}^{m}(\lambda_k+1)^{-\alpha_{k}}\Gamma(\alpha_{k})\frac{(2\pi)^{\frac{K-1-m}{2}}}{|\det(U^{\top}\left(\nabla^{2}H(\theta^{*})- \nabla^{2}\text{KL}(\theta^{*}|\theta)\big|_{\theta=\theta^*}\right)U)|^{1/2}}.
\end{equation}
Using $$\max_{\theta \in \Delta_{K-1}}\widetilde{H}(\theta) = \widetilde{H}(\theta^*)= 2H(\theta^*),$$ \eqref{eq:tilde_H_laplace_1} reduces to
\begin{align}
\label{eq:tilde_H_laplace}
\mathbb{E}[e^{n\widetilde{H}(\theta)}] 
 &\sim C_{\widetilde{H}}n^{-\frac{(K-1-m)}{2}}n^{-\sum_{i=1}^{m}\alpha_{i}}e^{2nH(\theta^{*})} 
\end{align}

\paragraph{2. Asymptotics for the variance terms.}
We now analyze the variance terms in the denominator. By the Laplace method in Theorem \ref{thm:boundary-laplace_standard} we have that
\begin{align}
\label{eq:product_term1_correlation}
\mathbb{E}_{\text{Dir}_\alpha} \left[ e^{2nH({\theta})}\right]
 &\sim C_H \, (2n)^{-\frac{(K-1-m)}{2}}(2n)^{-\sum_{i=1}^{m}\alpha_{i}}e^{2nH(\theta^{*})}
\end{align}
where
\[
C_H =\text{Dir}_{\alpha,m}(\theta^{*}) \prod_{k=1}^{m}\lambda_{k}^{-\alpha_{k}}\Gamma(\alpha_{k})\frac{(2\pi)^{\frac{K-1-m}{2}}}{|\det(U^{\top}\nabla^{2}H(\theta^{*})U)|^{1/2}}.\]
    From earlier calculation in \eqref{eq:main_widehat_H}, replacing $n^\gamma$ with $2n$, we have that 
\begin{align}
\label{eq:product_term2_correlation}
\mathbb{E}_{\text{Dir}_\alpha} \left[ e^{2n\widehat{H}({\theta})}\right] 
 & \sim C_{B} \cdot e^{2nH(\theta^{*})}(2n)^{-\frac{K-1-m}{2}}(2n)^{-\sum_{i=1}^m \alpha_{i}}
\end{align}
where
\begin{align}
    C_{B} &=\frac{(2\pi)^{\frac{K-1-m}{2}}}{B(\alpha)}\prod_{i=1}^{m}\Gamma(\alpha_{i})\prod_{k = m+1}^K(\theta_{k}^*)^{\alpha_{k}-\frac{1}{2}} \nonumber \\
    &= \frac{(2\pi)^{\frac{K-1-m}{2}}\text{Dir}_{\alpha,m}(\theta^{*})\prod_{i=1}^{m}\Gamma(\alpha_{i})}{\left|\det\!\left(U^{\top}(-\nabla^{2}\text{KL}(\theta^{*}|\theta)\big|_{\theta=\theta^*}U\right)\right|^\frac{1}{2}}.
\end{align}
Note that by taking $n$ instead of $2n$ in \eqref{eq:product_term1_correlation} and \eqref{eq:product_term2_correlation} to get asymptotics for  $\mathbb{E}_{\text{Dir}_\alpha} \left[ e^{nH({\theta})}\right]$ and $\mathbb{E}_{\text{Dir}_\alpha} \left[ e^{n\widehat{H}({\theta})}\right]$, and comparing the product with the asymptotics for $\mathbb{E}[e^{n\widetilde{H}(\theta)}]$ in \eqref{eq:tilde_H_laplace}, we can confirm that $\mathbb{E}\left[e^{n\left(H(\theta)+\widehat{H}(\theta) \right)}\right]$ is the leading order of the covariance. In other words, 
\begin{align*}
\text{Cov}(e^{nH(\theta)},e^{n\widehat{H}(\theta)}) &= \mathbb{E}\left[e^{n\left(H(\theta)+\widehat{H}(\theta) \right)}\right]
- \mathbb{E}\left[e^{n\widehat{H}(\theta)}\right] \mathbb{E}\left[e^{nH(\theta)}\right] \\
&\sim C_{\widetilde{H}}n^{-\frac{(K-1-m)}{2}}n^{-\sum_{i=1}^{m}\alpha_{i}}e^{2nH(\theta^{*})}.
\end{align*}

\paragraph{3. Evaluate the limiting correlation.}
We substitute the asymptotics for the leading terms in the numerator and denominator of $\rho^2_n$ and obtain the following:
\begin{align*}
\label{eq:limiting_correlation_boundary}
\rho^{2}_n &\sim\frac{\left(C_{\widetilde{H}}n^{-\frac{(K-1-m)}{2}}n^{-\sum_{i=1}^{m}\alpha_{i}}e^{2nH(\theta^{*})}\right)^{2}}{\left( C_{H}\cdot(2n)^{-\frac{(K-1-m)}{2}}(2n)^{-\sum_{i=1}^{m}\alpha_{i}}e^{2nH(\theta^{*})} \right) \left(C_{B} \, e^{2nH(\theta^{*})}\,(2n)^{-\frac{K-1-m}{2}}(2n)^{-\sum_{i=1}^{m}\alpha_{i}} \right)} \nonumber \\
 &= \frac{C_{\widetilde{H}}^2}{C_{H}\cdot C_{B}} 2^{(K-1-m) + 2\sum_{i=1}^{m} \alpha_{i}}  \nonumber \\
 &= \frac{\left(\text{Dir}_{\alpha,m}(\theta^{*}) \prod_{k=1}^{m}(\lambda_k+1)^{-\alpha_{k}}\Gamma(\alpha_{k})\frac{(2\pi)^{\frac{K-1-m}{2}}}{|\det(U^{\top}\nabla^{2}\widetilde{H}(\theta^*)U)|^{1/2}}\right)^2}{\left(  \text{Dir}_{\alpha,m}(\theta^{*})\prod_{k=1}^{m}\lambda_{k}^{-\alpha_{k}}\Gamma(\alpha_{k})\frac{(2\pi)^{\frac{K-1-m}{2}}}{|\det(U^{\top}\nabla^{2}H(\theta^{*})U)|^{1/2}} \right) } \times \frac{2^{\,(K-1-m) + 2\sum_{i=1}^{m}\alpha_i}}{ \frac{(2\pi)^{\frac{K-1-m}{2}}\text{Dir}_{\alpha,m}(\theta^{*})\prod_{i=1}^{m}\Gamma(\alpha_{i})}{\left|\det\!\left(U^{\top}(-\nabla^{2}\text{KL}(\theta^{*}|\theta)\big|_{\theta=\theta^*}U\right)\right|^\frac{1}{2}}} \nonumber \\
&\stackrel{eq.\eqref{eq:det_of_projected_KL}}{=}   \prod_{k=1}^{m}\left(\frac{4 \lambda_k}{(\lambda_k+1)^2}\right)^{\alpha_k}
\left( 2^{(K-1-m)} \, \frac{\left|\det\!\left(U^{\top}\nabla^{2}H(\theta^{*})U\right)\right|^\frac{1}{2} \, \left|\det\!\left(U^{\top}(-\nabla^{2}\text{KL}(\theta^{*}|\theta)\big|_{\theta=\theta^*}U\right)\right|^\frac{1}{2}}
{\left|\det\!\left(U^{\top}\left(\nabla^{2}H(\theta^{*})-\nabla^{2}\text{KL}(\theta^{*}|\theta)\big|_{\theta=\theta^*}\right)U\right)\right|}\right) 
\end{align*}
where we utilize the identity in \eqref{eq:det_of_projected_KL} in the last equality.

\end{proof}

\subsubsection{Proof of Corollary~\ref{cor:CV_correlation_interior} (Limiting Correlation in the Interior Case)}
\label{sec:appendix-proof-CV_correlation_interior}

\begin{proof}[Proof of Corollary~\ref{cor:CV_correlation_interior}]

From the discussion of $U$ in \eqref{eq:U_basis}, when $\theta^*$ is an interior point $U=A$. Since $A$ has full rank and $\nabla^2 H(\theta^*)\prec 0$, $|\det(A^{\top}\nabla^2H (\theta^{*})A)|=\det(-A^{\top}\nabla^2 H(\theta^{*})A)$. 
This can be further evaluated by defining $$C=\left[\begin{array}{cc}
A & \mathbf{1}_{K}\end{array}\right]\in\mathbb{R}^{K\times K}$$ and using the Schur complement of $-A^{\top}\nabla^2 H(\theta^{*})A$ in $-C^{\top}\nabla^{2} H(\theta^{*})C$ (cf. \cite{boyd2004convex} A.5.5) to get
\begin{equation}
\det(-A^{\top}\nabla^2H(\theta^{*})A) 
=\det\left(-\nabla^{2}H(\theta^{*})\right)\cdot\left(\mathbf{1}_{K}^{\top}(-\nabla^{2}H(\theta^{*}))^{-1}\mathbf{1}_{K}\right).
\label{eq:schur}
\end{equation}

For $H$, we can observe that due to the expression for its Hessian $\nabla^{2}H(\theta)$ in \eqref{eq:hess}, we have the identity 
\begin{equation}
    \label{eq:hessian_gradient_identity}
    \nabla H(\theta)=-\nabla^2H(\theta) \theta.
\end{equation}
Since $\nabla H(\theta^*) = \mathbf{1}_{K}$, this implies that $(-\nabla^{2}H(\theta^{*}))^{-1} \mathbf{1}_{K} = \theta^{*}$ and therefore:
\[
\left(\mathbf{1}_{K}^{\top}(-\nabla^{2}H(\theta^{*}))^{-1}\mathbf{1}_{K}\right) = 1
\]
and by~\eqref{eq:schur}, we have that $$\det(-A^{\top}\nabla^2H(\theta^{*})A) 
=\det\left(-\nabla^{2}H(\theta^{*})\right).$$

Since $\KL$ and $\widetilde{H}(\theta)$ are strictly concave at $\theta^*$, we also have that expression~\eqref{eq:schur} also holds for their Hessians.
To further simplify the quadratic forms, we can observe from the expression of $\nabla^2 \KL$ in \eqref{eq:KL_Hessian} that
$(\nabla^{2}\KL(\theta^{*}|\theta)\big|_{\theta=\theta^*})^{-1}\mathbf{1}_{K} = \theta^{*}$ and so $\mathbf{1}_{K}^{\top}(\nabla^{2}\KL(\theta^{*}|\theta)\big|_{\theta=\theta^*})^{-1}\mathbf{1}_{K} = 1$. Finally, since $\widetilde{H}(\theta)$ involves the sum of $-\KL(\theta^*|\theta)$ and $H(\theta)$, it follows that $(-\nabla^{2}\widetilde{H}(\theta^*))\theta^{*} = 2\cdot\mathbf{1}_{K}$ and $\mathbf{1}_{K}^{\top}(-\nabla^{2}\widetilde{H}(\theta^{*}))^{-1}\mathbf{1}_{K} = 1/2$. In total this gives
\begin{align}
 \det(A^{\top}\nabla^2\KL(\theta^{*}|\theta)\big|_{\theta=\theta^*}A) 
&=\det\left(\nabla^{2}\KL(\theta^{*}|\theta)\big|_{\theta=\theta^*}\right) \\
\det(-A^{\top}\nabla^2\widetilde{H}(\theta^{*})A) 
&=\frac{1}{2}\det\left(-\nabla^{2}\widetilde{H}(\theta^{*})\right).
\end{align}

\end{proof}

\subsubsection{Proof of Theorem \ref{thm:epsilon_main_result} (Almost Mutually Orthogonal Case)}
\label{sec:appendix-proof-epsilon_main_result}

We collect several auxiliary lemmas used in the proof of 
Theorem~\ref{thm:epsilon_main_result}. For the proofs refer to Supplemental Material \ref{sec:aux_lemmas_lDA_orthogonal}.

\begin{lem}
\label{lem:gtbound_lemma}
Let
\[
f(t)=\frac{(t-1)^2}{t},
\qquad
g(t)=\log\!\left(\frac{\tfrac12(1+t)}{t^{1/2}}\right),
\quad t>0.
\]
Then
\[
g(t)\le \frac{1}{8} f(t),
\qquad t>0.
\]
\end{lem}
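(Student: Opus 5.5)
The inequality is one-variable, so I will reduce it to elementary calculus. Write $t=s^2$ with $s>0$. Then $g(t)=\log\frac{1+s^2}{2s}$ and $f(t)=\frac{(s^2-1)^2}{s^2}=\left(s-\tfrac1s\right)^2$. Set $x:=\frac{1}{2}\left(s+\frac1s\right)\ge 1$. Then $g=\log x$, and since $\left(s-\tfrac1s\right)^2=\left(s+\tfrac1s\right)^2-4$, we get $f=4x^2-4=4(x^2-1)$.

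The claim becomes
\[
\log x\le \tfrac12(x^2-1),\qquad x\ge 1.
\]
Both sides vanish at $x=1$. The derivatives are $1/x$ and $x$, and $1/x\le x$ for $x\ge1$. So the difference $\tfrac12(x^2-1)-\log x$ is nondecreasing on $[1,\infty)$ and hence nonnegative there.

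Alternatively, one can use $\log x\le x-1$ together with $x-1\le \tfrac12(x-1)(x+1)$, which holds because $x+1\ge 2$.

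The only point that needs care is checking that the substitution covers all $t>0$. It does: $s=\sqrt t$ ranges over $(0,\infty)$, and by AM–GM $x\ge 1$, so every $t>0$ lands in the domain $x\ge1$ of the reduced inequality. Equality holds exactly at $t=1$. There is no real obstacle; the substitution $x=\frac{1+t}{2\sqrt t}$ is the key step, since it turns both $f$ and $g$ into functions of the single quantity $x$.
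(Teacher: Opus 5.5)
Your proof is correct and follows essentially the paper's route. The paper uses the same substitution $x=\frac{1+t}{2\sqrt t}$ and the bound $\log x\le x-1$, then checks $x-1\le \frac18 f(t)$ via $(\sqrt t+1)^2\ge 4\sqrt t$; this is your alternative argument written in $t$ instead of $x$, and your identity $\frac18 f=\frac12(x^2-1)$ is slightly cleaner because it avoids the paper's division by $f(t)$, which vanishes at $t=1$.
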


\begin{lem}
\label{lem:Frobenius_bound}
Let $Z \in \mathbb{S}_{++}^K$ have eigenvalues $\lambda_1, \dots ,\lambda_K$, and let $\lambda_{\min}(Z) $ denote the smallest eigenvalue of $Z$. Then
\begin{equation}
\label{eq:Frobenius_norm_bound}
\sum_{i=1}^K f(\lambda_i)
\le
\frac{\|Z-I\|_F^2}{\lambda_{\min}(Z)} .
\end{equation}
\end{lem}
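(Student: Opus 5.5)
The plan is to combine the eigenvalue form of $f$ with a Frobenius-norm estimate. Write $Z$ in its eigenbasis, $Z = O\Lambda O^\top$ with $O$ orthogonal and $\Lambda=\operatorname{diag}(\lambda_1,\dots,\lambda_K)$, all $\lambda_i>0$ since $Z\in\mathbb{S}^K_{++}$. The left side involves only the eigenvalues, and so does the numerator on the right, because the Frobenius norm is orthogonally invariant:
\[
\|Z-I\|_F^2=\|O(\Lambda-I)O^\top\|_F^2=\|\Lambda-I\|_F^2=\sum_{i=1}^K(\lambda_i-1)^2 .
\]
This reduces the matrix statement to a scalar inequality about the eigenvalues.

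The second step is a termwise comparison. For each $i$ we have $f(\lambda_i)=(\lambda_i-1)^2/\lambda_i$ and $\lambda_i\ge\lambda_{\min}(Z)>0$, so
\[
f(\lambda_i)\le \frac{(\lambda_i-1)^2}{\lambda_{\min}(Z)} .
\]
Summing over $i$ and substituting the identity from the first step gives \eqref{eq:Frobenius_norm_bound}.

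There is no real obstacle here. The only point to state explicitly is that $\|\cdot\|_F$ is invariant under conjugation by orthogonal matrices, which follows from $\|M\|_F^2=\operatorname{tr}(M^\top M)$ and the cyclicity of the trace. Positivity of the eigenvalues is exactly what makes $f(\lambda_i)$ well defined and lets us divide by $\lambda_{\min}(Z)$.
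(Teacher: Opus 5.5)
Your proof is correct and matches the paper's argument step for step: the spectral decomposition, orthogonal invariance of the Frobenius norm giving $\|Z-I\|_F^2=\sum_i(\lambda_i-1)^2$, and then the termwise bound $1/\lambda_i\le 1/\lambda_{\min}(Z)$ before summing.
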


\begin{lem}
\label{lem:Z-Ibound_calculation}
Assume $K \ge 3$. Suppose the topic vectors $\phi$ satisfy \textup{(B1)} and are $\varepsilon$-sparse where $\varepsilon<\varepsilon_0$. 
Let $H$ denote the LDA log-likelihood in \eqref{eq:entropy} and let $\theta^*$ be an interior maximizer. Define
$$X=\nabla^{2}\!\KL(\theta^* |\theta)\big|_{\theta=\theta^*},
\qquad Y= -\nabla^{2}H(\theta^*),
\qquad
Z = X^{-1/2} Y X^{-1/2}.$$
Then there exists $C>0$, independent of $\varepsilon$, such that
$$\frac{\|Z-I\|_F}{\sqrt{\lambda_{\min}(Z)}}
\le
C \varepsilon .$$
\end{lem}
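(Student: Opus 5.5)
The plan is to bound $\|Z-I\|_F$ entrywise by a quantity of order $\varepsilon$, and then use $\varepsilon<\varepsilon_0$ to keep $\lambda_{\min}(Z)$ away from zero. The constants $C^{(1)}_{\max}$, $C^{(2)}_{\max}$ and $F(\varepsilon)$ in \eqref{eq:F_epsilon} suggest this is how the threshold was designed.

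\textbf{Setup.} Since $\theta^*$ is interior, $X=\operatorname{diag}(1/\theta^*_1,\dots,1/\theta^*_K)$ by \eqref{eq:KL_Hessian}. Hence
\[
Z_{ij}=\sqrt{\theta_i^*\theta_j^*}\,Y_{ij}, \qquad Y=\sum_v p_v\,\frac{\phi(v)\phi(v)^\top}{(\theta^{*\top}\phi(v))^2}.
\]
For each word $v$ write $k=k(v)$ and
\[
\phi(v)=\phi_{k}(v)\bigl(e_{k}+r(v)\bigr),
\]
where $r(v)\ge 0$, $r(v)_{k}=0$, and $\|r(v)\|_1\le\varepsilon$ by $\varepsilon$-sparsity and (B1). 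Then
\[
\frac{\phi(v)\phi(v)^\top}{(\theta^{*\top}\phi(v))^2}
=\frac{(e_k+r(v))(e_k+r(v))^\top}{\bigl(\theta_k^*+\theta^{*\top}r(v)\bigr)^2},
\qquad 0\le\theta^{*\top}r(v)\le\theta_{\max}^*\varepsilon .
\]
This expression does not depend on the scale $\phi_k(v)$.

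\textbf{Key step: diagonal entries via stationarity.} The first-order condition \eqref{eq:LDA_gradient} in the interior case gives, for each $k$,
\[
1=\sum_v p_v\,\frac{\phi_k(v)}{\theta^{*\top}\phi(v)}.
\]
Split the sum into words with $k(v)=k$ and words with $k(v)\neq k$. The first group contributes
\[
\frac{p_v}{\theta^*_k+\theta^{*\top}r(v)}\in\Bigl[\frac{p_v}{\theta_k^*}\bigl(1-\tfrac{\theta^*_{\max}}{\theta^*_{\min}}\varepsilon\bigr),\ \frac{p_v}{\theta_k^*}\Bigr].
\]
The second group contributes at most $p_v\varepsilon/\theta^*_{\min}$. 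Writing $q_k=\sum_{v:k(v)=k}p_v$, this yields $q_k/\theta^*_k=1+O(\varepsilon)$, with the implied constant explicit in $\theta^*_{\min}$ and $\theta^*_{\max}$.

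\textbf{Diagonal and off-diagonal bounds.} Expand $Z_{kk}=\theta^*_kY_{kk}$. Its dominant part comes from words with $k(v)=k$, namely
\[
\theta_k^*\sum_{k(v)=k}\frac{p_v}{(\theta_k^*+\theta^{*\top}r(v))^2},
\]
and the remaining words contribute only through $r(v)_k^2\le\varepsilon^2$. Comparing with the stationarity identity above, I aim for $|Z_{kk}-1|\le 2C^{(2)}_{\max}\varepsilon$. For $i\neq j$, every term of $Y_{ij}$ contains at least one factor $r(v)_i$ or $r(v)_j$. Terms with $k(v)\in\{i,j\}$ are linear in $r$, and terms with $k(v)\notin\{i,j\}$ are quadratic in $r$. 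Bounding denominators below by $(\theta^*_{\min})^2$ and using $\sum_v p_v=1$, I expect
\[
|Z_{ij}|\le\varepsilon\bigl(2C^{(1)}_{\max}+C^{(2)}_{\max}\varepsilon\bigr).
\]
Summing squares then gives $\|Z-I\|_F\le\varepsilon\sqrt{F(\varepsilon)}$.

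\textbf{Conclusion.} By Weyl's inequality,
\[
\lambda_{\min}(Z)\ge 1-\|Z-I\|_2\ge 1-\varepsilon\sqrt{F(\varepsilon)} .
\]
Since $\varepsilon\sqrt{F(\varepsilon)}$ is increasing and equals $1/2$ at $\varepsilon_0$, for $\varepsilon<\varepsilon_0$ we get $\lambda_{\min}(Z)\ge 1/2$. Using also $F(\varepsilon)\le F(\varepsilon_0)$, it follows that
\[
\frac{\|Z-I\|_F}{\sqrt{\lambda_{\min}(Z)}}\le\sqrt{2F(\varepsilon_0)}\,\varepsilon ,
\]
so the lemma holds with $C=\sqrt{2F(\varepsilon_0)}$.

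\textbf{Main obstacle.} The hard part is the diagonal estimate. It needs the stationarity identity to cancel the $O(1)$ mismatch between $q_k/\theta_k^*$ and $1$, and all the constants must be tracked so they match $C^{(1)}_{\max}$ and $C^{(2)}_{\max}$. If the exact constants do not come out, any $\varepsilon$-independent constant suffices for the lemma, provided the threshold $\varepsilon_0$ still forces $\lambda_{\min}(Z)\ge 1/2$.
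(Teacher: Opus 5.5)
Your plan is the paper's proof essentially step for step. It uses the same entrywise bounds on $Z-I$, with the stationarity identity cancelling the $O(1)$ part of the diagonal, the same bound $\|Z-I\|_F\le\varepsilon\sqrt{F(\varepsilon)}$, the same Weyl step, and the same final constant $\sqrt{2F(\varepsilon_0)}$. That constant is the paper's $H(\varepsilon_0)$, and your route via $F(\varepsilon)\le F(\varepsilon_0)$ is simpler than the paper's derivative computation.

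There is one bookkeeping error, and it matters because $\varepsilon_0$ is calibrated to the specific $F$. For the off-diagonal terms that are linear in $r$, you cannot bound the denominator by $(\theta^*_{\min})^2$. Doing so gives $\sqrt{\theta_i^*\theta_j^*}\,\varepsilon/(\theta^*_{\min})^2\le C^{(2)}_{\max}\varepsilon$. Since $C^{(2)}_{\max}=C^{(1)}_{\max}\sqrt{\theta^*_{\max}/\theta^*_{\min}}$ is larger than $C^{(1)}_{\max}$, your Frobenius bound would involve a function larger than $F$. Then $\varepsilon<\varepsilon_0$ would no longer let your Weyl argument show $\lambda_{\min}(Z)\ge 1/2$, and your fallback remark fails for exactly this reason.

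The fix is to keep the denominator $(\theta_i^*+\theta^{*\top}r(v))^2\ge(\theta_i^*)^2$ for words with $k(v)=i$. This gives $\sqrt{\theta_j^*}/(\theta_i^*)^{3/2}\le C^{(1)}_{\max}$. The crude $(\theta^*_{\min})^2$ bound is fine only for the terms quadratic in $r$.

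The diagonal is not really the obstacle. With $s_v=\theta^{*\top}\phi(v)$, write
$$Z_{kk}-1=\sum_v p_v\frac{\phi_k(v)}{s_v}\Bigl(\theta_k^*\frac{\phi_k(v)}{s_v}-1\Bigr).$$
Bounding the groups $k(v)=k$ and $k(v)\neq k$ separately gives
$$|Z_{kk}-1|\le\varepsilon\bigl(\theta^*_{\max}/(\theta_k^*)^2+1/\theta^*_{\min}\bigr)\le 2C^{(2)}_{\max}\varepsilon$$
directly. You do not need the intermediate estimate on $q_k/\theta_k^*$.
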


\begin{proof}[Proof of Theorem \ref{thm:epsilon_main_result}] 

The outline of the proof is as follows:
\begin{enumerate}[itemsep=0pt, topsep=0pt, parsep=0pt, partopsep=0pt]
\item Rewrite $\log\rho^2$ in terms of $\log\det Z$ and $\log\det(I+Z)$ where $Z$ is a positive definite matrix.
\item Using properties of $\log\det$, we write $\log\rho^2$ as $ -\sum_{i=1}^K g(\lambda_i)$ where $\lambda_i$ are eigenvalues of $Z$.
\item By combining Lemmas~\ref{lem:gtbound_lemma}, \ref{lem:Frobenius_bound}, and \ref{lem:Z-Ibound_calculation}, we obtain an upper bound on $\sum_{i=1}^K g(\lambda_i)$ of the form $C^2\varepsilon^2$ for some constant $C>0$. Consequently, this yields the lower bound
$\log \rho^2 \geq -C^2\varepsilon^2.$

\end{enumerate} 

\paragraph{1. Rewrite $\log \rho^2$.} The $\log$ form of the correlation quantity $\rho^2$ in \eqref{eq:limiting_rho_interior} can be written as
\begin{align} 
\label{eq:rho_re_expression}
\log \rho^2&=\frac{1}{2}\left(\log\det X+\log\det Y\right)-\log\det(\frac{1}{2}(X+Y))  
\end{align}
This follows from the fact that $$| \det(A)|=\det(-A)$$ for any negative definite matrix $A$. Equivalently we can rewrite \eqref{eq:rho_re_expression} as 
\begin{equation}
    \label{eq:rho_re_expression_Z}
    \log \rho^2 = \frac{1}{2}\log\det Z-\log\det(\frac{1}{2}(I+Z)).
\end{equation}

\paragraph{2. Rewrite $\log\rho^2$ in terms of eigenvalues.}
 By positive definiteness of $X$ and $Y$, $Z$ is positive definite and its eigenvalues are positive. Let $\lambda_1, \dots, \lambda_K$ denote the eigenvalues. Then \eqref{eq:rho_re_expression_Z}
can be written as 
\begin{align}
\label{eq:log_rewrite_g}
\frac{1}{2}\log\det Z-\log\det(\frac{1}{2}(I+Z)) &=\frac{1}{2} \sum_{i=1}^K \log\lambda_i-\sum_{i=1}^K \log (\frac{1}{2}(1+\lambda_i)) \nonumber \\
&= -\sum_{i=1}^K g(\lambda_i).
\end{align}
Therefore from \eqref{eq:log_rewrite_g}, to find a lower bound of $\log \rho^2$, it suffices to find an upper bound of $\sum_{i=1}^K g(\lambda_i)$. 

\paragraph{3. Upper bound of $\sum_{i=1}^K g(\lambda_i)$.} 
By lemma \ref{lem:gtbound_lemma}, we can bound $\sum_{i=1}^K g(\lambda_i)$ by $ \sum_{i=1}^Kf(\lambda_i)$. 
From lemma \ref{lem:Frobenius_bound}, $\sum_{i=1}^Kf(\lambda_i)$ can be upper bounded by $\frac{\|Z-I\|_F^2}{\lambda_{\min}(Z)}$. 
To bound  $\frac{\|Z-I\|_F^2}{\lambda_{\min}(Z)}$, we use the properties of the $Z$ matrix. Since the topic-vector $\phi$ is $\varepsilon$-sparse, lemma \ref{lem:Z-Ibound_calculation} states that there exists a constant $C>0$ such that
$$||Z-I||_F \frac{1}{\sqrt{\lambda_{\min} (Z)}}\le C\varepsilon.$$ Collecting all bounds, we have that
\begin{align*}
    \sum_{i=1}^K g(\lambda_i) 
    &\le \frac{1}{8} \sum_{i=1}^K f(\lambda_i) \\
    &\le  \frac{1}{8} \cdot \frac{\|Z-I\|_F^2}{\lambda_{\min}(Z)}\\
    &\le \frac{1}{8} (C\varepsilon)^2.
\end{align*}
We achieve the desired result by multiplying $-1$ and taking exponent on both sides. 

\end{proof}

\section*{Acknowledgments}
The author thanks Paul Glasserman for numerous discussions and helpful feedback.

\bibliographystyle{plainnat}  
\bibliography{references}

\clearpage
\section{Supplemental Material: Proofs of Technical Lemmas}
\label{app:supplement}

As before, for $\theta^*$ we assume the ordering assumption in \eqref{eq:ordering_assumption}. 

\subsection{Localization Lemma}
\label{sec:localization_lemma}

We state the localization lemma, which allows us to replace the original domain of an integral with a closed neighborhood of the maximizer. We adapt Lemma 38 of \cite{breitung1994asymptotic} to our setting. 

\begin{lem}[Localization]
\label{lem:localization}
Let $G:\Delta_{K-1}\to\mathbb{R}\cup\{-\infty\}$ and let $T$ be the map defined in
\eqref{eq:T_map}. Assume that $y^*\in\tilde{\Delta}_{K-1}$ satisfies $G(T(y^*))\in\mathbb{R}$.
Suppose that for every
$\delta>0$ there exists $\eta(\delta)>0$ such that
\begin{equation}
\label{eq:strict_gap_assumption}
    \sup_{y\in \tilde{\Delta}_{K-1}\setminus B(y^*,\delta)} G(T(y))
\le G(T(y^*))-\eta(\delta).
\end{equation}
Then for any closed set $W\subset\tilde{\Delta}_{K-1}$ such that there exists
$r>0$ with
\begin{equation}
\label{eq:closed_neighborhood_condition}
    \tilde{\Delta}_{K-1}\cap B(y^*,r)\subseteq W,
\qquad
B(y^*,r):=\{y\in\mathbb{R}^{K-1}:\|y-y^*\|_2<r\},
\end{equation}
as $n\to\infty$,
$$
\int_{\tilde{\Delta}_{K-1}} e^{nG(T(y))}
\, \mathrm{Dir}_{\alpha}^{(K-1)}(y)\,dy
\sim
\int_{W} e^{nG(T(y))}
\, \mathrm{Dir}_{\alpha}^{(K-1)}(y)\,dy .
$$
\end{lem}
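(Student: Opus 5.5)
The plan is to split the integral over $\tilde{\Delta}_{K-1}$ as $\int_W + \int_{\tilde{\Delta}_{K-1}\setminus W}$. Then I would show that the second piece is $o$ of the first. It suffices to produce a lower bound on $\int_W$ that decays like $e^{nG(T(y^*))}$ times at most a polynomial (or even subexponential) factor, and an upper bound on the complement of the form $e^{n(G(T(y^*))-\eta)}$.

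\emph{Upper bound on the complement.} Since $\tilde{\Delta}_{K-1}\cap B(y^*,r)\subseteq W$, every $y\in \tilde{\Delta}_{K-1}\setminus W$ lies outside $B(y^*,r)$. By \eqref{eq:strict_gap_assumption} with $\delta=r$, we get $G(T(y))\le G(T(y^*))-\eta(r)$ there. Because $\mathrm{Dir}^{(K-1)}_\alpha$ is a probability density, this gives
\[
\int_{\tilde{\Delta}_{K-1}\setminus W} e^{nG(T(y))}\mathrm{Dir}^{(K-1)}_\alpha(y)\,dy \le e^{n(G(T(y^*))-\eta(r))}.
\]
Points with $G=-\infty$ contribute zero, which causes no difficulty.

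\emph{Lower bound on $\int_W$.} This is the main obstacle, because no continuity of $G$ is assumed. The only available information is the strict gap condition, which bounds $G$ from above away from $y^*$ and says nothing from below near $y^*$. In the paper's uses ($H$, $2H+n^{\gamma-1}$-type perturbations handled separately, $\widehat H$, $H+\widehat H$) the function $G\circ T$ is continuous in a relative neighborhood of $y^*$ in $\tilde{\Delta}_{K-1}$. I would make that implicit hypothesis explicit, since it holds in every application through (A2) or the twice-differentiability of $\KL$ near $\theta^*$.

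Under this hypothesis the argument goes as follows. Fix $\eta'=\eta(r)/2$ and choose $\rho\le r$ such that $G(T(y))\ge G(T(y^*))-\eta'$ on $\tilde{\Delta}_{K-1}\cap B(y^*,\rho)\subseteq W$. This set has positive Lebesgue measure, since $\tilde{\Delta}_{K-1}$ is a convex body and any ball around one of its points meets it in a set of positive volume. The Dirichlet density is positive on the interior, so $p_\rho:=\int_{\tilde{\Delta}_{K-1}\cap B(y^*,\rho)}\mathrm{Dir}^{(K-1)}_\alpha>0$. Hence
\[
\int_W e^{nG(T(y))}\mathrm{Dir}^{(K-1)}_\alpha(y)\,dy\ge p_\rho\, e^{n(G(T(y^*))-\eta(r)/2)}.
\]

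\emph{Conclusion.} The ratio of the complement integral to $\int_W$ is therefore at most $p_\rho^{-1}e^{-n\eta(r)/2}\to0$. This yields
\[
\int_{\tilde{\Delta}_{K-1}} = \int_W\,\bigl(1+o(1)\bigr),
\]
which is the claimed equivalence. If one prefers not to add the continuity hypothesis, the same lower bound can be obtained in the concrete settings from the Laplace asymptotics themselves, since they decay only polynomially relative to $e^{nG(T(y^*))}$. That route would however be circular for Theorem~\ref{thm:boundary-laplace_standard}, so the local-continuity route is the one I would follow.
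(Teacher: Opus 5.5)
Your argument is correct. It is the same localization split the paper imports by citing Breitung's Lemma~38:
\begin{itemize}
\item bound the piece over $\tilde\Delta_{K-1}\setminus W$ by $e^{n(G(T(y^*))-\eta(r))}$, using that $\mathrm{Dir}^{(K-1)}_\alpha$ is a probability density;
\item bound $\int_W$ from below, using a lower bound on $G\circ T$ near $y^*$ together with the positive Dirichlet mass of $\tilde\Delta_{K-1}\cap B(y^*,\rho)$.
\end{itemize}

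You are also right that the extra hypothesis is needed, and in fact it cannot be dropped. Take $G(T(y^*))=0$ and $G(T(y))=-1$ for $y\neq y^*$. The strict gap condition then holds with $\eta\equiv 1$. Yet the left side equals $e^{-n}$, while the right side equals $e^{-n}\int_W \mathrm{Dir}^{(K-1)}_\alpha(y)\,dy$. For $W=\tilde\Delta_{K-1}\cap\{\|y-y^*\|_2\le r\}$ with $r$ small, the ratio therefore does not tend to one.

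The paper's justification lists only integrability, positivity of the density near $y^*$, and positive mass near $y^*$. None of these controls $G\circ T$ from below near $y^*$, which is precisely the step you isolated.

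Your proof actually uses only lower semicontinuity of $G\circ T$ at $y^*$ relative to $\tilde\Delta_{K-1}$, which is weaker than continuity on a neighborhood. This holds in every application in the paper: for $H$ by (A2), and for $\widehat H$ and $H+\widehat H$ because $\KL(\theta^*|\cdot)$ is continuous near $\theta^*$ on $\Delta_{K-1}$. Stating that hypothesis is the right repair, and your fallback via the Laplace asymptotics is unnecessary.
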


\begin{proof}
    For the proof, see p.~53 of \cite{breitung1994asymptotic}. 
    Breitung's Lemma 38 applies with $h$ as $\text{Dir}_\alpha^{(K-1)}$, $f$ as $G(T(\cdot))$, and $F$ as $\tilde{\Delta}_{K-1}$. Note that Breitung assumes continuity of $h$ on a closed domain, which may fail here since $\text{Dir}_\alpha^{(K-1)}$ can be unbounded near the boundary when some $\alpha_i<1$. However, his proof requires only integrability of $e^{G(T(\cdot))}$ against $|h|$ (his assumption 2), positivity near $y^*$ (his assumption 4), and positive mass near $y^*$ (his assumption 5). All hold: $\text{Dir}_\alpha^{(K-1)}(y)>0$ for any $y$ in $\tilde{\Delta}_{K-1}$, $\int_{\tilde{\Delta}_{K-1}}\text{Dir}_\alpha^{(K-1)}(y)dy<\infty$ (which implies $\int_{\tilde{\Delta}_{K-1}} e^{nG(T(y))}
\, \mathrm{Dir}_{\alpha}^{(K-1)}(y)\,dy<\infty$), and $\int_{W\cap\tilde{\Delta}_{K-1}}\text{Dir}_\alpha^{(K-1)}(y)dy>0$ for any neighborhood $W$ of $y^*$.
\end{proof}

This lemma has two immediate consequences.
\begin{enumerate}[itemsep=0pt, topsep=0pt, parsep=0pt, partopsep=0pt]
\item If $G$ is continuous on $\Delta_{K-1}$ and $y^*$ is the unique global maximizer of $G(T(\cdot))$, then the strict gap condition in \eqref{eq:strict_gap_assumption} holds automatically. Continuity on the compact set $\tilde{\Delta}_{K-1}$ implies that $G(T(\cdot))$ attains its maximum on any closed subset, and uniqueness forces a strict inequality away from $y^*$.

\item Taking $G=H$ and $W$ to be the projected truncated simplex $\tilde{\Delta}_{K-1}^{\epsilon}$ defined in \eqref{eq:epsilon_projected_simplex}, and noting that $T$ is a bijection between $\tilde{\Delta}_{K-1}^\epsilon$ and $\Delta_{K-1}^\epsilon$, we obtain
$$
\mathbb{E}_{\mathrm{Dir}_{\alpha}}
\!\left[e^{nH(\theta)}\mathbf{1}_{\Delta_{K-1}^{\epsilon}}\right]
\sim
\mathbb{E}_{\mathrm{Dir}_{\alpha}}
\!\left[e^{nH(\theta)}\right],
\qquad n\to\infty.
$$
\end{enumerate}

\subsection{Bound around Maximum Lemma}
\label{sec:bound_around_maximum_lemma}

\begin{proof}    

We adapt the proof of Lemma 45 (Chapter 5, p. 65) in \cite{breitung1994asymptotic}, which was originally derived for an optimal point satisfying the single boundary constraint ${y_n \ge 0}$ with $y_n^* = 0$. We extend this argument to the case where the optimal point lies on the boundary of multiple constraints ${y_i \ge 0}$ and on the projected simplex $\tilde{\Delta}_{K-1}$.

    Let $\tilde{\theta}^*=T^{-1}(\theta^*)$. WLOG assume that $H(T(\tilde{\theta}^*))=H(\theta^*)=0$. We prove by contradiction. Assume that no constants $c_1,c_2$ exists that satisfies \eqref{eq:H_hess_bound}. Then for every constant $c_1,c_2>0$, we can find $y\in \tilde{\Delta}_{K-1}$ that violates \eqref{eq:H_hess_bound}. Setting $c_1=c_2=i^{-1}$, we can construct a sequence $\{y^{(i)} \}\subset \tilde{\Delta}_{K-1} \backslash \{ \ttheta \}$ such that
   
\begin{equation}
\label{eq:tocontradict_1}
  0 \ge H(T(y^{(i)})) \left( \sum_{i=1}^{m} |y^{(i)}-\ttheta| + \sum_{i=m+1}^{K-1}(y^{(i)}  - \ttheta)^2 \right)^{-1} > -i^{-1}.
\end{equation} 
Using the expression of $P_m$ in \eqref{eq:projection_matrix}, \eqref{eq:tocontradict_1} can be rewritten as
\begin{equation}
\label{eq:tocontradict}
  0 \ge H(T(y^{(i)})) \left( ||P_m (y^{(i)}-\ttheta) ||_{1} + ||(I_{K-1}-P_m)(y^{(i)}  - \ttheta)||_{2}^{2} \right)^{-1} > -i^{-1}.
\end{equation}
    Since $\tilde{\Delta}_{K-1}$ is compact, the sequence $\{y^{(i)}\}$ admits a convergent subsequence, which we denote again by $\{y^{(i)}\}$. By \eqref{eq:tocontradict},
\[
H(T(y^{(i)})) \to H(T(\tilde{\theta}^*))=0 .
\]
Using the strict gap property \eqref{eq:continuity_relaxation} of $H$ around the maximizer, which follows from the continuity of $T$ with a unique maximizer,
it follows that
$$y^{(i)} \to \tilde{\theta}^* .$$
    Next define the sequence $\{\bar{y}^{(i)}\}$ by
$$\bar{y}^{(i)} = P_m \tilde{\theta}^* + (I_{K-1} - P_m)y^{(i)} .$$
By construction, $\bar{y}^{(i)}$ agrees with $\tilde{\theta}^*$ on the first $m$ coordinates and with $y^{(i)}$ on the remaining coordinates. 

Let $i$ be sufficiently large such that $T(y^{(i)}) \in W$, where $W$ is a neighborhood of $\theta^*$ on which $H$ is $C^2$ in assumption (A2). By construction, $ ||\bar{y}^{(i)}-\ttheta||_2 \le || y^{(i)}-\ttheta||_2$ and therefore $T(\bar{y}^{(i)}) \in W$ as well. By adding and subtracting $H(T(\bar{y}^{(i)}))$, the difference can be written as \begin{equation}
\label{eq:split sum}
H(T(y^{(i)}))-H(T(\tilde{\theta}^*)) = [H(T(y^{(i)}))-H(T(\bar{y}^{(i)}))] + [H(T(\bar{y}^{(i)}))-H(T(\tilde{\theta}^*))]
\end{equation}

For the first difference in \eqref{eq:split sum}, we note that every point $z$ on the line segment $\{(1-t)\bar{y}^{(i)}+ty^{(i)}: \,t\in [0,1]\}$ satisfies $||z-\ttheta||\le ||y^{(i)}-\ttheta||$. Therefore the line segment is contained in $T^{-1}(W)$ which is open, and on which $H(T(\cdot))$ is $C^1$. 
The mean value theorem (cf. \cite{rudin1964principles} Thm 5.10) implies that there exists $\gamma_i^{(1)}\in(0,1)$ such that
\begin{align}
\label{eq:first_order_taylor}
    &H(T(y^{(i)}))-H(T(\bar{y}^{(i)})) \nonumber \\
        & = \left(\ A^\top \nabla H(T(\gamma_i^{(1)}y^{(i)}+(1-\gamma_i^{(1)})\bar{y}^{(i))}) \right)^\top  (y^{(i)}-\bar{y}^{(i)}). 
\end{align} 
 Moreover, we now derive an upper bound for \eqref{eq:first_order_taylor}. Since $y^{(i)}\to \ttheta$, we have $\gamma_i^{(1)}y^{(i)}+(1-\gamma_i^{(1)})\bar{y}^{(i)}\to \tilde{\theta}^*$. Moreover, by strict complementarity condition (A3), $\left(A^\top \nabla H(\theta^*)\right)^\top  (y^{(i)}-\bar{y}^{(i)})=-\sum_{k=1}^m \lambda_k y^{(i)}_k\le 0$ where $\lambda_i>0$ for $i\le m$. 
By continuity of $\nabla H(T(\cdot))$, by taking $i$ larger if necessary, we obtain that \begin{align}
\label{eq:intermediate_bound_1}
    \left(A^\top \nabla H(\gamma_i^{(2)}y^{(i)}+(1-\gamma_i^{(2)})\bar{y}^{(i)}) \right)^\top  (y^{(i)}-\bar{y}^{(i)}) 
&\le  \frac{1}{2}\left(A^\top \nabla H(\theta^*)\right)^\top  (y^{(i)}-\bar{y}^{(i)}). 
\end{align}

Next, for the second difference in \eqref{eq:split sum}, similar to above, the line segment $\{(1-t)\ttheta+t\bar{y}^{(i)}: \,t\in [0,1]\}$ is contained in $T^{-1}(W)$ on which $H(T(\cdot))$ is $C^2$. Therefore by Taylor's theorem (cf. \cite{rudin1964principles} Thm 5.15), there exists $\gamma_i^{(2)}\in (0,1)$ such that
\begin{align}
\label{eq:second_order_taylor}
    &H(T(\bar{y}^{(i)}))-H(T(\tilde{\theta}^*))  \nonumber  \\
        & = \left(A^\top \nabla H(\theta^*) \right)^\top  (\bar{y}^{(i)}-\tilde{\theta}^*)  \nonumber \\
        &+ \frac{1}{2} (\bar{y}^{(i)}-\tilde{\theta}^*)^\top A^\top \nabla^2 H(T(\gamma_i^{(2)} \bar{y}^{(i)} +(1-\gamma_i^{(2)})\ttheta)) A(\bar{y}^{(i)}-\tilde{\theta}^*)  \nonumber  \\
        &=\frac{1}{2} (\bar{y}^{(i)}-\tilde{\theta}^*)^\top A^\top \nabla^2 H(T(\gamma_i^{(2)} \bar{y}^{(i)}+(1-\gamma_i^{(2)})\ttheta)) A(\bar{y}^{(i)}-\tilde{\theta}^*)
\end{align}
where the second equality follows from the fact that  $\left(A^\top \nabla H(\theta^*) \right)^\top (\bar{y}^{(i)}-\tilde{\theta}^*) =0$. Like the first difference, by taking $i$ larger if necessary, we can place an upper bound on \eqref{eq:second_order_taylor} as follows
\begin{align}
\label{eq:PBC_conclusion}
    &\frac{1}{2} (\bar{y}^{(i)}-\tilde{\theta}^*)^\top A^\top \nabla^2 H(T(\gamma_i^{(2)} \bar{y}^{(i)}+(1-\gamma_i^{(2)})\ttheta)) A(\bar{y}^{(i)}-\tilde{\theta}^*) \nonumber \\ &\le \frac{1}{4}  (\bar{y}^{(i)}-\tilde{\theta}^*)^\top A^\top \nabla^2 H(\theta^*) A(\bar{y}^{(i)}-\tilde{\theta}^*)  
\end{align}

Combining \eqref{eq:first_order_taylor}, \eqref{eq:intermediate_bound_1}, \eqref{eq:second_order_taylor}, and
\eqref{eq:PBC_conclusion}, and plugging them into \eqref{eq:split sum} gives that
\begin{equation}
\label{eq:PBC_conclusion_2}
    H(T(y^{(i)})) \le \frac{1}{2} \left( \left(A^\top \nabla H(\theta^*)\right)^\top  (y^{(i)}-\bar{y}^{(i)}) + \frac{1}{2} (\bar{y}^{(i)}-\tilde{\theta}^*)^\top A^\top \nabla^2 H(\theta^*) A(\bar{y}^{(i)}-\tilde{\theta}^*)  \right) 
\end{equation} 
The first term in \eqref{eq:PBC_conclusion_2} can be bounded by noting that $\left(A^\top \nabla H(\theta^*)\right)^\top  (y^{(i)}-\bar{y}^{(i)})=-\sum_{k=1}^m \lambda_k y^{(i)}_k$, and
\begin{equation}
\label{eq:bd_1}
    \left( A^\top \nabla H(\theta^*)\right)^\top  (y^{(i)}-\bar{y}^{(i)}) \le \left( \max_{k= 1, \dots, m } \lambda_k \right) \cdot |y^{(i)}-\bar{y}^{(i)}|_1
\end{equation}
For the second term, note that $A\left( \bar{y}^{(i)}-\tilde{\theta}^*\right)\in \mathcal{C}(\theta^*)$ and let $z\in \mathbb{R}^{K-1-m}$ be a vector such that $A\left( \bar{y}^{(i)}-\tilde{\theta}^*\right)=Uz$ where $U$ is a basis of $\mathcal C(\theta^*)$ defined in \eqref{eq:U_basis}. Using that $U$ has orthonormal columns, $||z||=||Uz||=||A(\bar{y}^{(i)}-\tilde{\theta}^*)||$, 
\begin{align*}
    \frac{1}{2} (\bar{y}^{(i)}-\tilde{\theta}^*)^\top A^\top \nabla^2 H(\theta^*) A(\bar{y}^{(i)}-\tilde{\theta}^*) 
    &= \frac{1}{2} z^\top U^\top \nabla^2 H(\theta^*) Uz\\
    &\le \frac{1}{2} \eta_{\max} 
    \left(  U^\top \nabla^2 H(\theta^*) U \right) ||z||^2 \\
    &= \frac{1}{2} \eta_{\max} \left( U^\top \nabla^2 H(\theta^*) U \right) ||A(\bar{y}^{(i)}-\tilde{\theta}^*)||^2
\end{align*}
where $\eta_{\max}$ is the largest eigenvalue of $U^\top \nabla^2H(\theta^*)U$. Since $\eta_{\max}(\left( U^\top \nabla^2 H(\theta^*) U \right))<0$ by (A4) and  $||A(\bar{y}^{(i)}-\tilde{\theta}^*)||^2\ge ||\bar{y}^{(i)}-\tilde{\theta}^*||^2$,
\begin{equation}
\label{eq:bd_2}
        \frac{1}{2} (\bar{y}^{(i)}-\tilde{\theta}^*)^\top A^\top \nabla^2 H(\theta^*) A(\bar{y}^{(i)}-\tilde{\theta}^*) 
  \le  \frac{1}{2} \eta_{\max} \left( U^\top \nabla^2 H(\theta^*) U \right) \cdot ||\bar{y}^{(i)}-\tilde{\theta}^*||^2.
\end{equation}
Plugging \eqref{eq:bd_1} and \eqref{eq:bd_2} into \eqref{eq:PBC_conclusion_2}, and letting $c=\min\left\{
-\max_{j}\lambda_j,\;
-\frac{1}{2}\,\eta_{\max}\!\left(U^\top \nabla^2 H(\theta^*) U\right)
\right\}>0$, we have the bound
\begin{equation}   
\label{eq:contradiction}
        H(T(y^{(i)})) \le - \frac{c}{2} \left(||P_m ({y}^{(i)}-\tilde{\theta}^*)||_1 + ||(I_{K-1}-P_m)({y}^{(i)}-\tilde{\theta}^*)||_2^2  \right) 
\end{equation} 
which contradicts \eqref{eq:tocontradict} if $i$ is large enough such that $\frac{1}{i}<\frac{c}{2}$.
\end{proof}

\subsection{Derivative of $\KL\!\left(T(\ttheta)\,\middle|\,T(h_\beta(u)+\ttheta)\right)$ }
\label{sec:KL_derivative}

\begin{lem}

\begin{equation}
\label{eq:derivative_of_KL_with_Beta_R_beta}
    \frac{d}{d\beta}\,\KL\!\left(T(\ttheta) | T(h_\beta(u)+\ttheta)\right) = \beta^{-2} B_{\beta}(u)+R_\beta(u)
\end{equation}
where $B_\beta(u)$ is $O(\beta^{-1})$ and $R_\beta (u)$ is $O(\beta^{-3})$. 
\end{lem}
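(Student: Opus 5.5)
Since $\theta^*=T(\ttheta)$ has $\theta_k^*=0$ for $k\le m$, only the coordinates $k\ge m+1$ enter the KL divergence. With $x=h_\beta(u)+\ttheta$ we have $T(x)_k=\beta^{-1}u_k+\ttheta_k$ for $m+1\le k\le K-1$, and $T(x)_K=D_\beta(u)$. Therefore
\[
\KL\!\left(T(\ttheta)\,\middle|\,T(h_\beta(u)+\ttheta)\right)=\sum_{k=m+1}^{K-1}\ttheta_k\log\frac{\ttheta_k}{\beta^{-1}u_k+\ttheta_k}+\theta_K^*\log\frac{\theta_K^*}{D_\beta(u)},
\]
where $\theta_K^*=1-\mathbf{1}^\top\ttheta>0$. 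I will differentiate this expression directly in $\beta$ for fixed $u$, with $\beta$ large enough that all arguments stay near $\theta^*$, so the logarithms are smooth.

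\textbf{Differentiating.} We have $\frac{d}{d\beta}(\beta^{-1}u_k+\ttheta_k)=-\beta^{-2}u_k$ and
\[
\frac{d}{d\beta}D_\beta(u)=2\beta^{-3}\sum_{i\le m}u_i+\beta^{-2}\sum_{r\ge m+1}u_r .
\]
Hence the derivative equals
\[
\beta^{-2}\sum_{k=m+1}^{K-1}\frac{\ttheta_k u_k}{\beta^{-1}u_k+\ttheta_k}-\frac{\theta_K^*}{D_\beta(u)}\Bigl(2\beta^{-3}\sum_{i\le m}u_i+\beta^{-2}\sum_{r\ge m+1}u_r\Bigr).
\]
The $\beta^{-2}$-order pieces are collected into $\beta^{-2}B_\beta(u)$, with $B_\beta$ as defined in the paper. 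The remaining term,
\[
R_\beta(u)=-2\beta^{-3}\,\theta_K^*\sum_{i\le m}u_i\,/\,D_\beta(u),
\]
is $O(\beta^{-3})$ because $D_\beta(u)\to\theta_K^*>0$.

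\textbf{Showing $B_\beta(u)=O(\beta^{-1})$.} This is the crux: it relies on a first-order cancellation, since $\theta^*$ is a critical point of the KL divergence along the tangent directions. Write
\[
\frac{\ttheta_k}{\beta^{-1}u_k+\ttheta_k}=1-\frac{\beta^{-1}u_k}{\beta^{-1}u_k+\ttheta_k}=1+O(\beta^{-1}),
\qquad
\frac{\theta_K^*}{D_\beta(u)}=1+O(\beta^{-1}).
\]
The second expansion holds because $D_\beta(u)-\theta_K^*=-\beta^{-2}\sum_{i\le m}u_i-\beta^{-1}\sum_{r\ge m+1}u_r=O(\beta^{-1})$. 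Substituting, the leading terms $\sum_{k\ge m+1}u_k-\sum_{k\ge m+1}u_k$ cancel exactly, leaving
\[
B_\beta(u)=-\beta^{-1}\sum_{k\ge m+1}\frac{u_k^2}{\beta^{-1}u_k+\ttheta_k}+\frac{\sum_{r\ge m+1}u_r\,\bigl(D_\beta(u)-\theta_K^*\bigr)}{D_\beta(u)}=O(\beta^{-1}).
\]
The constants depend on $u$ but are uniform in $\beta$ for $\beta$ large, which is all that the pointwise L'Hospital argument requires.

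The expected obstacle is purely bookkeeping: verifying that the leading terms cancel and tracking which pieces of the $D_\beta$ derivative belong to $B_\beta$ versus $R_\beta$. No earlier result is needed beyond $\theta_K^*>0$ and $\ttheta_k>0$ for $k\ge m+1$. If instead $\theta^*_K=0$, relabel the coordinates so that a positive coordinate is the eliminated one, which the ordering convention \eqref{eq:ordering_assumption} already ensures.
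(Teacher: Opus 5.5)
Your proposal is correct and follows essentially the same route as the paper. Both arguments differentiate the explicit KL expression in the $T$-coordinates, add and subtract $\sum_{k\ge m+1}u_k$ to expose the first-order cancellation in $B_\beta(u)$, and use $D_\beta(u)\to\theta_K^*>0$ to bound both $B_\beta(u)$ and the remainder. One small point in your favour: your $R_\beta(u)$ carries the minus sign, which matches the computed derivative, whereas the paper's stated definition of $R_\beta$ drops it (this does not affect the $O(\beta^{-3})$ claim).
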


\begin{proof}
Suppose $m<K-1.$ 
The KL-divergence composed with $T$ takes the form 
\begin{align*}
\text{KL}(T(\ttheta)|T(y)) & =\sum_{k=m+1}^{K-1}\theta_{k}^{*}\log\frac{\tilde{\theta}_{k}^{*}}{y_{k}}+(1-\mathbf{1}^{\top}\tilde{\theta}^{*})\log\frac{(1-\mathbf{1}^{\top}\tilde{\theta}^{*})}{(1-\mathbf{1}^{\top}y)}. 
 \end{align*}
Letting $y=h_\beta(u)+\ttheta$, we have that 
\begin{align}
\frac{d}{d\beta}\,\KL\!\left(T(\ttheta) | T(h_\beta(u)+\ttheta)\right)
&=\beta^{-2} \left( \sum_{k=m+1}^{K-1}
\frac{\ttheta_k\,u_k}{\beta^{-1}u_k+\theta_k^*} 
-\sum_{k=m+1}^{K-1}
\frac{(1-\mathbf 1^\top\ttheta)\,u_k}{D_\beta(u)} \right) \nonumber \\
&-\; 2\beta^{-3}\sum_{j=1}^{m}
\frac{(1-\mathbf 1^\top\ttheta)\,u_j}{D_\beta(u)}
\label{eq:derivative_KL_split_1}
\end{align}
where \[
D_\beta(u)=
1-\sum_{i=1}^{m}\beta^{-2}u_i-\sum_{r=m+1}^{K-1}\bigl(\beta^{-1}u_r+\tilde{\theta}_r^*\bigr)=\theta_K^*-\beta^{-2}\sum_{i=1}^{m}u_i-\beta^{-1}\sum_{r=m+1}^{K-1}u_r.
\]
Define the terms in \eqref{eq:derivative_KL_split_1} as
\[
B_\beta(u)
=
\sum_{k=m+1}^{K-1}
\frac{\tilde{\theta}_k^*u_k}{\beta^{-1}u_k+\tilde{\theta}_k^*}
-
\sum_{k=m+1}^{K-1}
\frac{(1-\mathbf {1}^\top\tilde{\theta}^*)u_k}{D_\beta(u)},\qquad R_\beta (u) =2\beta^{-3}\sum_{j=1}^{m}
\frac{(1-\mathbf 1^\top\ttheta)\,u_j}{D_\beta(u)}
\]
which gives the expression \eqref{eq:derivative_of_KL_with_Beta_R_beta}. 
When $m=K-1$, the first term in \eqref{eq:derivative_KL_split_1} disappears and gives $B_{\beta}(u)=0$ in \eqref{eq:derivative_of_KL_with_Beta_R_beta}.

\paragraph{1. $B_\beta(u)=O(\beta^{-1})$.}

Fix $u$. Let
\[
S_1:=\sum_{r=m+1}^{K-1}u_r,\qquad S_2:=\sum_{i=1}^{m}u_i,\qquad a:=\theta_K^*>0.
\]
Then $D_\beta(u)=a-\beta^{-1}S_1-\beta^{-2}S_2$, so
\[
|D_\beta(u)-a|
\le \beta^{-1}|S_1|+\beta^{-2}|S_2|.
\]
Hence there exists $\beta_0=\beta_0(u,\theta^*)$ such that for all $\beta\ge\beta_0$,
\begin{equation}\label{eq:D_lower_bound}
D_\beta(u)\ge \frac{a}{2}.
\end{equation}
Next, we analyze $B_\beta(u)$ by adding and subtracting $\sum_{k=m+1}^{K-1}u_k$:
\begin{equation}
\label{eq:B_beta_partialsum}
    B_\beta(u)
=
\sum_{k=m+1}^{K-1}\left(\frac{\ttheta_ku_k}{\beta^{-1}u_k+\ttheta_k}-u_k\right)
-
\sum_{k=m+1}^{K-1}\left(\frac{a u_k}{D_\beta(u)}-u_k\right).
\end{equation}
For the first summand  in $B_\beta (u)$: for each $k\in\{m+1,\dots,K-1\}$, write
\[
\frac{\ttheta_k u_k}{\beta^{-1}u_k+\ttheta_k}-u_k
=
u_k\left(\frac{\ttheta_k}{\ttheta_k+\beta^{-1}u_k}-1\right)
=
-\frac{\beta^{-1}u_k^2}{\ttheta_k+\beta^{-1}u_k}.
\]
Since $\ttheta_k>0$ is fixed, there exists $\beta_{k,0}=\beta_{k,0}(u,\ttheta)$ such that
$\ttheta_k+\beta^{-1}u_k \ge \ttheta_k/2$ for all $\beta\ge\beta_{k,0}$. Therefore for all
$\beta\ge \max_k \beta_{k,0}$,
\begin{equation}\label{eq:term1_bound}
\left|\frac{\ttheta_ku_k}{\beta^{-1}u_k+\ttheta_k}-u_k\right|
\le
\frac{2}{\theta_k^*}\,|u_k|^2\,\beta^{-1}.
\end{equation}
Similarly, for the second sum
\[
\frac{a u_k}{D_\beta(u)}-u_k
=
u_k\left(\frac{a}{D_\beta(u)}-1\right)
=
u_k\,\frac{a-D_\beta(u)}{D_\beta(u)}
=
u_k\,\frac{\beta^{-1}S_1+\beta^{-2}S_2}{D_\beta(u)}.
\]
Using \eqref{eq:D_lower_bound}, for all $\beta\ge\beta_0$,
\begin{equation}\label{eq:term2_bound}
\left|\frac{a u_k}{D_\beta(u)}-u_k\right|
\le
|u_k|\cdot \frac{\beta^{-1}|S_1|+\beta^{-2}|S_2|}{a/2}
\le
\frac{2}{a}\,|u_k|\,\bigl(|S_1|+|S_2|\bigr)\,\beta^{-1}.
\end{equation}
Applying the triangle inequality together in \eqref{eq:B_beta_partialsum} with \eqref{eq:term1_bound}--\eqref{eq:term2_bound} gives, for all
$\beta$ large enough,
\[
|B_\beta(u)|
\le
\beta^{-1}\sum_{k=m+1}^{K-1}\frac{2}{\ttheta_k}|u_k|^2
+
\beta^{-1}\sum_{k=m+1}^{K-1}\frac{2}{a}|u_k|\bigl(|S_1|+|S_2|\bigr)
=
C(u,\ttheta)\,\beta^{-1},
\]
where $C(u,\ttheta)<\infty$ depends only on $u$ and $\ttheta$. Hence $B_\beta(u)=O(\beta^{-1})$.

\paragraph{2. $R_\beta (u)=O(\beta^{-3})$.}

Since \(u\) is fixed and \(D_\beta(u)\to\theta_K^*>0\) as \(\beta\to\infty\), there exists \(\beta_0>0\) such that for all \(\beta\ge\beta_0\),
\[
|D_\beta(u)|\ge \frac{\theta_K^*}{2}.
\]
Therefore, for \(\beta\ge\beta_0\),
\[
|R_\beta(u)|
\le
2\beta^{-3}\sum_{j=1}^{m}
\frac{|1-\mathbf 1^\top\ttheta|\,|u_j|}{|D_\beta(u)|}
\le
\frac{4|1-\mathbf 1^\top\ttheta|}{\ttheta_K}
\left(\sum_{j=1}^{m}|u_j|\right)\beta^{-3}=C(u,\ttheta)\,\beta^{-3}.
\]
where $C(u,\ttheta)<\infty$ depends only on $u$ and $\ttheta$.

\end{proof}

\subsection{Discussion on \cite{cover1984algorithm} algorithm}
\label{sec:cover_algorithm}

\cite{cover1984algorithm} studies the log-optimal (Kelly) portfolio problem. Suppose we observe a random non-negative return vector $X=(X_1,\dots,X_m)$ with a known distribution $F$. A portfolio $b$ is a probability vector on the simplex $\{ b\ge 0, \sum_i b_i=1\}$.
\begin{equation}
\label{eq:optimum_portfolio}
W(b)=\mathbb{E}[\log (b^\top X)], \qquad W^*=\max_b W(b)
\end{equation}
 The objective is to find an optimizer $b^*$. The paper proposes the following algorithm. First define the component-wise gradient
 \begin{equation}
     a_i(b)=\mathbb{E}\left[ \frac{X_i}{b^\top X} \right]
 \end{equation}
so that $a(b)=\nabla W(b)$. The (multiplicative) update rule is 
$$ b_i^{n+1} = b_i^{(n)} a_i (b^{(n)}),\quad \text{ for } i=1,\dots,m,$$
provided that the initial $b^{(0)}$ has only strictly positive components. The algorithm converges to the optimum in value
\begin{equation}
    W(b^n)\uparrow W^*.
\end{equation}
If the support of $X$ has full dimension, then $b^*$ is unique and $b^{(n)}\to b^*$. 

The problem of maximizing the LDA log-likelihood $H$ in \eqref{eq:entropy} can be viewed as a special case of the log-optimal portfolio problem
\eqref{eq:optimum_portfolio}. Let $X\in\mathbb{R}^K$ be a random vector taking values
$(\phi_1(v),\dots,\phi_K(v))$ with probabilities $p_v$, and identify
the portfolio vector $b\in\Delta_{K-1}$ with $\theta$.
Then
$$
\mathbb{E}[\log(\theta^\top X)]
=
\sum_{v=1}^{V} p_v 
\log\!\left(\sum_{k=1}^{K}\theta_k \phi_k(v)\right)
= H(\theta).
$$ 
Under this identification, the distribution $F$ of $X$ is induced by the empirical word distribution 
$\{p_v\}_{v=1}^V$.

Moreover, the support of $X$ has full dimension if and only if the topic
matrix $\phi\in\mathbb{R}^{K\times V}$ has full row rank. In this case
the maximizer $\theta^*$ is unique.

\subsection{Auxiliary Lemmas in the proof of Theorem \ref{thm:epsilon_main_result} }
\label{sec:aux_lemmas_lDA_orthogonal}

\subsubsection{Proof of Lemma \ref{lem:gtbound_lemma}}
\begin{proof}[Proof of Lemma \ref{lem:gtbound_lemma}] 

Let \(x=\frac{1+t}{2\sqrt t}\). Using the well-known inequality \(\log x\le x-1\) for \(x>0\),
$$g(t)=\log\frac{1+t}{2\sqrt t}
\le
\frac{1+t}{2\sqrt t}-1
=
\frac{(\sqrt t-1)^2}{2\sqrt t}.$$
Moreover,
$$f(t)=t+\frac1t-2
=
\frac{(\sqrt t-1)^2(\sqrt t+1)^2}{t}.$$
Hence
$$\frac{g(t)}{f(t)}
\le
\frac{\frac{(\sqrt t-1)^2}{2\sqrt t}}
{\frac{(\sqrt t-1)^2(\sqrt t+1)^2}{t}}
=
\frac{\sqrt t}{2(\sqrt t+1)^2}
\le \frac18,$$
since \(0\le(\sqrt t-1)^2\). Thus \(g(t)\le \tfrac18 f(t)\).
\end{proof}

\subsubsection{Proof of Lemma \ref{lem:Frobenius_bound}}

\begin{proof}[Proof of Lemma \ref{lem:Frobenius_bound}] 

Let $$ Z = U \operatorname{diag}(\lambda_1,\dots,\lambda_K) U^\top,$$
be a spectral decomposition of $Z$, where $U$ is orthogonal and $\lambda_1,\dots,\lambda_K>0$ are positive eigenvalues of $Z$. Then
$$Z-I=UDU^\top,\qquad D:=\operatorname{diag}(\lambda_1-1,\dots,\lambda_K-1)$$
By unitary invariance of the Frobenius norm (cf.~\cite{horn2012matrix} section 5.6)
\begin{equation}
    \|Z-I\|_F^2=\|UDU^\top\|_F^2 = \|D\|^2_F. 
\end{equation}
Since $D$ is diagonal,  
\begin{equation}
    ||D||^2_F = \sum_{i=1}^K (\lambda_i - 1)^2. 
\end{equation}
Therefore,
\begin{align*}
    \sum_{i=1}^K f(\lambda_i) &=\sum_{i=1}^K \frac{(\lambda_i -1)^2}{\lambda _i}\\
&\le \frac{1} {\min \lambda_i} \sum_{i=1}^K (\lambda_i - 1)^2 \\
&=\frac{\|Z-I\|_F^2}{\lambda_{\min}(Z)}.
\end{align*}
\end{proof}

\subsubsection{Proof of Lemma \ref{lem:Z-Ibound_calculation}}

\begin{proof}[Proof of Lemma \ref{lem:Z-Ibound_calculation}] 

The outline of the proof is as follows:
\begin{enumerate}[itemsep=0pt, topsep=0pt, parsep=0pt, partopsep=0pt]
\item We bound the off-diagonal terms of $Z-I$.
\item We bound the diagonal terms of $Z-I$.
\item We combine previous two steps to find an upper bound of $\| Z-I \|_F^2$.
\item Using Weyl's perturbation lemma, we find a lower bound of $\lambda_{\min}$ in terms of $\| Z-I \|_F^2$.
\item We find an upper bound on $||Z-I||_{F}\,\frac{1}{\sqrt{\lambda_{\min}(Z)}}$ in the form of $C\varepsilon$ where $C>0$.
\end{enumerate}
We first note the explicit expression for $Z$
\begin{equation}
Z_{ij}
=
\sqrt{\theta_i^* \theta_j^*}
\sum_{v}
p_v
\frac{\phi_i(v)\phi_j(v)}{s_v^{2}},\quad s_v=\theta^{*\top}\phi(v).
\end{equation}
For all $v\in V$, note that 
 \begin{equation}
\label{eq:s_v_bound}
s_v \ge \theta^*_{k(v)} \phi_{k(v)}(v).
\end{equation}
We will use this to bound the magnitude of the entries in $Z-I.$

\paragraph{1. Bounding the off-diagonal terms of $Z-I$.}

With assumption (B1), we can define a partition of the vocabulary set, defined as
$$ S_i :=\{v:k(v)=i\},\quad i =1,\dots,K.$$
Each set $S_i$ collects words that are most concentrated on a topic $i$. From the definition of sparsity in \eqref{eq:sparsity}, note that $$\phi_j(v) \le \varepsilon \phi_{k(v)}(v),\quad \text{if } j \neq k(v).$$
With this, we can bound the off-diagonal terms $Z_{ij}$ (where $i\neq j$)
\begin{align}
\label{eq:Z_ij_abs_bound}
|Z_{ij}|
&\le
\sqrt{\theta_i^{*}\theta_j^{*}}
\Biggl[
  \underbrace{\sum_{v\in S_i}
      p_v\,\frac{\phi_i(v)\,(\varepsilon\phi_i(v))}{s_v^2}}_{(1)}
  \;+\;
  \underbrace{\sum_{v\in S_j}
      p_v\,\frac{(\varepsilon\phi_j(v))\,\phi_j(v)}{s_v^2}}_{(2)}
  \;+\;
  \underbrace{\sum_{v\in (S_i \cup S_j)^c}
      p_v\,\frac{(\varepsilon\phi_{k(v)}(v))^2}{s_v^2}}_{(3)}
\Biggr].
\end{align}
Using \eqref{eq:s_v_bound} and the definition of $C_{\max}^{(1)}$ and $C_{\max}^{(2)}$ in \eqref{eq:constants_max}, we can bound each term:
\[
\begin{aligned}
(1) \quad v\in S_i:&\qquad
\sqrt{\theta_i^{*}\theta_j^{*}}\,
\frac{\varepsilon\,\phi_i(v)^2}{s_v^2}
\;\le\; \frac{\varepsilon \sqrt{\theta_j^{*}/\theta_i^{*}}}{\theta_i^{*}} \le \varepsilon  C_{\max}^{(1)}\\
(2) \quad v\in S_j:&\qquad
\sqrt{\theta_i^{*}\theta_j^{*}}\,
\frac{\varepsilon\,\phi_j(v)^2}{s_v^2}
\;\le\;\frac{\varepsilon \sqrt{\theta_i^{*}/\theta_j^{*}}}{\theta_j^{*}} \le \varepsilon C_{\max}^{(1)} \\
(3) \quad v\in (S_i \cup S_j)^c :&\qquad
\sqrt{\theta_i^{*}\theta_j^{*}}\,
\frac{\varepsilon^{2}\,\phi_{k(v)}(v)^2}{s_v^2}
\;\le\;
\varepsilon^2
\frac{\sqrt{\theta_i^{*}\theta_j^{*}}}{\theta_{k(v)}^{*\,2}} \le \varepsilon^2  C_{\max}^{(2)}.
\end{aligned}
\]
Plugging these bounds in \eqref{eq:Z_ij_abs_bound}, and using that 
$\sum_{v} p_v =1$   
\begin{align}
\label{eq:Z_ij_bound}
|Z_{ij}|
&\le\;2C_{\max}^{(1)}\varepsilon 
+C_{\max}^{(2)}\varepsilon^2, \quad i \neq j.
\end{align}

\paragraph{2. Bounding the diagonal terms of $Z-I$.}

We recall from \eqref{eq:LDA_gradient}, $\nabla H(\theta^*)=\mathbf{1}_K$. This implies that
\begin{equation}
        \nabla H(\theta^*)_i =\sum_v p_v \frac{\phi_i (v)}{s_v}=1.
\end{equation}
Therefore,
\begin{align}
     \label{eq:Z_ii-1} 
Z_{ii}-1	&\nonumber =\sum_{v}\theta_{i}^{*}p_{v}\frac{\phi_{i}(v)^{2}}{s_{v}^{2}}-\sum_{v}p_{v}\frac{\phi_{i}(v)}{s_{v}}\\
	 &=\sum_{v}p_{v}\frac{\phi_{i}(v)}{s_{v}}\left(\theta_{i}^{*}\frac{\phi_{i}(v)}{s_{v}}-1\right).
\end{align}
For $v\in S_i$, we have
$$
s_v
= \theta_i^* \phi_i(v) + \sum_{k\ne i}\theta_k^* \phi_k(v)
\le (1+\varepsilon \rho_i)\theta_i^* \phi_i(v),
\qquad
\rho_i := \max_{k}\frac{\theta_k^*}{\theta_i^*}.
$$
Together with \eqref{eq:s_v_bound}, for \(v\in S_i\), we have
$$
\theta_i^*\phi_i(v)
\le s_v
\le (1+\varepsilon\rho_i)\theta_i^*\phi_i(v).
$$
Dividing by \(\theta_i^*\phi_i(v)\) and taking reciprocals gives
$$
\frac{1}{1+\varepsilon\rho_i}
\le
\theta_i^*\frac{\phi_i(v)}{s_v}
\le 1.
$$
Hence
\begin{equation}
\label{eq:Z_ii_bound_1}
0 \ge \theta_i^*\frac{\phi_i(v)}{s_v}-1
\ge -\frac{\varepsilon\rho_i}{1+\varepsilon\rho_i}
\ge -\varepsilon\rho_i.    
\end{equation}
Moreover, using again \eqref{eq:s_v_bound}, we obtain
\begin{equation}
\label{eq:Z_ii_bound_2}
    \frac{\phi_i(v)}{s_v}\le \frac{1}{\theta_i^*}.
\end{equation}
Combining \eqref{eq:Z_ii_bound_1} and \eqref{eq:Z_ii_bound_2}, for $ v\in S_i$ we have
\begin{equation}
\label{eq:Z_ii_bound_3}
    \left|
\frac{\phi_i(v)}{s_v}
\left(\theta_i^* \frac{\phi_i(v)}{s_v}-1\right)
\right|
\le
\frac{\varepsilon \rho_i}{\theta_i^*}.
\end{equation}
For $v\notin S_i$, since \(\phi_i(v)\le \varepsilon \phi_{k(v)}(v)\) and \(k(v)\neq i\),
$$\frac{\phi_i(v)}{s_v}
\le \frac{\varepsilon \phi_{k(v)}(v)}{\theta_{k(v)}^*\phi_{k(v)}(v)}\le
\frac{\varepsilon}{\theta_{\min}^*},$$
while
$$
\left|\theta_i^*\frac{\phi_i(v)}{s_v}-1\right|\le 1.
$$
Therefore for $v\notin S_i,$
\begin{equation}
\label{eq:Z_ii_bound_4}
    \left|
\frac{\phi_i(v)}{s_v}
\left(\theta_i^* \frac{\phi_i(v)}{s_v}-1\right)
\right| \le \frac{\varepsilon}{\theta_{\min}^*}.
\end{equation}
Combining \eqref{eq:Z_ii_bound_3} and \eqref{eq:Z_ii_bound_4}  gives
\begin{align}
\label{eq:Z_ii_bound}
|Z_{ii}-1|
&\le
\sum_{v\in S_i} p_v \frac{\varepsilon \rho_i}{\theta_i^*}
+\sum_{v\notin S_i} p_v \frac{\varepsilon}{\theta_{\min}^*} \nonumber\\
&\le\varepsilon\left(\frac{\rho_i}{\theta_i^*}
+\frac{1}{\theta_{\min}^*}\right)
\le 2\,\frac{\theta_{\max}^*}{(\theta_{\min}^*)^2}\,\varepsilon
=2 C_{\max}^{(2)}\varepsilon.
\end{align}
\end{proof}

\paragraph{3. Upper bound of $\| Z-I \|_F^2$.}
Combining the bounds \eqref{eq:Z_ij_bound} and \eqref{eq:Z_ii_bound}, we have that
\begin{align}
\label{eq:Z_I_bound}
||Z-I||_F^2 &= \sum_i (Z_{ii}-1)^2+\sum_{i\neq j}Z_{ij}^2 \nonumber \\
&\le 4\varepsilon^2 (C_{\max}^{(2)})^2 K +K(K-1)(2C_{\max}^{(1)}\varepsilon 
+C_{\max}^{(2)}\varepsilon^2)^2 \nonumber \\
&= \varepsilon^2 {F(\varepsilon)} 
\end{align}
where 
\begin{equation}
\label{eq:F_epsilon}
F(\varepsilon)=4 (C_{\max}^{(2)})^2 K+K(K-1)\left(2C_{\max}^{(1)}+C_{\max}^{(2)}\varepsilon\right)^{2}>0.  
\end{equation} 
\paragraph{4. Lower bound of $\lambda_{\min}$.}

From Weyl's perturbation theorem (cf. \cite{horn2012matrix} Corollary 4.3.15), we have that  
$$\lambda_{\min} (Z) \ge \lambda_{\min}(I)+\lambda_{\min} (Z-I).$$
Since $Z-I$ is symmetric, 
$$\lambda_{\min}(Z-I) \ge - \| Z-I \|_2.$$
Using that the spectral norm is bounded by the Frobenius norm ($\|\cdot \|_2 \le \| \cdot \|_F $),
we obtain
\begin{align}
\label{Z:eigenvalue_bound}
\lambda_{\min}(Z)	
	&\ge\lambda_{\min}(I)-||Z-I||_{F}  \nonumber \\
	&\ge1- \varepsilon\sqrt{F(\varepsilon)} .
\end{align}

\paragraph{5. Upper bound on $||Z-I||_{F}\,\frac{1}{\sqrt{\lambda_{\min}(Z)}}$.}

Putting \eqref{eq:Z_I_bound} and \eqref{Z:eigenvalue_bound} together, we have that
\begin{align}
||Z-I||_{F}\,\frac{1}{\sqrt{\lambda_{\min}(Z)}}
&\le \frac{ \varepsilon\sqrt{F(\varepsilon)}}{\sqrt{1 -  \varepsilon\sqrt{F(\varepsilon)} }}. \label{eq:finalbound}
\end{align}
Now we note that $H(\varepsilon):=\frac{\sqrt{F(\varepsilon)}}{\sqrt{1-\sqrt{F(\varepsilon)}\varepsilon}}$ is monotonically increasing on $[0,\bar{\varepsilon})$ where $\bar{\varepsilon}$ is the unique root of $\varepsilon \sqrt{F(\varepsilon)}=1$. To see that $H'(\varepsilon)>0$, write $s(\varepsilon):=\sqrt{F(\varepsilon)}$. Then
$$H(\varepsilon)=s(\varepsilon)\bigl(1-\varepsilon s(\varepsilon)\bigr)^{-1/2}.$$
Differentiating gives
$$H'(\varepsilon)
=\frac{
s'(\varepsilon)\Bigl(1-\tfrac12\varepsilon s(\varepsilon)\Bigr)
+\tfrac12 s(\varepsilon)^2
}{
\bigl(1-\varepsilon s(\varepsilon)\bigr)^{3/2}
}.$$
Since $F(\varepsilon)>0$ and $F'(\varepsilon)\ge0$, we have $s(\varepsilon)>0$ and$$
s'(\varepsilon)=\frac{F'(\varepsilon)}{2\sqrt{F(\varepsilon)}}\ge0.$$
Moreover, $\varepsilon<\varepsilon_0$ implies $1-\varepsilon s(\varepsilon)>0$, and hence
$1-\tfrac12\varepsilon s(\varepsilon)>0$. Therefore both the numerator and
denominator are positive, and thus
$$H'(\varepsilon)>0.$$
Therefore, $H(\varepsilon)<H(\varepsilon_0)$ for all $\varepsilon<\varepsilon_0$. 
Setting $C = H(\varepsilon_0)$ and applying~\eqref{eq:finalbound}, we obtain
\begin{align}
||Z-I||_{F}\,\frac{1}{\sqrt{\lambda_{\min}(Z)}}
&\le C\varepsilon.
\end{align}

\subsection{Simulation}

\subsubsection{Boundary Instance Generation}
\label{sec:simulation_boundary_cases}

Let $(K,m)$ be fixed.
For the synthetic experiments discussed in Section~\ref{sec:synthetic_experiments}, each problem instance $(\phi, \theta^*, p)$ is constructed so that
  $\theta^\ast$ lies on the boundary of the simplex with exactly $m$
  zero coordinates and is simultaneously the unique maximizer of $H$. We also require the strict complementarity condition (A3).

We generate problem instances in which the KKT multipliers $\lambda_i$ for active components are strictly bounded away from zero. The idea is that for every boundary point $\theta^*$ sampled, we look for a $p=(p_v)$ that $\theta^*$ satisfies KKT with $\nabla H(\theta^*)=1-\lambda_i$ where $\lambda_i \ge \lambda_{\min}$ for some $\lambda_{\min}>0$. The KKT condition is an equivalent condition for unique optimality of $\theta^*$ since $H$ is strictly concave.

We rewrite the gradient of $H$ as
\begin{equation}
\label{eq:boundary_algo_grad}
\nabla H(\theta^*)
=\sum_{v=1}^V w_v\, \phi(v),
\qquad w_v := \frac{p_v}{\theta^{*\top}\phi(v)}.
\end{equation}
For every candidate $b$ of the gradient $\nabla H(\theta^*)$ such that 
$$b\in
\mathcal{K}(\theta^*):=
\left\{
b \in
\mathbb{R}^K :
\begin{cases}
b_k = 1, & k \in \mathrm{supp}(\theta^*)\\
b_k < 1, & k \notin \mathrm{supp}(\theta^*)
\end{cases}
\right\},$$
we look for $w=(w_v)\ge 0$ such that $b=\phi w$. Once we recover such a $w$, we invert the relationship in \eqref{eq:boundary_algo_grad} to recover $p=(p_v)$.
The algorithm works as follows.

\textbf{Step 1. Generate topic--word distributions.}
 Draw independent samples of $\phi_k \sim \text{Dir}_{\beta}$ $K$ times where $\beta = 0.1 \cdot \mathbf{1}_V$ to generate $\phi$. 

  \textbf{Step 2. Generate topic proportions.}
Fix the active index set $\mathcal{A} = \{1,\ldots,m\}$ and its complement $\mathcal{I} = \{m+1,\ldots,K\}$. Set
  $\theta^\ast_k = 0$ for $k \in \mathcal{A}$.
  The inactive coordinates are drawn from
  $\theta^\ast_{\mathcal{I}} \sim \mathrm{Dir}(\mathbf{1}_{K-m})$.

    \textbf{Step 3. Generate $b$ with strict complementarity.}
  For each active topic $k \in \mathcal{A}$, draw a target dual variable
  $\lambda_k \sim \mathrm{Uniform}(\lambda_{\min}, \lambda_{\max})$
  with $\lambda_{\min} >0 $ and $\lambda_{\max} \le 1 $. $\lambda_{\max} \le 1$ is required since the gradient of $H$ is always non-negative.
  Define the target gradient vector
  $b_k = 1$ for $k \in \mathcal{I}$ and $b_k = 1 - \lambda_k$ for $k \in \mathcal{A}$.

\textbf{Step 4. Recover $w$ and $p$.}
  With $b$ constructed in Step 3, we seek $w = (w_v) \geq 0$ satisfying $\phi w = b$ and the normalization constraint $s^\top w = 1$,      where $s_v = \theta^{\ast\top}\phi(v)$.       Stacking these into the augmented system       $A = [\phi;\, s^\top] \in \mathbb{R}^{(K+1)\times V}$ and               $b^+ = [b;\, 1]$,  we solve $Aw = b^+$ via least-squares projection onto the feasible set, initialized at $w_0 = \mathbf{1}_V/V$. 
  (Since $V$ is much larger than $K$, the system $Aw = b^+$ is heavily under-determined,
  so many solutions exist. The minimum-norm solution relative to $w_0$ is unique since it corresponds to an orthogonal projection onto an affine subspace.) If the minimum-norm solution does not satisfy $w\ge0$, steps $1-4$ are repeated until a non-negative $w$ is obtained.
  The word-probability vector is then recovered by using $p_v = s_v w_v / \sum_{v'} s_{v'} w_{v'}$. By construction, $(\phi, p, \theta^\ast)$ satisfies the KKT conditions for $\theta^\ast$ to be the maximizer of $H$ with strict complementarity margin $\lambda_k \in [\lambda_{\min}, \lambda_{\max}]$.

\begin{remark}
    Variance reduction is still observed in settings where strict complementarity fails; however, such cases fall outside the scope of the analysis presented in the paper.
\end{remark}

\subsubsection{Estimation of Plain Monte Carlo Performance}
\label{sec:degeneracy_plainMC}

A natural baseline for estimating the variance ratio $1-\rho^2$ and the
moments $\mathbb{E}[e^{nH(\theta)}]$ is plain Monte Carlo under
the prior $\mathrm{Dir}_{\alpha}$. However, 
the quantities of interest are very difficult to estimate precisely for large $n$ with plain MC.
The reason is that for large document lengths $n$,
the integrand $e^{nH(\theta)}$ concentrates sharply near the maximizer $\theta^*$,
while the prior $\mathrm{Dir}_{\alpha}$ fails to sample enough around $\theta^*$.  

As a concrete example, consider the negative KL divergence 
$$H(\theta) = -\KL(\theta^*|\theta) = \sum_k \theta^*_k \log \theta_k - \sum_k \theta^*_k \log \theta_k^*$$
where $\mathbb{E}[e^{nH(\theta)}]$ has a closed-form expression:
\begin{equation}
\label{eq:control_variate_expectation}
    \mathbb{E}_{\text{Dir}_\alpha}[e^{nH(\theta)}]=\frac{B(\alpha+n\theta^*)}{B(\alpha)}\cdot e^{-n\theta^*\cdot \log\theta^*}.
\end{equation}
Figure~\ref{fig:mc_diagnostic} shows that across $100$ independent runs, the plain MC estimate deviates from the exact value beyond $n = 1000$,
  while the IS estimate remains accurate across all $n$, even though MC uses $10$ times more samples than IS. By $n = 15{,}000$ the plain MC estimate underestimates the
  closed-form value by a factor of $10^{7}$ on average, and this
  discrepancy grows with $n$, rendering
  the plain MC estimates unreliable.
  
  For experiments in Figure~\ref{fig:imp_sampling_epsilon_0p1_K_5}, at $n = 10{,}000$ with $\gamma = 0.9$, the plain MC estimate of
  $\mathbb{E}[e^{nH(\theta)}]$ (using $10^7$ samples) can be up to $10^{17}$ times
  smaller than the IS estimate (using $10^6$ samples) in the boundary case and $10^{12}$ in the interior case, indicating that the MC
  estimator severely underestimates the integral. We therefore apply importance sampling with $\gamma=0.9$ and $\epsilon=0.1$ to estimate the variance and related quantities under plain MC for synthetic experiments in Section~\ref{sec:synthetic_experiments} to verify the asymptotic rates.

\begin{figure}[t]
\centering
\includegraphics[width=0.8\linewidth]
{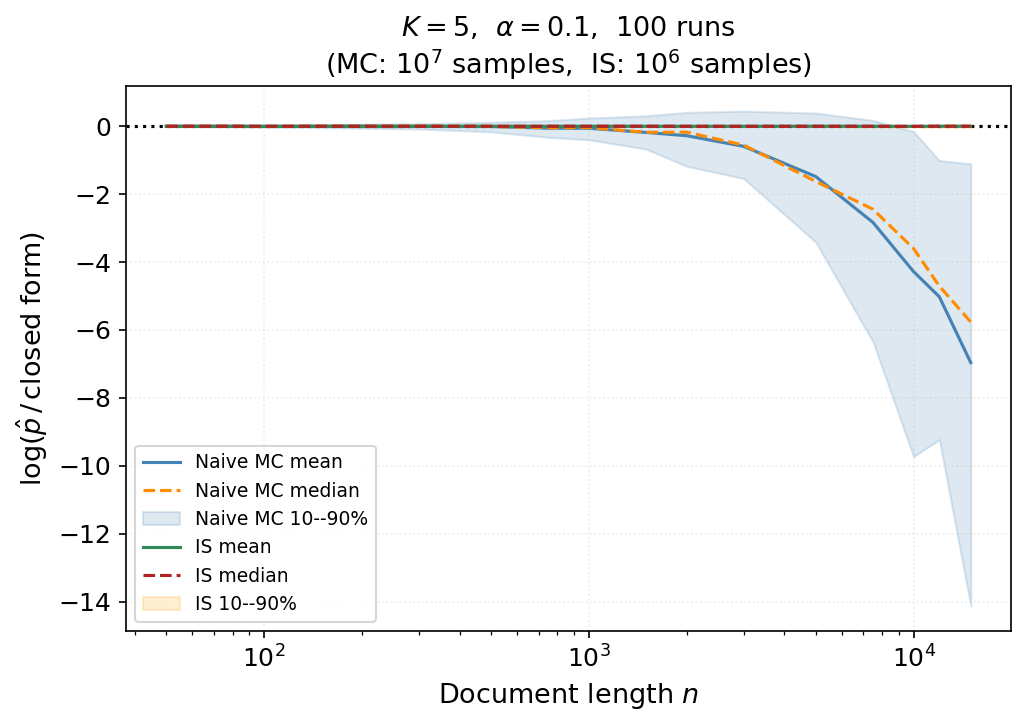}
\caption{\label{fig:mc_diagnostic}
Ratio of the estimated moment to the closed-form
expression~\eqref{eq:control_variate_expectation} on a log scale,
for $K = 5$, $\alpha = 0.1$, over 100 independent runs.
Naive MC ($10^7$ samples) increasingly underestimates the integral
beyond $n = 1000$, while IS ($10^6$ samples) remains accurate
across all $n$. Shaded regions show 10--90\% quantiles.}
\end{figure}

\end{document}